\documentclass[11pt]{article}
\usepackage{fullpage}


\usepackage{subfig}
\usepackage[usenames,dvipsnames]{xcolor}
\usepackage[colorlinks,citecolor=blue,linkcolor=BrickRed]{hyperref}
	\usepackage{latexsym,graphicx,epsfig,color}
\usepackage{amsfonts,amssymb,amsmath,amsthm,amstext}
\usepackage{libertine}
\usepackage[libertine]{newtxmath}
\usepackage{enumitem}
\setitemize{itemsep=2pt,topsep=0pt,parsep=0pt}
\usepackage{url,setspace}
\usepackage{multirow}
\usepackage{rotating}
\usepackage{makeidx}
\usepackage{accents}
\usepackage{xspace}
\usepackage{algorithm,algpseudocode}
\usepackage{bm}
\usepackage{changepage} 	
\usepackage{thmtools,thm-restate}
\usepackage{nicefrac}
\usepackage{cleveref}
\usepackage{import}

\newcommand{\IGNORE}[1]{}
\allowdisplaybreaks



\makeindex

\makeatletter
 \setlength{\parindent}{0pt}
 \addtolength{\partopsep}{-2mm}
 \setlength{\parskip}{5pt plus 1pt}
 \addtolength{\textheight}{35pt}
\makeatother

\newtheorem{theorem}{Theorem}[section]
\newtheorem{claim}[theorem]{Claim}

\newtheorem{lemma}[theorem]{Lemma}

\newtheorem{fact}[theorem]{Fact}
\theoremstyle{definition}

\newtheorem{defn}[theorem]{Definition}

\usepackage{thmtools,thm-restate} 





\newcommand{\E}{\mathbb{E}}




\newcommand{\calD}{\mathcal{D}}

\newcommand{\calM}{\mathcal{M}}

\def\e {\varepsilon}
\def\eps {\epsilon}

\newcommand{\poly}{\operatorname{poly}}





\newcommand{\loglog}{\log\log}



\usepackage{wrapfig}
\usepackage[compact]{titlesec}

\newcounter{note}[section]

\newcommand{\initOneLiners}{%
    \setlength{\itemsep}{0pt}
    \setlength{\parsep }{0pt}
    \setlength{\topsep }{0pt}
}
\newenvironment{OneLiners}[1][\ensuremath{\bullet}]
    {\begin{list}
        {#1}
        {\initOneLiners}}
    {\end{list}}


\title{Robust Algorithms for the Secretary Problem}

	\author{Domagoj Bradac\thanks{
	(domagoj.bradac@gmail.com)
        Department of Mathematics, Faculty of Science, University of
        Zagreb. Part of this work was done when visiting the Computer
        Science Department at Carnegie Mellon University.
	}
	\and Anupam Gupta\thanks{
        (anupamg@cmu.edu)
        Computer Science Department,
        Carnegie Mellon University. Supported in part by NSF award CCF-1907820.
        }
	\and Sahil Singla\thanks{
        (singla@cs.princeton.edu)
        Computer Science Department at Princeton University and School of Mathematics at Institute for Advanced Study. Supported in part by the Schmidt Foundation.
        }
	\and Goran Zuzic\thanks{
        (gzuzic@cs.cmu.edu)
        Computer Science Department,
        Carnegie Mellon University. Supported in part by NSF grants CCF-1527110, CCF-1618280, CCF-1814603, CCF-1910588, NSF CAREER award CCF-1750808, Sloan Research Fellowship and the DFINITY 2018 award.
        }
}

\date{ \today}

\begin{document}
\maketitle

\setlength{\abovedisplayskip}{2pt}
\setlength{\belowdisplayskip}{2pt}

\newcommand{\cent}{\textsf{center}}
\newcommand{\I}{\mathbb{I}}
\newcommand{\OPT}{\ensuremath{\mathrm{OPT}}\xspace}
\newcommand{\ALG}{\ensuremath{\mathrm{ALG}}\xspace}
\newcommand{\checkp}{T}
\newcommand{\at}[1]{^{(#1)}} 
\newcommand{\ocinter}[1]{\ensuremath{\langle#1]}} 
\newcommand{\cointer}[1]{\ensuremath{[#1\rangle}} 
\newcommand{\oointer}[1]{\ensuremath{\langle#1\rangle}} 
\newcommand{\defeq}{:=}

\newcommand{\calA}{\mathcal{A}}
\newcommand{\calB}{\mathcal{B}}
\newcommand{\calH}{\mathcal{H}}
\newcommand{\calT}{\mathcal{T}}
\newcommand{\calE}{\mathcal{E}}
\newcommand{\val}{v}     
\newcommand{\Int}{\ensuremath{[0,1]}}
\newcommand{\bott}{{\small\mathsf{bot}}}

\newcommand{\light}{{\small\mathsf{light}}}


\begin{abstract}
\medskip
  In classical secretary problems, a sequence of $n$ elements arrive 
  in a uniformly random order, and we want to choose a single item, or a
  set of size $K$. 
  The random order model allows us to escape from the strong lower
  bounds for the adversarial order setting, and excellent algorithms are
  known in this setting. However, one worrying aspect of these
  results is that the algorithms overfit to the model: they are not very
  robust. Indeed, if a few ``outlier'' arrivals are adversarially placed
  in the arrival sequence, the algorithms perform poorly. E.g.,
  Dynkin's popular $1/e$-secretary algorithm is sensitive to even a
  single adversarial arrival: if the adversary gives one large bid at
  the beginning of the stream, the algorithm does not select any element
  at all.

  We investigate a robust version of the secretary problem.
  In the \emph{Byzantine Secretary} model, we have two kinds of
  elements: green (good) and red (rogue). The values of all elements are
  chosen by the adversary. The green elements arrive at times uniformly
  randomly drawn from $[0,1]$. The red elements, however, arrive at
  adversarially chosen times. Naturally, the algorithm does not see
  these colors: how well can it solve secretary problems?

  We show that selecting the highest value red set, or the single
  largest green element is not possible with even a small fraction of
  red items. However, on the positive side, we show that these are the
  only bad cases, by giving algorithms which get value comparable to the
  value of the optimal green set \emph{minus the largest green
    item}. (This benchmark reminds us of regret minimization and digital
  auctions, where we subtract an additive term depending on the ``scale'' of the problem.)
  Specifically, we give an algorithm to pick $K$ elements that gets
  within $(1-\e)$ factor of the above benchmark, as long as
  $K \geq \poly(\e^{-1} \log n)$. We extend this to the knapsack secretary
  problem, for large knapsack size $K$.
  
  For the single-item case, an analogous benchmark is the
  value of the second-largest green item.
  For value-maximization, we give a
  $\poly \log^* n$-competitive algorithm, using a multi-layered
  bucketing scheme that adaptively refines our estimates of second-max
  over time.  For probability-maximization, we show
  the \emph{existence} of a good randomized algorithm, using the minimax
  principle.

  We hope that this work will spur further research on robust algorithms for
  the secretary problem, and for other problems in sequential
  decision-making, where the existing algorithms are not robust and often tend to overfit
  to the model.
\end{abstract}

\newpage





\section{Introduction}
\label{sec:intro}

In sequential decision-making, we have to serve a sequence of requests
\emph{online}, i.e., we must serve each request before seeing the next
one. E.g., in online auctions and advertising, given a sequence of
arriving buyers, we want to choose a high bidder. Equivalently, given
a sequence of $n$ numbers, we want to choose the highest of these. The
worst-case bounds for this problem are bleak: choosing a random buyer
is the best we can do. So we make (hopefully reasonable) stochastic
assumptions about the input stream, and give algorithms that work well
under those assumptions.

A popular assumption is that the values/bids are chosen by an
adversary, but presented to the algorithm in a uniformly random
order. This gives the \emph{secretary} or the \emph{random-order
  model},  
under which we can get much better results.
E.g., Dynkin's secretary algorithm that selects the first prefix-maximum
bidder after discarding the first $1/e$-fraction of the
bids, selects the highest bid with probability
$1/e$~\cite{Dynkin-Journal63}. The underlying idea---of fixing one or
more 
thresholds after seeing some prefix of the elements---can be 
generalized to solve classes of packing linear programs
near-optimally~\cite{DevanurHayes09,Devanur11,KRTV14,GM-MOR16}, and to
get $O(\log\log n)$-competitive algorithms for
matroids~\cite{Lachish-FOCS14,FSZ-SODA15} in the random-order model.

However, the assumption that we see the elements in a uniformly random
order is quite strong, and most current algorithms are not robust to
small perturbations to the model. 
 E.g., Dynkin's
algorithm is sensitive to even a single adversarial corruption: if the
adversary gives one large bid at the beginning of the stream, the
algorithm does not select any buyer at all, even if the rest of the
stream is perfectly
random! 
Many other algorithms in the secretary model suffer from similar
deficiencies, which suggests that we may be over-fitting to the
assumptions of the model. 


We propose the \emph{Byzantine secretary model}, where the goal is to
design algorithms robust to outliers and adversarial changes. The use
of the term ``Byzantine'' parallels its use in distributed systems,
where some of the input is well-behaved while the rest is arbitrarily
corrupted by an adversary. Alternatively, our model can be called
\emph{semi-random} or \emph{robust}: these other terms are used in the
literature which inspires our work. Indeed, there is much interest
currently in designing stochastic algorithms that are robust to
adversarial noise (see~\cite{Dia-tut,Moitra18,DiakonikolasKK016,LaiRV16,CharikarSV17,Moitra18,DiakonikolasKK018,EKM-TEAC18,LykourisML18}
and references therein).  Our work seeks to extend robustness to
online problems. Our work is also related in spirit to investigations into how much randomness in the
stream is necessary and sufficient to get competitive
algorithms~\cite{ChungMV13, KKN-STOC15}.

\subsection{Our Model}
\label{sec:model-results}

In the secretary problem, $n$ elements arrive one-by-one.  Each item
has a value that is revealed upon its arrival, which happens at a time chosen
independently and \emph{uniformly at random} in $[0,1]$. (We choose the
continuous time model, instead of the \emph{uniformly random} arrival
order model, since the independence allows us to get clean proofs.) 
When we see an item, we must either \emph{select} it or discard it before we
see the next item. Our decisions are \emph{irrevocable}. We can select
at most $K$ elements, where $K=1$ for the classical version of the
problem. We typically want to maximize the expected total value of the
selected elements where the value of a set is simply the sum of values
of individual elements. (For the single-item case we may also want to
maximize the probability of selecting the highest-value item, which is
called the \emph{ordinal case}.)
Given its generality and wide applicability, this model and its
extensions are widely studied; see \S\ref{sec:related}.





The difference between the classical and Byzantine secretary models
is in how the sequence is generated. In both models, the adversary
chooses the values of all $n$ elements. In the classical model, these
are then permuted in a random order (say by choosing the arrival times
independently and uniformly at random (u.a.r.) from $\Int$). In the Byzantine model, the elements are divided
into two groups: the \emph{green (or good)} elements/items $G$, and the
\emph{red (or rogue/bad)} elements/items $R$. This partition and the
colors are not visible to the algorithm. Now elements in $G$ arrive at
independently chosen u.a.r.\ times between $\Int$, but those in $R$
arrive at times chosen by the adversary. Faced with this
sequence, the algorithm must select some subset of elements (say,
having size at most $K$, or more generally belonging to some down-closed
family).

The precise order of actions is important: 
\begin{OneLiners}
\item First, the adversary chooses values of elements in $R \cup
  G$, and the arrival times of elements in $R$.
\item Then each element $e\in G$ is independently assigned a uniformly
  random arrival time $t_e \sim U[0,1]$. 
\end{OneLiners}
Hence the adversary is powerful and strategic, and can ``stuff'' the
sequence with values in an order that fools our algorithms the
most. The green elements are non-strategic (hence are in random order) and
beyond the adversary's control. When an element is presented, the
algorithm does not see the color (green vs.\ red): it just sees the
value and the time of arrival. We assume that the algorithm knows $n
\defeq |R| + |G|$, but not $|R|$ or $|G|$; see Appendix~\ref{sec:relaxn}
on how to relax this assumption.  The green elements are denoted $G =
\{ g_{\max} =  g_1, g_2, \ldots, g_{|G|} \}$ in non-increasing order of values.

What results can we hope to get in this model? Here are two cautionary examples:
\begin{itemize}
\item Since the red elements behave completely arbitrarily, the adversary can give
  non-zero values to only the reds, and plant a bad example for the
  adversarial order using them. Hence, we cannot hope to get the value
  of the optimal red set in general, and should aim to
  get value only from the
  greens. 
\item Moreover, suppose essentially all the value among the greens is
  concentrated in a single item $g_{\max}$. Here's a bad example: the
  adversary gives a sequence of increasing reds, all having value much
  smaller than $g_{\max}$, but values which are very far from each other. When the algorithm does see the green item,
  it will not be able to distinguish it from the next red, and hence
  will fail. This is formalized in \Cref{obsLbd}.
  Hence, to succeed,
  the green value must be spread among more than one item.
\end{itemize}

Given these examples, here is the ``leave-one-out'' benchmark we propose:
\begin{gather}
  \boxed{V^* := \text{value of the best feasible green set from } G \setminus
  g_{\max}.} \label{eq:1}
\end{gather}
This benchmark is at least as strong as the following 
guarantee:
\begin{gather}
  \text{(value of best feasible green set from $G$) } - v({g_{\max}}). \label{eq:2}
\end{gather}
The advantage of~(\ref{eq:1}) over~(\ref{eq:2}) is that $V^*$ is
interesting even when we want to select a single item, since it asks for
value $v_{g_2}$ or higher.

We draw two parallels to other commonly used benchmarks. Firstly, the
perspective~(\ref{eq:2}) suggests the regret-type guarantees, where we
seek the best solution in hindsight, minus the ``scale of the problem
instance''. The value of $g_{\max}$ is the scale of the instance
here. Secondly, think of the benchmark~(\ref{eq:1}) as assuming the
existence of at least two high bids, then the second-largest element is
almost as good a benchmark as the top element. This is a commonly used
assumption, e.g., in digital goods auctions~\cite{CGL-STOC14}. 

Finally, if we really care about a benchmark that includes $g_{\max}$, 
 our main results for selecting multiple items (Theorem~\ref{thmUnif} and Theorem~\ref{thmKnap}) continue to hold, under the (mild?) assumption
 that the algorithm starts with a polynomial approximation to $v({g_{\max}})$.

\subsection{Our Results}
\label{sec:our-results}


We first consider the setting where we want to select at most $K$
elements to maximize the expected total
value. 
In order to get within $(1+\e)$ factor of the benchmark $V^*$
defined in~(\ref{eq:1}), we need to assume that we have a ``large
budget'', i.e., we are selecting a sufficiently large number of
elements. Indeed, having a larger budget $K$ allows us to make some
mistakes and yet get a good expected value.



\begin{restatable}[Uniform Matroids]{theorem}{thmUnif}
  \label{thmUnif}
  There is an algorithm for Byzantine secretary on uniform matroids of
  rank $K \geq \poly(\e^{-1} \log n)$ that is $(1+\e)$-competitive with the
  benchmark $V^*$.
\end{restatable}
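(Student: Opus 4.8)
The plan is to design an algorithm that adaptively estimates the benchmark-relevant scale of the instance, and then runs a thresholding rule calibrated to that estimate. The key conceptual point is that with a budget of $K = \poly(\e^{-1}\log n)$, we can afford to ``waste'' a small number of selections on learning, and still select the remaining roughly-$K$ elements almost perfectly. Concretely, I would first split the time horizon $[0,1]$ into a short learning phase (say $[0,\e]$) and the remaining exploitation phase. In the learning phase we observe a random $\Theta(\e)$-fraction of the green elements (and some adversarial reds), which by a Chernoff bound over the green arrival times gives us, with high probability, a good multiplicative sketch of the \emph{order statistics} of the green values --- in particular a value $\tau$ that is, up to constants, a good guess for the ``typical'' green value at the relevant scale. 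The danger here is $g_{\max}$: a single huge green item should not be allowed to skew $\tau$, which is exactly why the benchmark $V^*$ excludes it. So I would define $\tau$ robustly, e.g.\ as (a constant times) the $\lceil c\e K\rceil$-th largest value seen so far, so that it tracks the bulk of the green mass rather than the top item.

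The second step is the exploitation rule: during $(\e,1]$, select every arriving element whose value exceeds $\tau$ (with possibly a small multiplicative slack $1-\e$), stopping once $K$ elements have been chosen. The analysis then has two parts. (i) \textbf{We collect enough value}: the optimal green set $G\setminus g_{\max}$ contributing to $V^*$ consists of some number $k^\star \le K$ of green items whose values are $\ge$ the $k^\star$-th largest green value; by the choice of $\tau$ these items (minus an $O(\e)$-fraction lost to the learning phase and to the threshold slack) all clear the threshold when they arrive, and since they arrive in the exploitation phase with probability $1-\e$ each, a Chernoff/Markov argument shows we capture a $(1-O(\e))$-fraction of $V^*$ in expectation. (ii) \textbf{We don't run out of budget prematurely on junk}: we must bound the number of elements (greens below the relevant scale, plus reds) that exceed $\tau$. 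Here the subtle issue is reds --- the adversary can plant many reds just above $\tau$. This is handled by noting that $\tau$ is calibrated so that only $O(K)$ greens lie above it, and the reds above $\tau$ are a problem only insofar as they fill the budget; but each red above $\tau$ that we select has value $\ge (1-\e)\tau$, which is at least a $1/\poly$-fraction of a typical benchmark item, so selecting reds is not actively harmful, and we can afford to reserve a $(1-\e)$-fraction of the budget for ``large'' elements above a second, higher threshold to protect the high-value part of $V^*$. This two-threshold bucketing (a coarse threshold to catch the bulk, a refined threshold to protect the top contributors of $V^*$) is the mechanism that makes the argument go through.

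The main obstacle I anticipate is the interaction between the adversarial reds and the budget constraint: unlike the classical setting, a bad guess for $\tau$ cannot simply be ``too low'' harmlessly, because a too-low $\tau$ lets the adversary exhaust the budget $K$ with worthless-but-above-threshold reds before the good greens arrive, and a too-high $\tau$ makes us reject genuine benchmark items. Getting a \emph{single} estimate $\tau$ that is simultaneously safe against both failure modes --- for every adversary, and using only an $\e$-prefix of the randomness --- is where the $\poly(\e^{-1}\log n)$ lower bound on $K$ comes from: we need $\Omega(\e^{-1}\log n)$ learning samples for the Chernoff concentration of the green order statistics, and an extra $\poly(\e^{-1})$ budget slack to absorb the reds that slip above the coarse threshold. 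I would formalize this by a union bound over a polynomial-size net of candidate scales, showing that for the correct scale the number of elements above $\tau$ is $K(1\pm O(\e))$ with high probability, which then feeds directly into the competitive-ratio computation. The rounding/feasibility part (staying within the uniform matroid, i.e.\ selecting $\le K$) is immediate since we hard-stop at $K$; the real content is proving that the hard stop almost never bites before we have collected $(1-\e)V^*$.
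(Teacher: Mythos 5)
There is a genuine gap, and it sits exactly at the step your proposal treats as routine: calibrating a single (or even a pair of) static threshold(s) from the $[0,\e]$ prefix. The prefix you observe is not a sample of the greens --- it is greens mixed with adversarially placed reds that you cannot distinguish, and the adversary commits its red values and arrival times knowing your rule. Concretely, let the benchmark consist of $K$ greens of value $1$ (so $V^*\approx K$), and let the adversary put $\Theta(\e K)$ reds of value $2$ into $[0,\e]$ and nothing afterwards. Then your robust order statistic (the $\lceil c\e K\rceil$-th largest value seen) is $2$, every green is rejected in the exploitation phase, no huge element exists for a random-pick hedge to exploit, and the algorithm collects $O(\e)V^*$ instead of $(1-\e)V^*$ --- at a cost to the adversary of only $O(\e)V^*$ worth of reds. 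A symmetric attack hits the other failure mode you flag: if $V^*$ is concentrated on, say, $\sqrt K$ greens of value $\sqrt K$ each, the $\lceil c\e K\rceil$-th order statistic of the prefix is essentially $0$, and $K$ near-worthless reds released just after time $\e$ exhaust the budget before any valuable green arrives; your proposed second, higher threshold is calibrated from the same uninformative prefix, so it does not repair this. In short, ``estimate a statistic on a prefix, then threshold for the rest of the stream'' is precisely the strategy the Byzantine model defeats, and the union bound over a net of candidate scales does not help because for every fixed rule there is a red pattern that poisons the estimate at negligible cost.

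The paper's proof avoids this by never trying to pin down the threshold from observation alone. It uses the prefix (together with a random-element hedge) only to get a crude $\poly(n)$-approximation to the scale, then discretizes values into $O(\e^{-1}\log n)$ levels and runs an \emph{adaptive} scheme over $1/\delta$ intervals: a cascading budget starts at the highest level and trickles down to lower levels only when the current levels fail to consume budget at the expected rate, while small dedicated per-level budgets catch the ``light'' levels and the few large benchmark items. The point of the cascade is that the adversary can keep the algorithm's bar high only by actually supplying high-value elements (which the algorithm then gladly takes), and if it does not, concentration of the green arrivals in each interval forces the bar down to the heavy levels of $S^*$ within a bounded amount of wasted budget ($O(\e K)$ overall). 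That interval-by-interval ``good-or-learnable'' feedback loop is the missing ingredient in your proposal; without some re-estimation mechanism during the exploitation phase, the argument cannot be made to work against adaptive-looking red placements, even for large $K$.
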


For the standard version of the problem, i.e. without any red elements,
 \cite{Kleinberg-SODA05} gave an
algorithm that achieves  the same competitiveness when
$K \geq \Omega(1/\e^2)$. The algorithm from~\cite{Kleinberg-SODA05} uses
a single threshold, that it updates dynamically; we extend this idea to
having several thresholds/budgets that ``cascade down'' over time; we
sketch the main ideas in \S\ref{sec:our-techniques}.
In fact, we give a more general result---an algorithm for the \emph{knapsack} setting
where each element has a size in $[0,1]$, and the total size of elements
we can select is at most $K$. (The uniform-matroids case corresponds to
all sizes being one.) Whereas the main intuition remain unchanged, 
the presence of non-uniform sizes requires a little more care.
\begin{restatable}[Knapsack]{theorem}{thmKnap}
  \label{thmKnap}
  There is an algorithm for Byzantine secretary on knapsacks with size
  at least $K \geq \poly(\e^{-1} \log n)$ (and elements of at most unit size)
  that is $(1+\e)$-competitive with the benchmark $V^*$.
\end{restatable}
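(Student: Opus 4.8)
The plan is to follow the proof of \Cref{thmUnif}, viewing each unit of knapsack capacity as a ``slot'' of a rank-$K$ uniform matroid, the one genuinely new ingredient being that items must be bucketed not only by value but also by size. First I would preprocess the instance: round every value down to the nearest power of $1+\e$; discard every item of value below $\tfrac{\e}{n}V^*$, which costs at most an $\e$-fraction of $V^*$ since at most $n$ items fit in any feasible set; and observe that since $g_{\max}$ is excluded from the benchmark, every item relevant to $V^*$ has value at most $v_{g_2}\le V^*$, so only $O(\e^{-1}\log n)$ value classes matter. Then split items by size into the ``small'' ones (size $\le\e$) and $O(\e^{-1}\log(1/\e))$ size classes of ``large'' items whose sizes agree up to a factor $1+\e$. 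After this, items sharing a (value class, size class) \emph{box} are essentially interchangeable: a feasible solution is described, up to $(1+\e)$ factors, by how many items it takes from each box. The algorithm does not know $V^*$, but a geometric grid of guesses handles this in the standard way, so I will take $V^*$ as known up to a constant for the analysis.

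Fix an optimal green set $O\subseteq G\setminus g_{\max}$ with total size at most $K$ and $v(O)=V^*$, and let $n_{jk}$ be the number of box-$(j,k)$ items it uses. The algorithm maintains, for each box $(j,k)$, a size budget that grows essentially linearly in time toward $(1+2\e)\,n_{jk}\sigma_k$ (with $\sigma_k$ the box's size), over-provisioned additively by $\e K/(\#\text{boxes})$; since $K\ge\poly(\e^{-1}\log n)$ these over-provisions sum to at most $\e K$. Because $O$ is unknown, unused budget in a high-value box ``cascades'' down to the next value class as time passes, so the effective acceptance threshold drifts downward over time --- this is the Byzantine-robust analogue of Kleinberg's single adaptively-falling threshold, the point being that it is \emph{defined} rather than \emph{estimated from a prefix}, hence immune to an adversary who front-loads misleading reds. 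An arriving item is accepted iff its box still has room and the global capacity $K$ is not exceeded; we additionally reserve an $O(\e)$-fraction of $K$ to greedily absorb all small items and all items from ``light'' boxes (those with $n_{jk}=O(\e^{-2}\log n)$), whose total size is only $O(\e^{-2}\log n)\cdot(\#\text{boxes})\le \e K$.

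The core of the analysis is the concentration claim: with probability $1-n^{-\Omega(1)}$, for every time $t$ and every \emph{heavy} box $(j,k)$, the total size of box-$(j,k)$ green items of $O$ that have arrived by time $t$ is within a $(1\pm\e)$ factor of its mean $t\,n_{jk}\sigma_k$. The arrival times are independent and uniform and each item contributes at most $\sigma_k\le 1$, so this is a Bernstein bound and holds once $n_{jk}=\Omega(\e^{-2}\log n)$, i.e.\ exactly on the heavy boxes; a union bound over the $\poly(\e^{-1}\log n)$ boxes and polynomially many time checkpoints is harmless. Granting this, a box-$(j,k)$ green item of $O$ arriving at time $t$ is rejected only if that box's budget is already full; but the budget tracks $(1+2\e)\,t\,n_{jk}\sigma_k$, while the green mass of $O$ eligible to fill it by time $t$ is at most $(1+\e)\,t\,n_{jk}\sigma_k$ whp (and any box-$(j,k)$ green \emph{not} in $O$ that fills the slot instead is value-equivalent), so the budget can fill early only if reds consumed part of it --- and that is where the size bucketing earns its keep: a red accepted into box $(j,k)$ has value within $1+\e$ of $\mu_j$ and size within $1+\e$ of $\sigma_k$, hence displaces, per $\sigma_k$ of consumed budget, one green item of $O$ of value $\approx\mu_j$, so its own value pays for the loss. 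Charging reds against the green items they displace, and adding up the $O(\e)$ losses from value rounding, the discarded light items, the greedily-packed small and light-box items, the additive over-provisions, and the concentration slack, the algorithm collects value at least $(1-O(\e))V^*$; rescaling $\e$ finishes the proof.

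I expect the main obstacle to be making the ``cascade'' fully precise and proving it self-correcting against an adaptive adversary: the budgets cannot be learned from any prefix (the adversary can corrupt it), so they must be specified by a rule that simultaneously (a) never starves a box below the amount $O$ needs there, and (b) never lets reds soak up more than an $O(\e)$ surplus of capacity beyond what $O$ uses --- and certifying (a) and (b) at \emph{all} times and \emph{all} boxes at once, while the adversary adaptively interleaves reds, is the delicate part. By contrast, once the uniform-matroid case of \Cref{thmUnif} is in hand, the extra size bucketing and the ``small items / light boxes'' bookkeeping above are comparatively routine, matching the remark that non-uniform sizes ``require a little more care'' rather than a fundamentally new idea.
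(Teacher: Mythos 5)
There are two genuine gaps, and they sit exactly at the two places your sketch leans on ``standard'' or ``to-be-made-precise'' steps.

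First, your per-box budgets are not implementable as stated: a budget that ``grows toward $(1+2\e)\,n_{jk}\sigma_k$'' is defined in terms of the box counts $n_{jk}$ of the unknown optimal green set, which the algorithm never sees and, in the Byzantine model, cannot reliably estimate from any prefix. You then invoke a cascade of unused budget to compensate, but you defer its specification and admit certifying it ``at all times and all boxes at once'' is the delicate part --- and that deferred part is precisely the heart of the paper's argument in \S\ref{subsec:mulItemsWithAssum}. There the budgets are \emph{not} tied to $S^*$ at all: all of the per-interval budget $\delta K$ starts at the highest density level and moves down one level only when it is not consumed, and the two claims that make this work are (i) \Cref{claim:NotFallMuch}, a concentration statement showing the cascade can never skip a heavy level because each heavy level receives $(1-\e)\delta s^*_{\ell}$ green mass in every interval, and (ii) \Cref{claim:thresholdCascading}, an accounting argument showing the budget ``wasted'' before the cascade stabilizes is at most $\delta K L + LH = O(\e K)$. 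Without an analogue of these two steps your (a)/(b) requirements are asserted, not proved; note also that the paper buckets by \emph{density} $v(e)/s(e)$ with heaviness measured by total size $s^*_\ell \ge H$, which is what lets a red accepted against the cascading budget be paid for by its density alone, with no per-box displacement charging and no separate size classes needed.

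Second, ``a geometric grid of guesses handles this in the standard way'' does not dispose of the unknown scale of $V^*$: the values are unbounded, so a random guess from a geometric grid loses an unbounded factor, and in this model the prefix maximum can be an arbitrary red value. Handling this is not a footnote --- it is essentially the entire proof of \Cref{thmKnap} in \S\ref{subsec:mulItemsRemAssum}. The paper runs three subroutines in parallel (select a uniformly random element; use the maximum over $[0,\e)$ as a $\poly(n)$ estimate and run the \Cref{lem:assumingPolyApprox} algorithm; and, at each checkpoint, keep $D = \Theta(\e^{-1}\log\e^{-1})$ dedicated slots per value level), and then does a case analysis: a single element of value $\ge nV^*$ is caught by random selection; if at least $D$ elements of $S^*$ have value $\ge V^*/n^2$, one lands in the prefix w.h.p.\ and calibrates the scale; and if fewer than $D$ do, the dedicated slots catch all but an expected $\e$-fraction of them after the first such arrival fixes the levels. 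Your preprocessing (discarding values below $\tfrac{\e}{n}V^*$, bounding values by $v(g_2)$) itself presupposes the scale, so this reduction cannot be skipped. The known-scale portion of your plan is in the right spirit (cascading budgets, heavy/light split, concentration, separate handling of light mass), but as written the proposal is missing the two ideas that the paper's proof actually turns on.
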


As mentioned earlier,  under mild assumptions 
the guarantee in Theorem~\ref{thmKnap}  can be extended
against  the stronger benchmark that includes $g_{\max}$. 
Formally, assuming the algorithm starts with a  $\poly(m)$-approximation to the
value of $g_{\max}$, we get a $(1+\e)$-competitive
algorithm for $K \geq \poly(\eps^{-1} \log(m n))$ against the stronger benchmark.

\medskip\textbf{Selecting a Single Item.}
What if we want to select a single item, to maximize its expected value?
Note that the benchmark $V^*$ is now the value of $g_2$, the
second-largest green item. Our main result for this setting is the following,
where $\log^* n$ denotes the iterated logarithm:
\begin{restatable}[Value Maximization Single-Item]{theorem}{thmValue}
  \label{thm:single-value}
  There is a randomized algorithm for the value-maximization
  (single-item) Byzantine secretary problem which gets an expected value
  at least $(\log^* n)^{-2} \cdot V^*$.
\end{restatable}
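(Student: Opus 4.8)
The goal is a single-item algorithm whose expected value is at least $(\log^* n)^{-2} V^*$, where $V^* = v(g_2)$. The high-level plan is to maintain, over the run of the stream, a current \emph{estimate} $\tau$ of a threshold that should separate "items worth taking" from the rest, and to refine this estimate in layers. The key difficulty is the cautionary example from \Cref{obsLbd}: a single adversarial (red) item can look exactly like the second-largest green, so no single static threshold works. The trick is to use the fact that among the greens there are \emph{two} large items, $g_1$ and $g_2$, so after we have seen one large green item we have genuine information about where $V^* = v(g_2)$ lies. First I would set up a \textbf{bucketing scheme}: partition the timeline $[0,1]$ into a small number (roughly $\log^* n$) of phases; in each phase we hold a threshold and accept the first item exceeding it, but if we never accept, we use what we saw in that phase to \emph{sharpen} the threshold for the next phase. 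The sharpening step reduces the "range of uncertainty" of plausible values of $v(g_2)$ from an interval spanning a factor $N$ to one spanning a factor $\log N$, which is why $O(\log^* n)$ phases suffice to drive the range down to a constant.

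The main structural step is to argue that the greens alone give us a handle: with constant probability the two largest greens $g_1, g_2$ both land in the first, say, half of the timeline, and more importantly they land in \emph{different} phases of our bucketing; conditioned on this, when we observe the first of them we learn a value that is a valid candidate for (a constant factor of) $v(g_2)$, and we can set our next threshold just below it. I would formalize "range of uncertainty" as a geometric grid of candidate threshold values: initially the algorithm only knows $v(g_2) \in [\text{something}, n \cdot \text{something}]$ up to its polynomial-in-$n$ normalization, i.e.\ a grid of $\poly(n)$ cells; each phase, using the observed large green, collapses this to $O(\log(\#\text{cells}))$ cells. After $O(\log^* n)$ rounds we are down to $O(1)$ candidate values, at which point a constant-competitive single-threshold argument (à la Dynkin / Kleinberg) finishes the job, losing only another $\log^* n$ factor from the union over which phase the "useful" green fell into and another from the probability that the grid refinement in a given phase actually happens. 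Multiplying the per-phase success probabilities (each $\Omega(1/\log^* n)$) across the two sources of loss gives the stated $(\log^* n)^{-2}$ bound.

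Concretely, the steps in order are: (i) normalize so that $v(g_{\max})$ is known to within a $\poly(n)$ factor, hence $v(g_2)$ lies in a known multiplicative range $[1, N]$ with $N = \poly(n)$ (or, without this assumption, handle the normalization by a random guess losing only an $O(\log^* n)$ factor, which is already absorbed); (ii) define the layered/phased structure on $[0,1]$ with $L = O(\log^* n)$ layers, where layer $i$ works with a grid of $\approx \log^{(i)} N$ candidate values; (iii) prove the refinement lemma: conditioned on a large green arriving in layer $i$ (which happens with probability $\Omega(1/L)$ over the uniform arrival time), the observation lets us shrink the grid for layer $i+1$ as claimed, \emph{and} with the remaining probability the threshold in layer $i$ was already good enough to have accepted a comparable item; (iv) run a final constant-competitive selection once the grid is $O(1)$-sized; (v) assemble: condition on the good event that $g_1$ and $g_2$ are usefully placed (constant probability), take a union/product over the layer in which the refinement "kicks in" ($1/L$), and over the layer in which we actually accept ($1/L$), yielding expected value $\Omega(L^{-2}) V^* = \Omega((\log^* n)^{-2}) V^*$.

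The hard part will be step (iii), the refinement lemma — in particular arguing cleanly that the red items cannot sabotage the refinement. A red item can always be inserted to look like a "large green", so the algorithm cannot tell which of the large values it sees is actually green; the resolution is that it does not need to — any value it sees that is at least (a constant times) $v(g_2)$ is a fine threshold, and the genuine green $g_2$ guarantees that \emph{some} value in that range really does arrive within the random window, so the adversary's extra reds can only help (they give more chances to set a good threshold) or are harmless (they are below the threshold and get ignored). Making the bookkeeping of "which interval of candidate values is still live" robust to arbitrary red insertions, while keeping the probabilities multiplying to $L^{-2}$ rather than something worse, is the delicate part of the argument.
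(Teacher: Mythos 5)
Your high-level skeleton does match the paper's: $O(\log^* n)$ checkpoints, a threshold estimate whose multiplicative uncertainty should drop from $N$ to roughly $\log N$ per interval, and two losses of $\log^* n$ assembled at the end. But the engine of the refinement in your step (iii) is not correct, and that step is the crux. You propose to learn from observing ``a large green'' and assert that any observed value of at least a constant times $v(g_2)$ is a fine threshold, so that extra reds can only help or are harmless. Neither claim holds. First, observing $g_{\max}$ gives no multiplicative handle on $v(g_2)$ (the ratio $v(g_{\max})/v(g_2)$ is unbounded), and the algorithm cannot tell which observed value is green in any case. Second, an observed value is a useful threshold only if (after rescaling) it lands in the window between roughly $v(g_2)$ and $v(g_{\max})$: a red of value, say, $\sqrt{n}\,V^*$ in an instance where $v(g_{\max})=2V^*$ is far below the $n V^*$ bar at which Select-Random-Element pays off, yet taking it (or anything near it) as a threshold means nothing later need exceed it. Such ``overshooting'' reds are precisely the adversary's tool for corrupting the estimate, and your sketch does not handle them.

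The paper resolves this with a different mechanism that never tries to identify greens. It runs, uniformly at random, one of three subroutines (select a random element; Dynkin's rule on a random interval; a threshold equal to the observed interval maximum $v_i$ divided by a random power of two over a grid of size $O(\log^{(i)} n)$), and argues by contradiction in a ``good-or-learnable'' fashion: if the algorithm were not already $\Omega((\log^* n)^{-2})$-competitive, then every interval must contain a red element above $V^*$ (\Cref{claim:cardBigElems}, else the Dynkin subroutine already wins), and inductively $v_j \le V^*\log^{(j-1)} n$ (\Cref{lem:cardNotTooBigElems}), because an overshoot at level $j+1$ would let the randomly down-scaled threshold built from $v_j$ collect value about $V^*\log^{(j)} n$, which is large enough to compensate for the $1/\log^{(j)} n$ chance of guessing the right scale. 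This value-versus-probability tradeoff is exactly what makes the uncertainty contract to its logarithm in each interval while keeping the overall guarantee at $(\log^* n)^{-2}$. Your accounting in step (v) omits the per-layer $1/(\text{grid size})$ guessing cost altogether; without the compensation argument, the product of success probabilities is much worse than $L^{-2}$. To repair the proof you would need to replace the ``learn from a green / any large observed value is a fine threshold'' step with an argument of this kind, i.e., one that deals with reds lying far above $v(g_{\max})$ but below $nV^*$.
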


Interestingly, our result is unaffected by the corruption level, and works
even if just two elements $g_{\max}, g_2$ are green, and every other
item is red. This is in contrast to many other robustness models where
the algorithm's performance decays with the fraction of bad
elements~\cite{EKM-TEAC18,CharikarSV17, DiakonikolasKS18a,
  LykourisML18}. 
Moreover, our algorithms do not depend on the fraction of bad
items. Intuitively, we obtain such strong guarantees because the
adversary has no incentive to present too many bad elements with large
values, as otherwise an algorithm that selects a random element would
have a good performance.

In the classical setting, the proofs for the value-maximization proceed
via showing that the best item itself is chosen with constant
probability.  Indeed, in that setting, the competitiveness of 
value-maximization and probability-maximization versions is the same.
We do not know of such a result in the Byzantine model. However, we 
can show a non-trivial performance for the probability-maximization
(ordinal) problem:

\begin{restatable}[Ordinal Single-item Algorithm]{theorem}{thmUbdOrd}
  \label{thmUbdOrd}
  There is a randomized algorithm for the ordinal Byzantine secretary which
  selects an element of value at least the second-largest green item with
  probability $\Omega(1/\log^2 n)$.
\end{restatable}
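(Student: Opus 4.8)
The plan is to prove \Cref{thmUbdOrd} \emph{non-constructively}, via the minimax principle; the $\log^2 n$ loss will then come from effectively guessing two $\Theta(\log n)$-valued parameters. A randomized online algorithm is a distribution over deterministic ones, and the quantity to lower bound is $\min_{\mathrm{alg}}\max_{\mathrm{adv}}\Pr[\text{selected value}\ge v_{g_2}]$, where for a fixed adversary (a choice of all $n$ values and of the arrival times of $R$) the probability is over the algorithm's coins \emph{and} the independent uniform green arrival times $\{t_g\}_{g\in G}$. Folding the uniform green times into a ``nature'' player, this is a zero-sum game (a standard discretization of $[0,1]$ makes the strategy spaces finite), so by minimax it suffices to prove: \emph{for every distribution $\mu$ over adversary strategies there is a deterministic online algorithm $A_\mu$, possibly depending on $\mu$, with $\Pr_{\mu,\,\{t_g\}}[A_\mu \text{ selects value}\ge v_{g_2}]=\Omega(1/\log^2 n)$.} I would also exploit at the outset that this is an \emph{ordinal} problem: only the relative order of the $n$ values affects the success event and the information available to the algorithm, so WLOG the values are a permutation of $\{1,\dots,n\}$; then ``value $\ge v_{g_2}$'' is ``rank $\ge j_2$'' for some $j_2\in[n]$, and, crucially, the powers of two below $v_{g_2}$ number only $O(\log n)$.

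Fix $\mu$. I would take $A_\mu$ to be the best member of a family of $O(\log^2 n)$ simple \emph{threshold-with-schedule} rules; since the best beats the average, it is enough to show the uniform mixture over this family already wins with probability $\Omega(1/\log^2 n)$. The family is indexed by (i) a dyadic value level $\theta\in\{1,2,\dots,2^{\lceil\log n\rceil}\}$, and (ii) a second dyadic parameter saying \emph{when} and for \emph{how long} to be active — one of $O(\log n)$ time scales, or equivalently a factor-two guess for how many elements of value $\ge\theta$ arrive before the first winning element. The two guesses we aim for are: $\theta$ is the power of two just below $v_{g_2}$, and the schedule parameter matches the adversary's typical placement of the reds whose value lies in the straddling band $[\theta,v_{g_2})$. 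Conditioned on both guesses being correct, I would argue a constant probability of success: $g_1$ and $g_2$ both have value $\ge v_{g_2}\ge\theta$ and land at independent uniform times, so with constant probability at least one of them sits (inside the active window) exactly where the rule locks on — it is the first element of value $\ge\theta$ after the straddling reds have mostly passed — and since \emph{every} element of value $\ge v_{g_2}$ passes the test (including any red above $v_{g_2}$, a ``free'' win), we select a winner. A union bound over the $O(\log^2 n)$ index pairs then yields the bound.

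The main obstacle — and, I believe, the reason one resorts to minimax rather than a clean universal algorithm — is precisely the band of reds with value in $[\theta,v_{g_2})$ just below the second-largest green: the adversary may plant up to $\Theta(n)$ of them at times chosen so that the first element clearing any \emph{fixed}, adversary-oblivious threshold is one of these useless reds. Against an \emph{average} adversary $\mu$, however, $A_\mu$ may point its threshold at the typical value of $v_{g_2}$ and its schedule at the typical location/count of the straddling reds under $\mu$ — information no fixed rule possesses — and this is exactly what the $O(\log^2 n)$-fold guessing approximates. What remains is the (routine but careful) probability estimate that, once both coordinates are well-aimed, a uniformly placed green, and not a straddling red, is the element we lock on to; this is a records/geometric-series computation of the same flavour as in the value-maximization argument behind \Cref{thm:single-value}, but lighter, since here any value $\ge v_{g_2}$ is already a success and we need not accumulate a large total, only clear a single bar.
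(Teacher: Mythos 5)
Your minimax/discretization reduction matches the paper's (Lemma~\ref{lemDual} plus Appendix~\ref{sec:semi-finite-minimax}): it suffices to show that for every known input distribution there is an algorithm winning with probability $\Omega(1/\log^2 n)$. The gap is in how you propose to prove that lemma. First, an implementability problem: in the ordinal model the algorithm never observes values or global ranks, only comparisons among elements that have already arrived, so a rule of the form ``select the first element of value $\ge \theta$'' for a fixed dyadic level $\theta$ is not an admissible strategy --- two order-isomorphic histories are indistinguishable even though the arriving element may be above $\theta$ in one and below in the other. This is why the paper's thresholds are always values of \emph{previously observed} elements (the maximum seen in a window, the center $c_i$ of the candidate set $S_i$, or a random member of the final candidate set), never numeric levels.

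Second, and more fundamentally, even granting cardinal access the key claim --- constant success probability once $\theta$ and the schedule are guessed to within a factor of two --- fails, so a family of $O(\log^2 n)$ non-adaptive threshold-with-schedule rules cannot work. Take $\mu$ to be a point mass in which the reds occupy ranks $1,\dots,n-2$, the greens the two top ranks, and the $\Theta(n)$ reds in the band $[\theta, v(g_2))$ (for the dyadic $\theta$ just below $v(g_2)$) arrive spread over $[0,1]$ in increasing order of value, as in Observation~\ref{obsLbd}. There are no reds above $v(g_2)$, so no ``free wins,'' and any rule ``skip the first $\approx 2^s$ elements above $\theta$, then take the next one'' locks onto one specific position among $\Theta(n)$ above-$\theta$ arrivals; since the two greens occupy uniformly random positions in that sequence, the rule wins with probability $O(1/n)$, and a factor-two guess of a count cannot pin down an exact position (you would need $\Theta(n)$ guesses, not $O(\log n)$). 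The paper escapes exactly this obstacle by being \emph{adaptive} in the realized history, not just in $\mu$: it mixes in Dynkin-style two-checkpoint rules (which handle the instance above, where some interval has no big red), and in the remaining ``hard'' case --- a big red in every interval --- it conditions on $g_2$ landing in $\I_0$ and binary-searches over the \emph{observed} elements, halving a candidate set $S_i$ using the posteriors $p^i_e = \Pr[e = g_2 \mid \calH, \mathcal{K}(\checkp_i)]$ computed from the known distribution; either the posterior mass splits at some checkpoint, in which case thresholding at the center wins (the big red of that interval guarantees something above the threshold is selected), or after $\log n$ halvings the candidate set has constant size and a random guess of $g_2$ works in the last quarter. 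That posterior-guided, history-dependent search is the missing core of your argument.
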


\medskip\textbf{Other Settings.}
Finally, we consider some other constraint sets given by matroids. In
(simple) partition matroids, the universe $U$ is partitioned into $r$ groups, and
the goal is to select one item from each group to maximize the total
value. If we were to set the benchmark to be the sum of second-largest
green items from each group, we can just run the single-item algorithm
from Theorem~\ref{thmUnif} on each group independently. But our
benchmark $V^*$ is much higher: among the items
$g_2, \ldots, g_{|G|}$, the set $V^*$ selects the largest one from each
group. Hence, we need to get the largest green item from $r-1$ groups!
Still, we do much better than random guessing.

\begin{restatable}[Partition Matroids]{theorem}{thmPartitionMatroid}
  \label{thmPartitionMatroid}
  There is an algorithm for Byzantine secretary on partition matroids that is ${O(\loglog n)^2}$-competitive with the benchmark $V^*$.
\end{restatable}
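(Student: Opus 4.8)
The plan is to pass to a bounded range of value scales, solve a single scale with a one-line thresholding gadget, and then stitch the scales together while losing only $O(\log\log n)$ twice. First I would decompose the benchmark by parts: let $U_1,\dots,U_r$ be the parts of the partition matroid and let $\opt_i$ be the value that $V^*$ collects from $U_i$, i.e.\ the largest-valued green element of $U_i\setminus\{g_{\max}\}$, so that $V^*=\sum_{i=1}^{r}\opt_i$. Each $\opt_i$ is the value of a green element other than $g_{\max}$, hence $\opt_i\le v_{g_2}$; and the part containing $g_2$ has $\opt=v_{g_2}$, so $v_{g_2}\le V^*$. Therefore discarding every part with $\opt_i<v_{g_2}/n^2$ loses at most $n\cdot v_{g_2}/n^2\le V^*/n$ of the benchmark, and (after rescaling so all surviving $\opt_i$ lie in $[2,n^2]$) the survivors span a multiplicative window of width $n^2$. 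Bucketing the surviving parts by $\lfloor\log_2\opt_i\rfloor$ yields $L=O(\log n)$ dyadic scales $B_1,\dots,B_L$; writing $W_\ell=\sum_{i\in B_\ell}\opt_i$ we have $\sum_\ell W_\ell\ge(1-1/n)V^*$, so it suffices to collect an $\Omega((\log\log n)^{-2})$ fraction of $\sum_\ell W_\ell$.

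The single-scale gadget is: fix a scale $2^s$ and, in each part independently, select the first element of value $\ge 2^s$ that arrives after time $1/2$. For every part $i\in B_s$ whose benchmark element arrives in $(1/2,1]$ — a constant-probability event, independent over the green arrival times — we select something of value $\ge 2^s=\Omega(\opt_i)$, so this gadget collects $\Omega(W_s)$ in expectation (the single part containing $g_{\max}$, which the benchmark treats specially, is asymptotically irrelevant). The trouble is that one threshold serves exactly one scale, and the adversary can spread the mass so that $W_\ell\approx V^*/L$ for all $\ell$; any algorithm committing to one global threshold is then only $\Omega(1/\log n)$-competitive, so several scales must be served in a single run. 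The natural attempt is a \emph{cascading} gadget: split $(1/2,1]$ into sub-windows, line the scales up in increasing (or decreasing) order along the sub-windows, and in sub-window $w$ accept the first not-yet-used element at scale $2^{s_w}$ in each part; if each part's benchmark element has a constant chance of landing in a ``good'' sub-window, then an increasing cascade and a decreasing cascade together capture $\Omega(\sum_\ell W_\ell)$ (their per-scale capture probabilities sum to a constant), so randomizing the direction would finish the job in one shot.

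The main obstacle is that the cascade is not robust \emph{inside} a part: the adversary can plant a cheap red of value $\approx 2^{s_1}$ at the very start of a part $U_i$ whose true $\opt_i$ is enormous, burning that part's unique pick at a tiny value — and restricting the gadget to accept elements \emph{at} a scale rather than \emph{above} it merely trades this for a different corruption pattern. This is precisely the phenomenon that forces the multi-layer bucketing in the single-item algorithms of Theorems~\ref{thm:single-value} and~\ref{thmUbdOrd}, and I expect the fix here to be two nested layers, each costing $O(\log\log n)$. Layer (i): randomly sample a \emph{doubly-dyadic} scale class, i.e.\ partition the $L=O(\log n)$ dyadic scales into the $O(\log\log n)$ classes indexed by $\lfloor\log_2\lfloor\log_2\opt_i\rfloor\rfloor$ and pick one uniformly, losing an $O(\log\log n)$ factor; conditioned on the chosen class carrying its fair share, its scales span a much smaller window. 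Layer (ii): within that class, refine each part's estimate of its correct threshold through $O(\log\log n)$ rounds of single-item-style bucketing, each round protecting a part's pick from being burned by a cheaper red before that part's benchmark element arrives, and then apply the single-scale gadget within a random sub-window. Composing the two $O(\log\log n)$ losses with the $\Omega(1)$ guarantee of the gadget yields $O(\log\log n)^2$. The step I would budget the most effort for is (ii): making the per-part bucketing of the single-item analysis run simultaneously across all $r$ parts while charging only $O(\log\log n)$ for the residual within-class scale uncertainty (rather than the $O(\log n)$ from a naive random-scale argument), and checking — via the independence and uniformity of the green arrival times — that ``burning'' events are rare enough that the expected capture remains $\Omega(\opt_i)$ per served part. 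The rest should be the scale bookkeeping of the first paragraph plus an averaging argument over the $O(\log\log n)$ class choices and the sub-window randomness.
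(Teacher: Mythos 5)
There is a genuine gap, and it sits exactly where you said you would ``budget the most effort'': the refinement layer is the proof, and your sketch does not contain it. Two concrete problems. First, your reduction to the window $[2,n^2]$ is an analysis-side rescaling by $v(g_2)$, but the algorithm does not know $v(g_2)$, so the ``single-scale gadget'' (threshold at an absolute value $2^s$ after time $1/2$) is not yet an online algorithm. The paper's fix is algorithmic: run Select-Random-Element with constant probability (so one may assume no element exceeds $\poly(n)\cdot V^*$), condition on $g_{\max}$ landing in $[0,\nicefrac12]$ (constant probability, and affordable because the benchmark already discards $g_{\max}$), and use the observed prefix maximum $v_0$ as a $\poly(n)$-approximate scale shared by all parts. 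Second, your Layer~(i) buys essentially nothing: the top doubly-dyadic class $\lfloor\log_2\lfloor\log_2 \opt_i\rfloor\rfloor$ still contains $\Theta(\log n)$ dyadic scales, so after spending one $O(\loglog n)$ factor you are left with the original problem (serve $\Theta(\log n)$ scales at an $O(\loglog n)$ loss), which you defer entirely to Layer~(ii) described only as ``single-item-style bucketing.'' In particular your worry about a part's unique pick being ``burned'' by a cheap red is real, and nothing in the proposal resolves it.

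What the paper actually does in \S\ref{sec:part} is a per-part, interval-by-interval good-or-learnable trichotomy that your sketch gestures at but never states. With $\loglog n$ intervals after time $\nicefrac12$, in interval $\I_i$ the algorithm uses the previous interval's observed maximum $v_{i-1}$ (in that part) as its current estimate, with $\Theta(\log^{1/i} n)$ factor-$2$ levels spanning a window of width $2^{O(\log^{1/i}n)}$ around it. Then for each part either (a) some interval's top red exceeds the part's benchmark by more than $\log^{1/i} n$, in which case thresholding at a random interval and random level already collects $\Omega(V^*/\loglog n)$; or (b) some interval's top red is below the benchmark by more than $2^{\log^{1/i}n}$, in which case the benchmark element is a detectable ``big jump'' over everything seen in that interval and is grabbed with constant probability; or (c) in every interval the top red is sandwiched within the window, so the estimates $v_i$ tighten from a $\poly(n)$ window down to $2^{O(\log^{1/i}n)}$, and by the last interval only $O(1)$ levels remain, so a random-level threshold catches the benchmark element with constant probability when it lands there. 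The two $O(\loglog n)$ factors come from guessing the interval and (in case (a)/the last interval) the level --- not from sampling a class of scales. Without this trichotomy (or an equivalent mechanism that simultaneously defeats the too-big-red, too-small-red, and burning attacks while shrinking the per-part uncertainty doubly-exponentially), your outline does not yield the $O(\loglog n)^2$ bound.
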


Finally, we record a simple but useful logarithmic competitive ratio for
arbitrary matroids (proof in \S\ref{sec:general-matroids-proof}),
showing how to extend the corresponding result from~\cite{BIK-SODA07}
for the non-robust case.

\begin{restatable}[General Matroids]{observation}{obsGenMatroid}
  \label{obsGenMatroid}
  There is an algorithm for Byzantine secretary on general matroids that is ${O(\log n)}$-competitive with the benchmark $V^*$.
\end{restatable}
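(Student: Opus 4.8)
The plan is to extend the threshold algorithm of Babaioff--Immorlica--Kleinberg to the Byzantine setting, with the twist that the threshold must be chosen so that it is robust to red elements placed anywhere (including very early). Write $\mathrm{OPT}_g \subseteq G \setminus g_{\max}$ for the optimal green independent set, so $V^* = v(\mathrm{OPT}_g)$, and list its elements $o_1, o_2, \dots, o_m$ in non-increasing value order, where $m$ is at most the matroid rank $r$. The starting point is the standard dyadic identity: grouping $o_2, o_3, \dots$ into blocks $[2^c, 2^{c+1})$ and bounding each block's value by $2^c\, v(o_{2^c})$ gives
\[
  V^* \;\le\; 2\sum_{c=0}^{\lceil \log_2 m\rceil} 2^c\, v(o_{2^c}),
\]
so $\sum_c 2^c v(o_{2^c}) \ge V^*/2$. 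Hence if for each $c$ we have an algorithm collecting value $\Omega\!\big(2^{c}\,v(o_{2^{c}})\big)$, then running the level-$c$ algorithm for a uniformly random $c \in \{0,\dots,\lceil\log_2 n\rceil\}$ collects in expectation $\tfrac{1}{O(\log n)}\sum_c \Omega(2^c v(o_{2^c})) = \Omega(V^*/\log n)$. The key point is that we only pay one logarithmic factor, because the dyadic sum telescopes instead of contributing its own $\log n$.

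The level-$c$ algorithm, with $k := 2^c$: split the arrivals into a \emph{sample} set and an \emph{active} set by independent fair coins per element (the sample is observe-only; using coins rather than a time prefix is what makes it robust to adversarial red placement); set $\tau$ to be the $k$-th largest value in a maximum-weight basis of the sample --- equivalently, the largest $s$ with $\mathrm{rank}(\{e\in\text{sample}: v(e)\ge s\})\ge k$ --- and $\tau = +\infty$ if the sample has rank $<k$; then process the active elements in arrival order, selecting $e$ whenever $v(e)\ge\tau$ and $e$ is independent of the currently selected set. This simply builds a maximal independent subset $A$ of the ``heavy active'' elements $\{e\in\text{active}: v(e)\ge\tau\}$, so $\ALG = v(A) \ge |A|\cdot\tau = \mathrm{rank}\big(\{e\in\text{active}: v(e)\ge\tau\}\big)\cdot\tau$. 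The algorithm never inspects colors, and this value bound is robust to the number of reds: selected reds only add to $v(A)$, and deleting $t$ reds from $A$ drops $\mathrm{rank}(A)$ by at most $t$, so reds cannot crowd out more green value than their own count.

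To analyze level $c$, consider the fixed independent set $\{o_1,\dots,o_{3k}\}$ (truncated to $o_m$ if $3k>m$): all its elements have value $\ge v(o_{3k})$, and the fair coins put about $3k/2$ of them into each of the sample and active sets, so with constant probability at least $k$ land in each. The $\ge k$ sampled ones force $\tau \ge v(o_{3k})$; the $\ge k$ active ones, having value $\ge v(o_{3k})$, give $\mathrm{rank}(\{e\in\text{active}: v(e)\ge v(o_{3k})\})\ge k$, hence $\ALG \ge k\cdot v(o_{3k}) = \tfrac13\, (3k)\, v(o_{3k}) = \Omega\!\big(2^{c}v(o_{2^{c}})\big)$ after reindexing --- \emph{provided} $\tau$ is not much larger than $v(o_{3k})$, since the harvested rank is taken at threshold $\tau$, not $v(o_{3k})$.

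The step I expect to be the main obstacle is exactly the case where $\tau$ is much larger than $v(o_{3k})$: this happens only when the sample already contains $\ge k$ independent elements (necessarily including reds) of very high value, and in that case the collected value must come from analogous high-value reds in the active set. Arguing this requires conditioning on the coin flips and showing that a maximum-weight basis of $\{e: v(e)\ge\tau\}$ still has $\Omega(k)$ elements in the active set; since $\tau$ is a function of the sample and the active set is its complement, this conditioning is delicate, and I would resolve it either with a three-way coin split (sample / threshold-setting reserve / active) or via a direct negative-association argument, so that ``$\mathrm{rank}$ of heavy active elements $\ge \Omega(k)$'' holds with constant probability against any adversary. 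Granting this, $\mathbb{E}[\ALG \mid \text{level }c] = \Omega(2^c v(o_{2^c}))$ (the small-$c$ and truncation corner cases, and the degenerate regime $V^* = O(v(o_1))$, absorbed into constants or handled separately), and averaging over the random $c$ gives an $O(\log n)$-competitive algorithm against $V^*$.
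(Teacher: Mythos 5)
Your route is genuinely different from the paper's, but as written it has two real gaps, one structural and one that you flag yourself. Structurally, the level-$c$ algorithm is not implementable online in the form you analyze: with a per-element coin split, sample and active elements are interleaved in time (and the red arrivals are adversarially scheduled), so you cannot compute $\tau$ as ``the $k$-th largest value in a maximum-weight basis of the sample'' before processing the active elements. You would have to use an adaptive threshold based on the sample seen so far, and then the claim that $\ge k$ sampled elements of value at least $v(o_{3k})$ force $\tau \ge v(o_{3k})$ fails for active elements arriving early; the whole analysis presumes a two-phase observe-then-select structure that the arrival model does not provide. Second, the case you name as the main obstacle --- $\tau$ driven far above $v(o_{3k})$ by $\ge k$ independent high-value red elements in the sample --- is exactly where the Byzantine aspect bites, and it is left unresolved. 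Because $\tau$ is a function of the random partition, ``rank of active elements above $\tau$ is $\Omega(k)$'' cannot be obtained by fixing a threshold and applying concentration; a union bound over the at most $n$ candidate values of $\tau$ has failure probability $e^{-\Omega(k)}$ per value and hence only covers $k = \Omega(\log n)$, while the small-$k$ levels (which can carry most of $V^*$) need a separate idea. So the proposal is incomplete at its central technical step, not just in corner cases.

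For contrast, the paper's proof avoids sample-based thresholds entirely and is much more elementary: run Select-Random-Element with constant probability, which lets one assume no element is worth more than $\poly(n)$ times the second-largest green value; with probability $1/2$ the element $g_{\max}$ (which is green, hence randomly placed) lands in the first half, so the maximum value $v$ seen there upper-bounds every green value and is at most $\poly(n)$ above $v(g_2)$; then form $O(\log (nr))$ geometric value levels around $v$, guess one uniformly at random, and greedily accept any independent element at that level or above. One level carries an $\Omega(1/\log(nr))$ fraction of $V^*$, and greedy recovers it up to a factor $2$, giving $O(\log n)$ overall. This scale-estimation-plus-random-level trick is precisely what your coin-flip sampling was trying to replace; note that the benchmark's exclusion of $g_{\max}$ is what makes the time-prefix scale estimate legitimate despite adversarial placement of the reds, so the difficulty that pushed you to coin flips is already handled by the benchmark and the random-element subroutine.
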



Our results show how to get robust algorithms for the widely-studied
secretary problems, and we hope it will generate futher interest in
robust algorithm design. Interesting next directions include improving
the quantitative bounds in our results (which are almost certainly not
optimal), and understanding tradeoffs between competitiveness and robustness.

\subsection{Related Work}
\label{sec:related}

The secretary problem has a long history, see~\cite{Ferguson-Journal89}
for a discussion.  The papers
\cite{BIK-SODA07,Lachish-FOCS14,FSZ-SODA15} studied generalizations of
the secretary problem to matroids,
\cite{GM-SODA08,KorulaPal-ICALP09,KRTV-ESA13,GS-IPCO17} studied
extensions to matchings, and \cite{Rubinstein-STOC16,RS-SODA17} studied
extensions to arbitrary packing constraints.
More generally, the random-order model has been considered, both as a
tool to speed up algorithms (see~\cite{CS, Seidel}), and to go beyond
the worst-case in online algorithms
(see~\cite{Meyerson-FOCS01,GGLS-SODA08,GHKSV-ICALP14}).  E.g., we can
solve \emph{linear programs} online if the columns of the constraint
matrix arrive in a random
order~\cite{DevanurHayes09,Devanur11,KRTV14,GM-MOR16}, and its entries
are small compared to the bounds.  In online algorithms, the
random-order model provides one way of modeling benign nature, as
opposed to an adversary hand-crafting a worst-case input sequence; this
model is at least as general as i.i.d.\ draws from an unknown distribution.

Both the random-order model and the Byzantine model are
\emph{semi-random}
models~\cite{BlumS-JALG95,FK-JCSS01}, 
with different levels of power to the adversary.
Other restrictions of the random-order model have been studied: the 
model that is closest to ours in spirit is the $t$-bounded adversary model~\cite{guha2009stream},
where the adversary can allowed to delay up to $t$ elements at any time.
This is an adaptive model, where the adversary sees the randomness in the
stream, but is bounded to a small number of changes; we allow the adversary
to change the stream before the elements are randomly placed, but do not
parameterize by the number of changes. The $t$-bounded model has been used
for approximate quantile selection~\cite{guha2009stream}, and for facility location problems~\cite{lang2018online}.
Another line of enquiry lower-bounds the entropy of the input
stream~\cite{ChungMV13,KKN-STOC15}
to ensure the
permutations are ``random enough'';
these papers give sufficient conditions for the classical algorithms to
perform well, whereas we take the dual approach of permitting outliers
and then asking for new robust algorithms.
There are works (e.g.,~\cite{MNS-EC07,MGZ-SODA12,Molinaro-SODA17}) that
give algorithms which have a worst-case adversarial bound, and which
work better when the input is purely stochastic; most of these do not
study the performance on mixed arrival sequences. One exception is the
work~\cite{EKM-TEAC18} who study online matching for mixed arrivals,
under the assumption that the ``magnitude of noise'' is bounded. Another
exception is a recent (unpublished) work of Kesselheim and Molinaro, who
define a robust $K$-secretary problem similar to ours.  They assume the
corruptions have a bursty pattern, and get $1 - f(K)$-competitive algorithms.
Our model is directly inspired by theirs.





\section{Preliminaries and Techniques}
\label{sec:prelims}


By $[a \ldots b]$  we denote the set of integers $\{a, a+1, \ldots, b-1, b\}$.
The \emph{universe} $U \defeq R \cup G$ consists of \emph{red/corrupted} elements $R$ and
\emph{greed/good} elements $G$. Let $v(e)$ denote the value of element $e$: in
the \emph{ordinal} case $v(e)$ merely defines a total ordering on the elements,
whereas in the \emph{value-maximization} case $v(e) \in \mathbb{R}_{\geq
  0}$. Similarly, let $v(\calA)$ be the random variable denoting the value of the
elements selected by algorithm $\calA$. Let $t_e \in \Int$ be the
arrival time of element $e$. Let $R = \{ r_1, r_2, \ldots, r_{|R|} \}$
and $G = \{ g_{\max}, g_2, g_3, \ldots, g_{|G|} \}$; the elements in
each set are ordered in non-increasing values. 
Let $V^*$ be the benchmark to which we compare our algorithm.
Note that $V^*$ is some function of $G \setminus \{ g_{\max} \}$,
depending on the setting. We sometimes refer to elements
$\{e \in U \mid v(e) \geq v(g_2) \}$ as \textit{big}.


\subsection{Two Useful Subroutines}
\label{sec:two-useful}

Here are two useful subroutines. 
\paragraph*{Select a Random Element.}
The subroutine is simple: \emph{select an element uniformly at
  random}. The algorithm can implement this in an online fashion since
it knows the total number of elements $n$. An important property of
this subroutine is that, in the value case, if any element in $U$ has
value at least $n V^*$, this subroutine gets at least $V^*$ in
expectation since this highest value element is selected with
probability $1/n$.
\paragraph*{Two-Checkpoints Secretary.}
The subroutine is defined on two checkpoints
$\checkp_1, \checkp_2 \in \Int$, and let
$\I \defeq [\checkp_1, \checkp_2]$ be the interval between them. The
subroutine ignores the input up to time $\checkp_1$, observes it
during $\I$ by setting threshold $\tau$ to be the highest value seen in the
interval $\I$, i.e., $\tau \gets \max \{ v(e) \mid t_e \in \I
\}$. Finally, during $\ocinter{\checkp_2, 1}$ the subroutine selects
the first element $e$ with value $v(e) \geq \tau$.

We use the subroutine in the single-item setting where the goal is to
find a ``big'' element, i.e.,
an element with value at least 
$v(g_2)$. Suppose that there are no big red elements in $\I$. Now, if
$g_2$ lands in $\I$, and also $g_{\max}$ lands in
$\ocinter{\checkp_2, 1}$, we surely select some element with value at
least $g_2$. Indeed, if there are no big items, threshold $\tau \gets g_2$, and
because $g_{\max}$ lands after $\I$, it or some other element will be
selected. Hence, with probability
$\Pr[t_{g_2} \in \I] \Pr[t_{g_{\max}} \in \ocinter{\checkp_2, 1}] =
(\checkp_2 - \checkp_1) \cdot (1 - \checkp_2)$, we select an element
of value at least $v(g_2)$.

\subsection{Our Techniques}
\label{sec:our-techniques}

\IGNORE{\color{red} 
A common theme of our approaches is to design a \emph{small}
suite of subroutines and show that on any instance one of them has a good performance. The
algorithm selects one of these subroutines at random, hence achieving
almost the same guarantee as the best one (up to the number of
subroutines).
}

A common theme of our approaches is to 
prove a ``good-or-learnable'' lemma for each problem. 
Our algorithms begin by putting
down a small number of \emph{checkpoints}
$\{ \checkp_i \}_i$ to partition the time horizon $[0,1]$---and 
the arriving items---into disjoint \emph{intervals} $\{\I_i\}_i$. We maintain
\emph{thresholds} in each interval to decide whether to select the next element.
Now a ``good-or-learnable'' lemma 
says that either the setting of the thresholds in the current interval  $\I_i$
will give us a ``good'' performance, or we can ``learn'' that this is not the case
and update the thresholds for the next interval $\I_{i+1}$. 
Next we give details for each of our problems.

\medskip\textbf{Uniform Matroid Value Maximization
  (\S\ref{sec:knapsack}).}  Recall that here we want to pick $K$
elements (in particular, all elements have size $1$, unlike the
knapsack case where sizes are in the range $[0, 1]$).  For simplicity,
suppose the algorithm knows that the benchmark $V^*$ lies in $[1,n]$;
we remove this assumption later.  We define $O(\eps^{-1} \log n)$
levels, where level $\ell \geq 0$ corresponds to values in the range
$[n/(1+\e)^{\ell +1}, n/(1+\e)^{\ell})$. For each interval $\I_i$ and
level $\ell$, we maintain a budget $B_{\ell,i}$.  Within this interval
$\I_i$, we select the next arriving element having a value in some
level $\ell$ only if the budget $B_{\ell,i}$ has not been used up. How
should we set these budgets?  If there are $1/\delta$ intervals of
equal size, we expect to select $\delta K$ elements in this
interval. So we have a total of $\delta K$ budget to distribute among
the various levels. We start off optimistically, giving all the budget
to the highest-value level. Now this budget gradually \emph{cascades}
from a level $\ell$ to the next (lower-value) level $\ell+1$, if level
$\ell$ is not selecting elements at a ``good enough'' rate. The
intuition is that for the ``heavy'' levels (i.e., those that contain
many elements from the benchmark-achieving set~$S^*$), we will roughly
see the right number of them arriving in each interval. This allows us
to prove a good-or-learnable lemma, that either we select elements at
a ``good enough'' rate in the current interval, or this is not the
case and we ``learn'' that the budgets should cascade to lower value
levels. There are many details to be handled: e.g., this reasoning is
only possible for levels with many benchmark elements, and so we need
to define a dedicated budget to handle the ``light'' levels.

\IGNORE{\color{red} The goal is now to select $r$ elements, and be competitive with the
leave-one-out benchmark: $V^* \defeq \sum_{i=2}^{r+1} v(g_i)$. As before,
we first read a constant fraction of input in order to estimate $V^*$
via the observed maximum $M$, and infer that $M \le n V^*$. And again,
we can group elements into ``buckets'' and ignore all buckets lying
$O(\log nr)$ levels below $M$.

The new idea here is not to choose one bucket randomly, which loses a
logarithmic factor, but to perform ``adaptive thresholding''. We set the
initial threshold $\tau$ to the lowest bucket. Whenever an element
arrives into some bucket lying above the current threshold, we select it
and \emph{increment the threshold} to the following bucket. 
Furthermore, to hedge against the adversary
tricking us into setting too high a bar, we decrement the threshold after
every $1/r$ units of time, never going below the lowest bucket. This
decrementing is necessary, else algorithm would under-perform even on
inputs where all elements are green and have equal value. We show this
adaptive thresholding algorithm has value $\Omega(V^*)$ when
$r = \Omega(\log n)$. Intuitively, the last condition arises because $\Theta(\log n)$
is the time we need to find the ``correct'' threshold level $\tau$.
}

\medskip\textbf{Single-Item Value-Maximization (\S\ref{sec:card-case}).}
We want to maximize the expected value of the selected element,
compared to $V^* \defeq v(g_2)$, the value of the second-max
green. With some small constant probability our algorithm selects a
uniformly random element. This allows us to assume that every element
has value less than $n V^*$, as otherwise the expected value of a
random guess is $\Omega(V^*)$. We now describe how applying the above
``good-or-learnable'' paradigm in a natural way guarantees an expected
value of $\Omega(V^* /\log n)$. Running the two-checkpoint secretary
(with constant probability) during $\checkp_1 = 0, \checkp_2 = 1/2$ we know that
it gets value $\Omega(V^*)$ and we are done, or failing that, there
exist a red element of value at least $V^*$ in $[0, 1/2]$. But then we
can use this red element (highest value in the first half) to get a
factor $n$ estimate on the value of $V^*$.
\IGNORE{In more detail, suppose we have a single checkpoint $\checkp_1$ at
$t=1/2$: we show an $O(\log n)$ approximation. By the arguments above,
the maximum value $M$ observed before $\checkp_1$ gives us a reasonable
estimate on $V^*$: i.e., $M \in \cointer{V^*, n V^*}$.}
So by  grouping elements into buckets if their values are within a factor 2, 
and randomly guessing the bucket that contains $v(g_2)$,
gives us an expected value of $\Omega(V^* /\log n)$.
To obtain the stronger factor of $\poly\log^* n$ in
Theorem~\ref{thm:single-value}, we now define $\log^* n$ checkpoints.
We  prove a ``good-or-learnable'' lemma that either  selecting a random
element from one of the current buckets has a good value, or we can learn 
a tighter estimate on $V^*$ and  reduce the 
number of buckets.

\medskip\textbf{Ordinal Single-Item Secretary (\S\ref{sec:ordinal_polylog}).}
We now want to maximize the probability of selecting an element whose
value is as large as the green second-max; this is  more challenging
than value-maximization since there is no notion of values for bucketing. Our approach
is crucially different. Indeed, we use the \emph{minimax principle} in
the ``hard'' direction: we give an algorithm that does well when the input
distribution  is \emph{known to the algorithm} (i.e., where the
algorithm can adapt to the distribution), and hence infer the existence
of a (randomized) algorithm that does well on worst-case inputs.


The known-distribution algorithm uses $O(\log n)$ intervals. Again, 
 we can guarantee there is a ``big''
(larger than $g_2$) red element within each interval, as
otherwise running Dynkin's  algorithm on a random interval with a 
small probability 
 already gives a ``good'' approximation. This implies that
 even if the algorithm ``learns'' a good estimate of the 
second-max just before the last interval, it will win. This is because
the algorithm can set this estimate of second-max as a threshold,
and it wins by selecting the 
big red element of the last interval. Finally, to learn a good estimate on
the second-max, we again prove a ``good-or-learnable'' lemma.
Its proof  crucially relies on the algorithm knowing
the arrival distribution, since that allows us to set ``median'' of the 
conditional distribution as a threshold.

\IGNORE{ To find the second-max, it
keeps a set of possible candidate elements for the second-max and
calculates the probability $p$ of their being the second-max (from the
known distribution). It then finds the center $c$ of the candidates
and randomly either (i) selects the first value larger than $c$, or
(ii) goes on to the next interval. The failure of subroutine (i)
forces the input distribution to send most of its probability either
above or below $c$, halving the number of candidates in subroutine
(ii). Differentiating between the ``below'' and ``above'' cases
requires us to calculate the mass $p$, which needs knowledge of the
distribution. After $O(\log n)$ such intervals, the number of
candidates drops from $O(n)$ down to $O(1)$. Designing a constructive
algorithm that achieves similar guarantees remains an interesting open
question.
}

\IGNORE{(In future work we would like to avoid this loss, and
hence have to combine the best of these algorithms, perhaps using ideas
like in~\cite{BlumB00}.)}

\medskip\textbf{Other Results (\S\ref{sec:multItem}).} 
We also give $O(\loglog n)^2$-competitive algorithms for Partition matroids, 
where the difficulty is that we cannot afford to lose the max-green element in
every part. Our idea is to only lose one element globally to
 get a very rough scale of the problem, and then exploit this scale in every part.
We also show why 
other potential benchmarks are too optimistic in \S\ref{sec:lower_bounds}, and
how to relax the assumption that $n$ is known in \S\ref{sec:relaxn}. 
See those sections for details.



\section{Knapsack Byzantine Secretary}
\label{sec:knapsack}


Consider a knapsack of size $K$; for all the results in this section we
assume that $K \geq \poly(\eps^{-1} \log n)$. Each arriving element $e$ has a size
$s(e)\in [0,1]$ and a value $v(e) \geq
0$. 
Let $g_{\max}, g_2, g_3, \ldots, $ denote the green elements $G$ with
decreasing values and let
\begin{gather}
  \textstyle V^* := \max \big\{{\sum_{e\in S} v(e) \mid S \subseteq G
    \setminus g_{\max}\text{ and } \sum_{e\in S} s(e)} \leq K \big\} \label{eq:3}
\end{gather}
be the value of the benchmark solution, i.e., the optimal solution
obtained after discarding the top green element $g_{\max}$. Let $S^*$ be
the set of green elements corresponding to this benchmark.

In \S\ref{subsec:mulItemsWithAssum} we give a $(1+\e)$-competitive algorithm
assuming we have a factor $\poly(n)$-approximation to the benchmark
value $V^*$. (In fact, given this $\poly(n)$-approximation, we can even
get within a $(1+\e)$-factor of the optimal set \emph{including}
$g_{\max}$.) Then in \S\ref{subsec:mulItemsRemAssum} we remove the
assumption, but now our value is only comparable to $V^*$ (i.e., 
excluding $g_{\max}$).


\medskip\textbf{Intuition.}
The main idea of the regular (non-robust)  multiple-secretary
problem (where we pick at most $K$ items) is to observe a small $\e$ fraction
of the input, estimate the value of the $K^{th}$ largest element, and
then select elements with value exceeding this estimate. (A better
algorithm revises these estimates over time, but let us ignore this
optimization for now.) In the Byzantine case, there may be an
arbitrary number of red items, so strategies that try to estimate some
statistics (like the $K^{th}$ largest) to use for the rest of the algorithm are 
susceptible to adversarial attacks.

For now, suppose we know that all items of $S^*$ have values in
$[1, n^c]$ for some constant $c$. The density of an item to be
its value divided by its size. We define $O(\log n)$ \emph{density levels}, where elements in the same level
have roughly the same density, so our algorithm does not distinguish
between them.  The main idea of our algorithm is to use
\emph{cascading budgets}.  At the beginning we allocate all our budget to picking
only the highest-density level items. If we find
that we are not picking items at a rate that is ``good enough'', we
re-allocate parts of our
budget to lower-density levels. 
The hope is that if the benchmark solution $S^*$
selects many elements from a certain density level, we can get a good
estimate of the ``right'' rate at which to pick up items from this
level. Moreover, since our budgets trickle from higher to lower
densities, the only way the adversary can confuse us is by giving
dense red elements, in which case we will select them.

Such an idea works only for the value levels that contain many
elements of $S^*$.  For the remaining value levels, we allocate
\emph{dedicated budgets} whose sole purpose is to pick a ``few''
elements from that level, irrespective of whether they are from
$S^*$. By making the total number of levels logarithmic, we argue that
the total amount of dedicated budgets is only $o(K)$, so it does not
affect the analysis for the cascading budget.

\subsection{An Algorithm Assuming a Polynomial Approximation}
\label{subsec:mulItemsWithAssum}
Suppose we know the benchmark $V^*$ to within a polynomial factor: by
rescaling, assume that $V^*$ lies in the range $[1 \ldots n^c]$ for some
constant $c$. This allows us to simplify the instance structure as
follows: Firstly, we can pick all elements of size at most $1/n$, since
the total space usage is at most $n\cdot 1/n = 1$ (recall,  
$K \geq \poly(\e^{-1} \log n)$). Next, we can ignore
all elements with value less than $1/n^2$ because their total value is
at most $1/n \ll 1 \le V^*$. If the \emph{density} of an element is
defined to be the ratio $v(e)/s(e)$, then all remaining elements have
density between $n^{c+1}$ and $n^{-2}$. The main result of this section
is the following:

\begin{lemma} \label{lem:assumingPolyApprox} If $V^*$ lies between $1$
  and $n^c$ for some constant $c$, each element has size at least
  $1/n$ and value at least $1/n^2$, and $K \geq \poly(\e^{-1} \log n)$,
  then there exists a   $(1+O(\e))$-competitive algorithm.
\end{lemma}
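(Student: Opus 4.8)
The plan is to run a \emph{cascading-budgets} algorithm and analyze it through a ``good-or-learnable'' lemma. Begin with the discretization. Split the density range $[n^{-2},n^{c+1}]$ into $L=O(\e^{-1}\log n)$ geometric \emph{levels}, level $\ell$ being the elements of density in $[(1+\e)^{\ell},(1+\e)^{\ell+1})$; set $d_\ell:=(1+\e)^\ell$, so a level-$\ell$ element has value in $[d_\ell s(e),(1+\e)d_\ell s(e))$. Split $[0,1]$ into $1/\delta$ equal \emph{intervals} $\I_1<\I_2<\cdots$ with $\delta$ a small enough multiple of $\e/L$ (so $L\delta\le\e$ and $1/\delta=\poly(\e^{-1}\log n)$); since greens arrive u.a.r., each green lands in a fixed interval with probability $\delta$, independently. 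For the benchmark set $S^*$ of \eqref{eq:3}, let $K^*_\ell$ and $V^*_\ell$ be the total size and value of $S^*$ in level $\ell$, so $\sum_\ell K^*_\ell\le K$, $\sum_\ell V^*_\ell=V^*$, and $V^*_\ell\le(1+\e)d_\ell K^*_\ell$. Call $\ell$ \emph{heavy} if $K^*_\ell\ge\theta:=\poly(\e^{-1}\log n)$ and \emph{light} otherwise; the light levels then carry total $S^*$-size at most $L\theta\le\e K$, which is the one place we need $K$ to be a large enough polynomial in $\e^{-1}\log n$.

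Light levels are handled by \emph{dedicated reserves}: reserve knapsack room $K^*_\ell+1$ for each light level $\ell$ and, for the whole run, greedily grab any arriving level-$\ell$ element that still fits in its reserve. Since every $S^*$-green arrives and these total $K^*_\ell$ in size, the reserve either fills up (to within one element) or absorbs all of $S^*$ in level $\ell$; either way we collect level-$\ell$ value at least $d_\ell K^*_\ell\ge V^*_\ell/(1+\e)$. These reserves total at most $2\e K$ and yield value at least $(1-\e)\sum_{\ell\text{ light}}V^*_\ell$.

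The remaining room $B:=(1-2\e)K$ (minus a $\poly(\e^{-1}\log n)$-size slush fund for integrality) runs the \emph{cascading budget} over the heavy levels $\ell_1>\ell_2>\cdots$ in decreasing density. Each interval we \emph{inject} fresh budget $\delta B$ at $\ell_1$; each level, during its interval, greedily grabs arriving elements of its own level until its budget runs out; and any budget a level leaves unspent in $\I_i$ \emph{cascades} to the next heavy level down, to be used in $\I_{i+1}$. Feasibility is immediate: total spend is at most $B$ plus the slush, hence at most $K$. The analysis conditions on the event $\calE$ that for every heavy level $\ell$ and interval $\I_i$ the size of $S^*$-greens of level $\ell$ landing in $\I_i$ lies within $(1\pm\e)\delta K^*_\ell$; since each such element has size at most $1$ and $K^*_\ell\ge\theta$, Bernstein plus a union bound over the $L/\delta$ (level, interval) pairs gives $\Pr[\calE]\ge1-1/\poly(n)$ once $\theta$ is a large enough polynomial. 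On $\calE$, the good-or-learnable lemma says: in each interval, either the algorithm already collects value $\Omega(\delta V^*)$ (up to the part of $V^*$ coming from light levels, which the reserves handle), or a constant fraction of the top level's budget goes unspent and cascades down. Iterating this, the injected budget propagates down one level per interval and \emph{settles}: from interval $\approx j$ onward level $\ell_j$ holds budget $\approx\delta\big(B-\sum_{i<j}K^*_{\ell_i}\big)$ each interval, which --- provided $\sum_{i\le j}K^*_{\ell_i}\le B$, i.e.\ $\ell_j$ is above the ``budget cutoff'' --- is at least $\delta K^*_{\ell_j}$, so $\ell_j$ fills at rate $\approx\delta K^*_{\ell_j}$ for all of the remaining $\ge(1-O(\e))/\delta$ intervals. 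Because $L\delta\le\e$, budget reaches each level while at least a $(1-\e)$-fraction of that level's $S^*$-greens is still in the future, hence still grabbable, so we capture level-$\ell$ size $\ge(1-O(\e))K^*_\ell$ for every above-cutoff heavy level. Adversarial reds cannot hurt: a level-$\ell$ red has value $\ge d_\ell s(e)$, so grabbing it in place of a level-$\ell$ green costs nothing, and reds parked at very high density only help.

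Combining: the above-cutoff heavy levels give value $\ge(1-O(\e))\sum d_\ell K^*_\ell\ge(1-O(\e))$ times their share of $V^*$; the below-cutoff heavy levels carry only $O(\e)V^*$ (they are the lowest-density $S^*$-mass, of total size $\le2\e K$, while the retained mass has higher density, so dropping them costs at most an $O(\e)$-fraction of $V^*$); and the light reserves recover $(1-\e)$ times the light share. Hence the algorithm's value is $\ge(1-O(\e))V^*$, which is the claimed $(1+O(\e))$-competitiveness, and this lemma is exactly what \S\ref{subsec:mulItemsRemAssum} builds on after removing the polynomial-approximation assumption. The step I expect to be the crux is the good-or-learnable/settling claim: showing the cascade reaches the water-filling allocation within $O(L)$ intervals and then stays there, while simultaneously handling (i) greens that pass a level before its budget arrives (absorbed by the slack $L\delta\le\e$), (ii) reds consuming budget at the ``wrong'' density (harmless, since value-per-size is monotone across levels), (iii) the below-cutoff levels and degenerate benchmarks where $S^*$ is tiny or dominated by a few large items (absorbed into the $O(\e)V^*$ slack), and (iv) the many $O(1)$-per-(level, interval) integrality losses, which total only $\poly(\e^{-1}\log n)\ll\e K$. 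The rest is concentration bookkeeping.
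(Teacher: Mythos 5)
Your plan is, at its core, the same route the paper takes in \S\ref{subsec:mulItemsWithAssum}: geometric density levels, a heavy/light split with small dedicated budgets absorbing the light part (cf.\ Claim~\ref{claim:lightMass}), a cascading budget with the ``either we spend at a good rate or the budget cascades down'' dichotomy, per-(heavy level, interval) concentration of the $S^*$ mass (cf.\ Claim~\ref{claim:NotFallMuch}), a prefix/majorization argument over densities to neutralize red elements eating budget at high density (cf.\ Claim~\ref{claim:thresholdCascading} and the integral in the proof of Lemma~\ref{lem:assumingPolyApprox}), and an $\e$-slack plus subsampling to restore feasibility. However, two points need repair before this is a proof. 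First, as written your algorithm uses information it cannot have: $S^*$ is a hidden set of green elements, so the algorithm knows neither which levels are heavy nor the sizes $K^*_\ell$, yet you size the light reserves as $K^*_\ell+1$ ``for each light level'', run the cascade ``over the heavy levels'', and inject fresh budget at $\ell_1$, the densest \emph{heavy} level. The paper sidesteps this by giving \emph{every} one of the $L$ levels a fixed dedicated budget $H$ (your $\theta$) and cascading through all levels; your scheme admits the same fix (reserve $\theta+1$ at every level and cascade/inject over all $L$ levels, costing only the same $L\delta\le\e$ traversal slack), but without it the algorithm is not well-defined online.

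Second, the step you yourself call the crux is both misstated and left unproved. The ``settling'' claim---that from interval $\approx j$ onward level $\ell_j$ holds budget $\approx\delta\bigl(B-\sum_{i<j}K^*_{\ell_i}\bigr)$---is false against an adversary who parks red mass at upper levels: that mass absorbs the flow and an arbitrary amount less (possibly nothing) reaches $\ell_j$. You correctly note the repair (budget spent at a denser level is worth at least as much per unit size), but once you invoke that, the statement to prove is no longer the settling claim; it is a prefix statement of the form ``for every density threshold, the total size the algorithm charges to the cascading budget at or above that density is at least $(1-O(\e))$ times the corresponding $S^*$-prefix (capped by the budget)'', which is exactly the paper's Claim~\ref{claim:thresholdCascading}, proved there by an explicit budget-conservation accounting (each infinitesimal unit of budget cascades at most $L$ times, so stranded/in-flight budget is $\le\delta KL\le\e K$) combined with Claim~\ref{claim:NotFallMuch} (budget never passes a heavy level without that level collecting $(1-\e)\delta s^*_\ell$ per interval). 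Your sketch gestures at all the right ingredients, including the water-filling intuition and the $O(\e)V^*$ write-off for the below-cutoff mass, but this accounting is where the actual work of the lemma lies and it is not carried out.
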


The idea of our algorithm is to partition the input into $1/\delta$
disjoint pieces ($\delta$ is a small parameter that will be chosen
later) and try to solve $1/\delta$ ``similar-looking''
instances of the knapsack problem, each with a knapsack of size
$\delta K$. 

\paragraph{The Algorithm.} Define 
\emph{checkpoints} $T_i := \delta i$ and corresponding intervals 
$\I_i := \ocinter{ \checkp_{i-1}, \checkp_i}$ for all
$i \in [1\ldots 1/\delta]$. Define $L:= (1 + \frac{c+3}{\e}\log n)$ \emph{density levels}
as follows: for each integer $\ell \in [0 \ldots L)$, 
density value $\rho_\ell := n^{c+1}/ (1+\e)^{\ell}$. Now density
level $\ell$ corresponds to all densities lying in the range
$(\rho_{\ell +1}, \rho_\ell]$. Note that densities decrease as $\ell$
increases. We later show that the setting of parameters
$K \geq \Omega\big(\frac{L^2 \log \nicefrac{L}{\e}}{\e^4}\big)$ and
$1 / \delta = \Omega(L / \eps)$ suffices.

We maintain two kinds of
\emph{budgets}:
\begin{itemize}
\item \emph{Cascading budgets}: We maintain a budget $B_{\ell,i}$ for each
  density level $\ell$ and each interval~$\I_i$.  For the first interval
  $\I_1$, define $B_{0,1}:=\delta K$, and $B_{\ell,1} := 0$ for
  $\ell>0$. For the subsequent intervals, we will set $B_{\ell,i}$ in an online
  way as described later.
\item \emph{Dedicated budgets}: We maintain a dedicated budget
  $\widetilde{B}_\ell := H$ for each density level
  $\ell$;  we will later show that setting $H := \Omega\big(\frac{L \log
    \nicefrac{L}{\e}}{\e^3}\big)$ suffices.
\end{itemize}


\begin{figure}[htb]
\begin{center}
\includegraphics[width=8cm]{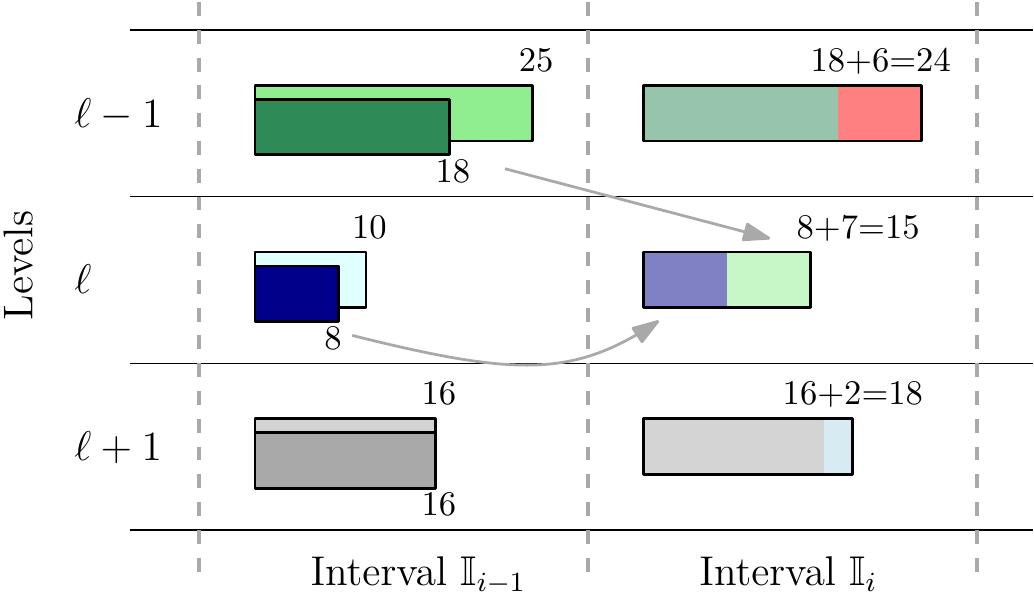}
    \caption{The bars for $\I_{i-1}$ show the budget, and (in darker colors) the amount consumed. The consumed budget (in dark blue) at level $\ell$ in interval $\I_{i-1}$ is
  restored at level $\ell$ in  $\I_{i}$; the unconsumed budget at
  level $\ell-1$ in interval $\I_{i-1}$ is then added to it.}
\label{fig:budgetMovement}
\end{center}
\end{figure}

Suppose we are in the interval $\I_i$, and the arriving element $e$ has density $v(e)/s(e)$ in level $\ell$.
\begin{enumerate}[topsep=0cm,itemsep=0cm]
\item If the remaining cascading budget $B_{\ell',i}$ of one of the density levels
  $\ell' \geq \ell$ is positive
  then select $e$. For the smallest $\ell' \geq \ell$ satisfying this
  condition, update $B_{\ell',i} \gets B_{\ell',i} - s(e)$.

\item Else, if the remaining dedicated budget $\widetilde{B}_{\ell}$ for level $\ell$
  is positive,
  select $e$ and update
  $\widetilde{B}_{\ell} \gets \widetilde{B}_{\ell} - s(e)$.
\end{enumerate}

Finally, for $i > 1$, we define the cascading budgets $B_{\ell,i}$ for
this interval $\I_i$ based on how much of the budgets at levels $\ell$
and $\ell-1$ are consumed in the previous interval $\I_{i-1}$ as follows. The amount of budget $B_{\ell -1,i-1}$ at level $\ell -1$ that is
not consumed in interval $\I_{i-1}$ is moved to level $\ell$ (which has lower
density), and the
budget that gets consumed in $\I_{i-1}$ is restored at level $\ell$ (see \Cref{fig:budgetMovement}).  
Formally, if $C_{\ell, i-1}$  is the amount of \emph{consumed} cascading budget for
level $\ell$  in  interval $\I_{i-1}$ and $R_{\ell,i-1}$ is the amount of \emph{remaining} budget at
level $\ell$ at the end of interval $i-1$ (i.e., the value of $B_{\ell,i-1}$ at
the  time corresponding
to the end of $\I_{i-1}$),  then we define the
initial budget for level $\ell$ at the start of interval $i$ to be
\[ B_{\ell,i} := C_{\ell,i-1} + R_{\ell-1, i-1}. \] 
It is easy to see that we can compute these cascading budgets online.

\emph{A Note about Budget Violations.}  The total budget, summed over
both categories and over all the intervals for the cascading budgets,
is $K' := ((1/\delta)\cdot \delta K) + L\, H > K$. If we use up all
this budget, we would violate the knapsack capacity. Moreover, we
select an element as long as the corresponding budget is positive,
and hence may exceed each budget by the size of the last
element. However, since $K' \le (1 + \eps)K$ and $K$ is much larger
than individual element sizes, the violations is a small fraction of
$K$, so we can \emph{reject} each element originally selected by the
algorithm with some small probability (e.g., $p = 2 \eps$) to
guarantee that the non-rejected selected elements have size at most
$K$ with high probability (i.e., at least $1 - 1/n^c$, for an
arbitrary constant $c > 0$). Henceforth, we will not worry about
violating the budget.

\paragraph{The Analysis.} Recall the benchmark $V^*$ from~(\ref{eq:3}),
and let $S^*$ be a set that achieves this value. All the elements have value
in $[1/n^2,n^c]$ and size at least $[1/n,1]$, so each element $e\in S^*$
has a corresponding density level $\ell(e) \in [0 \ldots L)$ based on its
density $v(e)/s(e)$. We need the notion of ``heavy'' and ``light''
levels. For any level $\ell\in [0 \ldots L)$, define $s^*_\ell$ to be the total
size of elements in $S^*$ with density level $\ell$: 
\begin{gather}
  \textstyle s^*_\ell := \sum_{e \in S^*: \ell(e) = \ell} s(e) . \label{eq:4}
\end{gather}
We say a level $\ell$ is \emph{heavy} if $s^*_\ell \geq H$, else level
$\ell$ is \emph{light}. We refer to (green) elements of $S^*$ at a
heavy (resp., light) level as heavy-green (resp., light-green)
elements. Note that elements not in $S^*$ (some are red and others
green) are left unclassified.
If $H$ is sufficiently large, a concentration-of-measure argument
using the uniformly random arrival times for green
items shows that 
each heavy level receives $(1 - \e)\delta H$ size
during each interval with high probability.  The idea of the proof is to argue that the
cascading budget never ``skips'' a heavy level, and hence we get
almost all the
value of the heavy levels.

To avoid double-counting the values of the
light levels, we separately account for the
algorithm's value attained (a) on light levels using the dedicated
budget or on light-green elements using the cascading budget, and 
(b) for elements that are not light-green (incl.\ red elements) using the
cascading budget. Note that (a) and (b) are disjoint, hence their
contributions can be added up. We show that (a) exceeds the value of
$S^*$ restricted to the light levels, while (b) exceeds $(1 - \eps)$ times the value
of of $S^*$ on the heavy levels. This is sufficient to prove our result. We start by
arguing the former claim.

\begin{claim}[Light-Green Elements]
  \label{claim:lightMass}
  The sum of values of elements selected using the dedicated budget at
  light levels, and of light-green elements selected using the cascading budget,
  is at least
  $ \sum_{\ell \text{ light}} \, s^*_{\ell} \cdot \rho_{\ell+1}$.
\end{claim}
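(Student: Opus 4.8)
The plan is to prove the bound one density level at a time: for each light level $\ell$, I will show that the value the algorithm collects ``at level $\ell$'' --- counting both elements taken by the dedicated budget $\widetilde{B}_\ell$ and light-green elements of $S^*$ with density level $\ell$ taken by a cascading budget --- is at least $s^*_\ell \cdot \rho_{\ell+1}$. Summing over all light levels then gives the claim: the per-level contributions involve pairwise disjoint sets of elements (an element accounted to level $\ell$ has density level $\ell$), and the two categories in the statement are themselves disjoint since an element is selected either by Step~1 (cascading) or by Step~2 (dedicated), never both. I will use throughout that any element $e$ with density level $\ell$ satisfies $v(e) \geq \rho_{\ell+1}\, s(e)$, directly from the definition of a density level.

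Fix a light level $\ell$, so $s^*_\ell < H$, where $H$ is the initial value of $\widetilde{B}_\ell$. I would split into two cases according to whether the dedicated budget for level $\ell$ is exhausted by the end of the run. \textbf{Case 1: $\widetilde{B}_\ell$ is exhausted.} Since $\widetilde{B}_\ell$ is decremented by $s(e)$ exactly when a level-$\ell$ element $e$ is selected via Step~2, and each element has size at most $1$, the total size of level-$\ell$ elements selected via the dedicated budget is at least $H$ (the final selection drives the budget from a positive value to a non-positive value). Each such element contributes value at least $\rho_{\ell+1}$ times its size, so the dedicated budget alone collects value at least $\rho_{\ell+1} H > \rho_{\ell+1} s^*_\ell$ at level $\ell$, which already suffices.

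\textbf{Case 2: $\widetilde{B}_\ell > 0$ at the end of the run.} Since $\widetilde{B}_\ell$ is nonincreasing over time, it was positive throughout the run. Hence whenever a level-$\ell$ element arrives it is selected: either Step~1 finds a positive cascading budget $B_{\ell',i}$ with $\ell' \geq \ell$ and selects it, or else Step~2 applies because $\widetilde{B}_\ell > 0$. In particular every light-green element of $S^*$ with density level $\ell$ is selected --- those taken in Step~1 are counted among ``light-green elements selected using the cascading budget'', those taken in Step~2 among ``elements selected using the dedicated budget at light levels''. Therefore the total value attributed to level $\ell$ is at least $\sum_{e \in S^*,\ \ell(e)=\ell} v(e) \geq \rho_{\ell+1} \sum_{e \in S^*,\ \ell(e)=\ell} s(e) = \rho_{\ell+1}\, s^*_\ell$, as needed. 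Summing the per-level bound over all light $\ell$ finishes the proof.

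I do not expect a serious obstacle. The only points needing care are (i) verifying that the ``dedicated at level $\ell$'' and ``light-green via cascading at level $\ell$'' contributions are never double-counted, either with each other or across levels --- which follows because Step~1 and Step~2 are mutually exclusive and each element has a single density level --- and (ii) the monotonicity argument that a dedicated budget positive at the end was positive throughout, which is exactly what guarantees that no light-green element of a light level is ever dropped. The small probabilistic rejection used to handle budget overflow is orthogonal and is applied afterwards.
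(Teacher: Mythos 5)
Your proof is correct and follows essentially the same argument as the paper's: a light-green element at level $\ell$ is only dropped when the dedicated budget $\widetilde{B}_\ell$ is exhausted, in which case the dedicated selections at that level already have total size at least $H > s^*_\ell$ and hence value at least $\rho_{\ell+1}\, s^*_\ell$; otherwise all light-green $S^*$ elements at level $\ell$ are selected. Your write-up just makes the paper's one-paragraph case analysis (and the density lower bound $v(e) \geq \rho_{\ell+1} s(e)$ plus the disjointness of the two accounting categories) more explicit.
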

\begin{proof}
  Our algorithm attempts to select each light-green element in $S^*$ 
   using the cascading budget, or failing that, by the dedicated budget at its density level. 
  The only case in which a light-green element $e \in S^*$ is
  dropped is if all the dedicated budget at its level $\ell(e)$ has been
  exhausted. But this means the algorithm has already collected at least
  $s^*_{\ell} \cdot \rho_{\ell+1}$ from the dedicated budget at this
  light level $\ell$.
\end{proof}

Next, to prove that (b) exceeds the value
on heavy levels (up to $1 - \eps$), we need the following property of the 
 cascading budget on the heavy levels. 


\begin{claim} \label{claim:NotFallMuch} For all intervals $\I_i$ and
  levels $\ell$, w.h.p.\ we have that if $B_{\ell,i} > 0$ then
  every heavy level $\ell' < \ell$ satisfies
  $B_{\ell',i}\geq \delta s^*_{\ell'} \cdot (1-\e)$.
\end{claim}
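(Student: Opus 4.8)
The plan is to condition on a single concentration event that controls how the benchmark mass of each heavy level is spread across the intervals, and then prove the statement by induction on the interval index $i$, using the cascade update $B_{\ell,i} = C_{\ell,i-1} + R_{\ell-1,i-1}$ to pass the needed information from $\I_{i-1}$ to $\I_i$. (I read $B_{\ell,i}$ in the claim as the \emph{initial} budget of $\I_i$, matching its formal definition above and the form used in the subsequent value accounting.)

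\textbf{The concentration event.} Let $\calE$ be the event that for every heavy level $\ell'$ and every interval $\I_i$, the total size of elements of $S^*$ of density level $\ell'$ arriving during $\I_i$ is at least $(1-\e)\delta s^*_{\ell'}$. Because the green arrival times are independent and uniform, this size is a sum of independent $[0,1]$-valued variables with mean $\delta s^*_{\ell'} \geq \delta H$; a multiplicative Chernoff bound together with a union bound over the $\leq L$ heavy levels and $1/\delta$ intervals then gives $\Pr[\calE] \geq 1 - o(1)$, provided $\e^2\delta H$ beats $\log(L/\delta)$ --- exactly what the stated choices $H = \Omega(L\log(L/\e)/\e^3)$ and $1/\delta = \Omega(L/\e)$ ensure. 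This is the only place randomness enters; condition on $\calE$ and argue deterministically from here on.

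\textbf{The induction.} The base case $i = 1$ is vacuous: only level $0$ has positive initial budget, and there is no level below it. For the step, fix a heavy level $\ell'$ and a level $\ell > \ell'$ with $B_{\ell,i} > 0$; since $B_{\ell',i} = C_{\ell',i-1} + R_{\ell'-1,i-1} \geq C_{\ell',i-1}$, it suffices to show that at least $(1-\e)\delta s^*_{\ell'}$ units of cascading budget are consumed at level $\ell'$ during $\I_{i-1}$, i.e.\ $C_{\ell',i-1} \geq (1-\e)\delta s^*_{\ell'}$. Case (a): the (dynamic) level-$\ell'$ cascading budget stays positive throughout $\I_{i-1}$. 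Then every element of $S^*$ of density level $\ell'$ arriving in $\I_{i-1}$ is charged to the cascading budget at level $\ell'$ (it is the smallest index $\geq \ell'$ whose budget is positive), so by $\calE$ we get $C_{\ell',i-1} \geq (1-\e)\delta s^*_{\ell'}$. Case (b): the level-$\ell'$ cascading budget reaches $0$ at some point of $\I_{i-1}$; since budgets only decrease within an interval it then stays $0$, so $R_{\ell',i-1} = 0$ and $C_{\ell',i-1}$ equals the initial level-$\ell'$ budget $B_{\ell',i-1}$ (up to the at-most-unit overshoot of the last element). It remains to show $B_{\ell',i-1} \geq (1-\e)\delta s^*_{\ell'}$, which by the induction hypothesis for $\I_{i-1}$ follows once some level strictly above $\ell'$ had positive initial budget in $\I_{i-1}$; and such a level is produced from $B_{\ell,i} = C_{\ell,i-1} + R_{\ell-1,i-1} > 0$: if $C_{\ell,i-1} > 0$ then level $\ell$'s budget was positive somewhere in $\I_{i-1}$, so $B_{\ell,i-1} > 0$ and $\ell > \ell'$ works; if instead $R_{\ell-1,i-1} > 0$ then $B_{\ell-1,i-1} > 0$, and here $\ell-1 \neq \ell'$ (else $R_{\ell',i-1} > 0$, contradicting case (b)), so $\ell - 1 > \ell'$ works. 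Either way the induction hypothesis gives the bound, completing the step.

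\textbf{Loose ends and the hard part.} The only remaining bookkeeping is that every threshold satisfies $(1-\e)\delta s^*_{\ell'} \geq (1-\e)\delta H \gg 1$ under the chosen parameters, so the per-level unit overshoots are negligible and can be absorbed by replacing $\e$ with $2\e$ (or folded into the global random-rejection step already described). I expect case (b) of the inductive step to be the main obstacle: one has to read off the $C/R$ decomposition of two consecutive intervals to certify that a heavy level which ran dry mid-interval must nonetheless have \emph{started} that interval above the threshold. Getting the concentration parameters ($H$ and $\delta$) to survive the union bound is the other place that needs care; case (a) and the base case are routine by comparison.
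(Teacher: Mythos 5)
Your proof is correct and follows essentially the same route as the paper: condition on the same concentration event (each heavy level receives at least $(1-\e)\delta s^*_{\ell'}$ mass of $S^*$ per interval), then argue that budget can only cascade past a heavy level $\ell'$ if that level's budget was already at least $(1-\e)\delta s^*_{\ell'}$. Your explicit induction on intervals with the case split (budget stays positive vs.\ runs dry, using $B_{\ell,i}=C_{\ell,i-1}+R_{\ell-1,i-1}$) is a more careful write-up of the paper's terser "the budget would never move to level $\ell'+1$" argument, not a different proof.
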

\begin{proof}
  For a heavy level $\ell'$, the expected size of heavy-green elements from
  $S^*$ falling in any interval is $\delta s^*_{\ell'} \ge \delta H$. If
  $\delta H \geq \frac{\Omega(\log (L/(\delta\e)))}{\e^2}$ then with probability $1-\e$
  we get that for each interval $i$ and each heavy level $\ell'$, the total size
  of elements from $S^*$ lying at level $\ell'$ and arriving in interval
  $\I_i$ is at least $\delta s^*_{\ell'} \cdot (1-\e)$, by a
  concentration bound. Henceforth, let us condition on this event
  happening for all heavy levels $\ell'$. 
  
  Now if the cascading budget $B_{\ell,i}>0$, this budget must
  have gradually come from levels  $\ell' < \ell$ of higher densities. But this means
  $B_{\ell',i}\geq \delta s^*_{\ell'} \cdot (1-\e)$ because otherwise
  the cascading budget would never move to level $\ell'+1$, since level $\ell'$
  receives at least $\delta s^*_{\ell'} \cdot (1-\e)$ size of elements
  in every interval.
\end{proof}

For a level $\tau$ let $h^*_{\cointer{0, \tau}} := \sum_{\ell' \text{heavy},~\ell' < \tau } s^*_{\ell'}$ denote the total size of items in $S^*$ restricted to heavy levels from $\cointer{0, \tau}$. Similarly, let $h^{\calA}_{\cointer{0, \tau}}$  be the total size of non-light-green items collected by the algorithm in levels $\cointer{0, \tau}$ and charged against the cascading budget.

\begin{claim}\label{claim:thresholdCascading}
  For all levels $\tau$ we have that $h^{\calA}_{\cointer{0, \tau}} \ge (1 - O(\eps)) h^*_{\cointer{0, \tau}}$.
\end{claim}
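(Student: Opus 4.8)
The plan is to track, for each heavy level, how the cascading budget is generated, consumed, and passed on, and to show inductively in $\tau$ that the total non-light-green size charged to the cascading budget in levels $\cointer{0,\tau}$ keeps pace with $h^*_{\cointer{0,\tau}}$. Fix an interval $\I_i$; I will argue the bound interval-by-interval and then sum (since both $h^{\calA}$ and $h^*$ decompose additively over intervals, and the high-probability event of Claim~\ref{claim:NotFallMuch} is conditioned on throughout). The key accounting identity is the budget recurrence $B_{\ell,i} := C_{\ell,i-1} + R_{\ell-1,i-1}$: the budget entering level $\ell$ in interval $\I_i$ is exactly what was consumed at level $\ell$ plus what was left over at level $\ell-1$ in the previous interval. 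Summed over $\ell < \tau$, a telescoping argument shows that the total budget made available to levels $\cointer{0,\tau}$ over the whole run equals the initial injection $\delta K$ at level $0$, plus whatever ``leaks'' past level $\tau$ into level $\tau$ and beyond; conversely, every unit of budget that is \emph{not} consumed in levels $\cointer{0,\tau}$ must cascade down past level $\tau$.

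The heart of the argument is this: by Claim~\ref{claim:NotFallMuch}, if in interval $\I_i$ the cascading budget ever reaches a level $\ell \ge \tau$ with $B_{\ell,i}>0$, then every heavy level $\ell' < \tau$ had $B_{\ell',i} \ge \delta s^*_{\ell'}(1-\e)$; and since each such heavy level receives at least $\delta s^*_{\ell'}(1-\e)$ worth of $S^*$-elements in $\I_i$ (the concentration event), the algorithm \emph{does} spend at least that much cascading budget at level $\ell'$ during $\I_i$ — either on those very elements, or, if an element's own budget row is exhausted, on an earlier (higher-density) level's budget, which only helps the cumulative count for $\cointer{0,\tau}$. So in any interval where budget leaks past level $\tau$, the cumulative non-light-green cascading size collected in $\cointer{0,\tau}$ is at least $(1-\e)\sum_{\ell'<\tau,\,\ell'\text{ heavy}} \delta s^*_{\ell'} = (1-\e)\delta\, h^*_{\cointer{0,\tau}}$ for that interval. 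In the complementary case — no budget leaks past $\tau$ in interval $\I_i$ — all of the $\delta K$-scale budget that entered levels $\cointer{0,\tau}$ must have been consumed there, and since $\delta K$ is (by the parameter choice $1/\delta = \Omega(L/\e)$ and $K \ge \poly(\e^{-1}\log n)$) far larger than $\delta\, h^*_{\cointer{0,\tau}} \le \delta K$, the consumed amount again dominates $(1-\e)\delta\,h^*_{\cointer{0,\tau}}$; the only subtlety is subtracting off the part of this consumption that went to light-green elements, which is at most $\sum_{\ell\text{ light}}\delta s^*_\ell$ per interval and is absorbed into the $O(\e)$ slack because light levels contribute $o(K)$ total size. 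Summing the two cases over all $1/\delta$ intervals gives $h^{\calA}_{\cointer{0,\tau}} \ge (1-O(\e))\, h^*_{\cointer{0,\tau}}$.

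The main obstacle I anticipate is the bookkeeping in the ``leak'' case: making precise that when a light-green element's own density-level budget $B_{\ell(e),i}$ is exhausted but it gets charged to a higher level $\ell' < \ell(e)$, this does not create a deficit at level $\ell'$ that violates Claim~\ref{claim:NotFallMuch}'s hypothesis later in the same interval. I would handle this by noting that Claim~\ref{claim:NotFallMuch} already accounts for \emph{all} consumption at level $\ell'$ (it only asserts $B_{\ell',i}$ stays above $\delta s^*_{\ell'}(1-\e)$ as long as budget has moved past $\ell'$, i.e.\ as long as there is genuine cascading), and that any such ``upward'' charge of a lower-density element only increases $h^{\calA}_{\cointer{0,\tau}}$ without affecting $h^*_{\cointer{0,\tau}}$, so it can only help. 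A secondary obstacle is cleanly separating ``non-light-green'' consumption from light-green consumption within the cascading budget so that Claim~\ref{claim:lightMass} and this claim together do not double-count; I would make the charging rule explicit (each selected element is charged to exactly one of: dedicated budget, cascading-budget-as-light-green, or cascading-budget-as-non-light-green) so that the three buckets partition the selected set, as already asserted in the paragraph preceding the claim.
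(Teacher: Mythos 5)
Your overall split of the intervals is the same as the paper's: intervals in which the cascading budget has already reached some level $\ell \ge \tau$ (where \Cref{claim:NotFallMuch} together with the concentration event yields roughly $(1-\e)\delta\, h^*_{\cointer{0,\tau}}$ of non-light-green cascading consumption per interval), versus the earlier intervals in which it has not. The genuine gap is in your complementary case. The assertion that when no budget leaks past $\tau$ in $\I_i$, ``all of the $\delta K$-scale budget that entered levels $\cointer{0,\tau}$ must have been consumed there'' is false: unconsumed budget need not leak past $\tau$ within that interval; it simply remains and is moved down by \emph{one} level at the next checkpoint. Concretely, if no element of density level $0$ arrives during $\I_1$, the cascading consumption in $\I_1$ is zero (rows $\ell\ge 1$ are empty, and row $0$ can only serve level-$0$ elements), yet for any $\tau\ge 2$ no budget has passed level $\tau$, so $\I_1$ falls into your complementary case while contributing nothing; ``available'' does not imply ``consumed''. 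Hence your per-interval bound $(1-\e)\delta\, h^*_{\cointer{0,\tau}}$ fails in such intervals, and the final summation does not go through as written.

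The paper closes exactly this hole by an aggregate, amortized argument over the early group $\I_1,\dots,\I_t$ (with $t$ the first index for which $B_{\tau,t+1}>0$) instead of a per-interval one: writing $C_i+R_i+Q_i=\delta K$ for each such interval, it bounds the total idle budget by $\sum_{i\le t} R_i\le \delta K L$ --- an infinitesimal unit of budget that goes unconsumed in an interval descends one level, and there are only $L$ levels, so each unit can appear in the $R_i$'s at most $L$ times --- and the light-green consumption by $\sum_i Q_i\le LH\le \eps K$, giving $\sum_{i\le t}C_i\ge (t\delta-O(\eps))K\ge (t\delta-O(\eps))\,h^*_{\cointer{0,\tau}}$ via $h^*_{\cointer{0,\tau}}\le K$; adding the $(1-t\delta)(1-\eps)\,h^*_{\cointer{0,\tau}}$ from the later group finishes the proof. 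Your step (2) gestures at a telescoping identity, but it is never converted into the bound $\sum_i R_i\le \delta K L$, and step (4) substitutes the incorrect ``everything available is consumed'' claim, so the key amortization idea is missing. (A secondary slip: an element whose own row is exhausted is charged to a \emph{lower}-density row $\ell'\ge \ell(e)$, never to a higher-density one; this does not hurt your counting, since $h^{\calA}$ is indexed by the element's own level, but the bookkeeping you describe should be stated accordingly.)
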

\begin{proof}
  Let $t$ be the smallest index of an interval where
  $B_{\tau, t+1} > 0$. We partition the intervals into two groups:
  $\I_1, \I_2, \ldots, \I_{t}$ and $\I_{t+1}, \ldots,
  \I_{1/\delta}$. From \Cref{claim:NotFallMuch} we can conclude that
  for \emph{each interval} in the latter group, the algorithm collects
  a total size of at least
  $\sum_{\ell\text{ heavy},~ \ell < \tau} (1 - \eps) \, \delta s^*_{\ell}
  = (1 - \eps) \delta h^*_{\cointer{0, \tau}}$ from levels
  $\cointer{0, \tau}$. Hence the total contribution over all
  the intervals of the latter group is
  $(\frac{1}{\delta} - t) (1 - \eps)\, \delta\, h^*_{\cointer{0, \tau}} =
  (1 - t \delta)(1 - \eps)\, h^*_{\cointer{0, \tau}}$.

  We now consider the group $\I_1, \ldots, \I_{t}$. Let $C_i, R_i$ and
  $Q_i$ be the total size of the consumed non-light-green, remaining
  budget, and consumed light-green elements charged to the cascading
  budget in interval $i$ with levels $\cointer{0, \tau}$. By
  definition, the total size of all light-green elements is at most
  $LH$, giving $\sum_{i=1}^{t} Q_i \le L H$. Furthermore, since the
  full cascading budget is contained in $\cointer{0, \tau}$, the
  algorithm construction guarantees $C_i + R_i + Q_i = \delta
  K$. Finally, we argue that $\sum_{i=1}^t R_i \le \delta K L$:
  consider an infinitesimally small part $dB$ of the budget. At the
  end of each interval, $dB$ is either used to consume an element or
  it ``moves'' from level $\ell$ to $\ell + 1$, which can happen at
  most $L$ times. Since the total amount of budget per interval is
  $\int dB = \delta K$, the total sum is at most $\delta K L$.

  This lower-bounds the total size contribution of the group $\I_{t+1}, \ldots, \I_{1/\delta}$.
  \begin{align*}
    \sum_{i=1}^t C_i & \quad=\quad t \delta K - \sum_{i=1}^t R_i - \sum_{i=1}^t Q_i \quad \ge \quad t \delta K - \delta K L - L H  \quad \ge \quad (t \delta - \eps - \eps) K \\
               & \quad\ge\quad (t\delta - O(\eps)) \, h^*_{\cointer{0, \tau}},
  \end{align*}
  where we use  $K \ge h^*_{\cointer{0, \tau}}$ (since the total size of elements in $S^*$ is at most $K$), $\delta L \le \eps$, and $LH \le \eps K$.
  Combining contributions from both groups we get:
  \begin{align*}
    (1 - t \delta) (1 - \eps)\, h^*_{\cointer{0, \tau}}  + (t\delta - O(\eps))\, h^*_{\cointer{0, \tau}} & = \left[ (1 - \eps) - t \delta (1 - \eps) + t \delta - O(\eps) \right]\, h^*_{\cointer{0, \tau}} \\
               & = \left( 1 - O(\eps) \right)\, h^*_{\cointer{0, \tau}} .
  \end{align*}
  Hence, we conclude that $h^{\calA}_{\cointer{0, \tau}} \ge (1 - O(\eps))\, h^*_{\cointer{0, \tau}}$.
\end{proof}

Using the above claims we now prove \Cref{lem:assumingPolyApprox}.
\begin{proof}[Proof of Lemma~\ref{lem:assumingPolyApprox}]
  Our fine-grained discretization of densities gives us that
  \begin{gather}
    V^* \le (1 + \eps)\bigg( \sum_{\ell\ \text{light}} s^*_{\ell} \rho_{\ell+1} + \sum_{\ell\ \text{heavy}} s^*_{\ell} \rho_{\ell+1} \bigg) .
  \end{gather}
  From \Cref{claim:lightMass}, our algorithm accrues value at least
  $\sum_{\ell \text{ light}} s^*_{\ell} \cdot \rho_{\ell+1}$ due to
  the elements from light levels that were charged to the dedicated
  budget and light-green elements charged to the cascading budget. It
  is therefore sufficient to prove a similar bound on the value accrued on non-light-green elements charged to the cascading budget with respect to $\sum_{\ell \text{ heavy}} s^*_{\ell} \cdot \rho_{\ell+1}$, which we deduce from \Cref{claim:thresholdCascading}.

  Let $\ell'(x)$ be defined as the largest level $\ell'$ where $\rho_{\ell'} \ge x$, then
  \begin{align*}
    \sum_{\ell\ \text{heavy}} s^*_{\ell} \rho_{\ell+1} \quad = \quad  \int_0^{\infty} \sum_{\ell\text{ heavy, } \rho_{\ell+1} \ge x} s^*_{\ell} \ dx &\quad =\quad  \int_0^{\infty} h^*_{\cointer{0, \ell'(x)}} \ dx \\
   &\quad  \le \quad  (1 + O(\eps)) \int_0^{\infty} h^{\calA}_{\cointer{0, \ell'(x)}} \ dx,
  \end{align*}
  where the last inequality uses \Cref{claim:thresholdCascading}.
  Notice the right-hand side is the value of non-light-green elements charged against the cascading budget. Thus, this part of the algorithm's value exceeds (up to $1 - O(\e)$) the value of heavy levels of $S^*$, finalizing our proof.
\end{proof}

\subsection{An Algorithm for the General Case}
\label{subsec:mulItemsRemAssum}

To 
remove the
assumption that we know a polynomial approximation to $V^*$, the idea is
to ignore the first $\e$ fraction of the arrivals, and use the maximum
value in
this interval to get a $\poly(n)$ approximation to the benchmark. This
strategy is easily seen to work if there are $\Omega(1/\e)$ elements with
a non-negligible contribution to $V^*$. For the other case where most
of the value in $V^*$ comes from a small number of elements, we 
separately reserve some of the budget, and run a collection of
algorithms to catch these elements when they arrive.

Formally,  we define $(1/\e)$ 
\emph{checkpoints} 
$T_i := i\e$ 
and corresponding intervals 
$ \I_i := \ocinter{ \checkp_{i-1}, \checkp_i}$ for all 
$i \in [1\ldots (1/\e)]$. We run the following three algorithms in
parallel, and select the union of elements selected by them.

\begin{enumerate}[label=(\roman*),noitemsep]
\item Select one of the $n$ elements uniformly at random; i.e.,  run Select-Random-Element from \S\ref{sec:two-useful}. \label{alg:knapsackFirst}

\item Ignore elements that arrive in times $[0,\e)$, and let $\hat{v}$
  denote the highest of their values. Run the algorithm from
  \S\ref{subsec:mulItemsWithAssum} during time $[1/\e,1]$, assuming that 
  $V^* \in [\hat{v}/n^2, \hat{v} \, n^2]$. \label{alg:knapsackSecond}

\item At every checkpoint $\checkp_i$, consider the largest value
  $\hat{v}_i$ seen until then. Define $L:= \frac{10}{\e}\log n$
  \emph{value levels} as follows: for $\ell \in (-L/2 \ldots L/2)$ and
  $\tau_\ell :=\hat{v}_i/ (1+\e)^{\ell}$, define level $\ell(i)$ as
  corresponding to values in $(\tau_{\ell(i) +1}, \tau_{\ell(i)}]$. For
  each of these levels $\ell$, keep $D := \frac{10}{\e} \log \frac{1}{\e}$ dedicated
  slots, and select any element having this value level and arriving
  after $\checkp_i$, as long as there is an empty slot in its level. 
  \label{alg:knapsackThird}
\end{enumerate}
The total space used by the there algorithms is at most
\[ 1 + K + (1/\e) \cdot L \cdot D \quad = \quad K + O\Big(\frac{\log n
  \log \nicefrac1\e}{\e^3} \Big)  \quad \leq \quad (1+\e) K,
\]
where the last inequality holds because
$K \geq \Omega\big(\frac{L^2 \log (L/\e)}{\e^4}\big)$ from the size
  condition from \S\ref{subsec:mulItemsWithAssum}. We can now
fit this into our knapsack of size $K$ w.h.p. by sub-sampling each
selected element with probability $(1-O(\e))$. 
To complete the proof of \Cref{thmKnap}, we need to show that we get
expected value $(1-O(\e)) V^*$.

\begin{proof}[Proof of \Cref{thmKnap}]
  The proof considers several cases. Firstly, if there is any single
  element with value more than $n\cdot V^*$, then the algorithm in
  Step~\ref{alg:knapsackFirst} will select it with probability $1/n$,
  proving the claim. Hence, all elements have value at most $n V^*$.

  Now suppose at least $D=\frac{10}{\e} \log \frac{1}{\e}$ elements in
    $S^*$ (recall $S^*$ has    total value
  $V^*$) have individual values at least $V^*/n^2$. In this case, at
  least one of these $D$ elements arrives in the interval $[0,\e)$ with
  probability $1-\e$, and that element gives us the desired
  $n^2$-approximation to $V^*$. Moreover, the expected value of elements
  in $S^*$ arriving in times $[\e,1]$ is at least $(1-O(\e))\,V^*$, even
  conditioning on one of them arriving in $[0,\e)$.

  Finally, consider the case where $D' \leq D$ elements of $S^*$ have
  value more than $V^*/n^2$. The idea of the algorithm in
  Step~\ref{alg:knapsackThird} is to use the earliest arriving of these
  $D'$ elements, or the element $g_{\max}$, to get a rough estimate of
  $V^*$, and from thereon use the dedicated slots to select the remaining
  elements.  Indeed, if  the first of these elements arrive in interval
  $\I_i$, the threshold $\hat{v}_i$ lies in $[V^*/n^2, nV^*]$ (since we
  did not satisfy the first case above). Now the value levels and
  dedicated budgets set up at the end of this interval would pick the
  rest of these $D'$ elements---except those that fall in this same
  interval $\I_i$. We argue that each of the remaining $D'$ elements has
  at least $(1-\eps)$ probability  of not being in $\I_i$, 
  which gives us an expected value of
  $(1-O(\e)) V^*$ in this case as well. This  is true
  because the expected number of these $1+D'$ elements 
  (including $g_{\max}$) that land in any interval that contains
  at least one of them 
   is at most $1+\eps D'$ (even after we condition on the 
  first arrival, each remaining element has  $\eps$ chance
  of falling in this interval). Since any such interval has the same
  chance of being the first interval $\I_i$, and these  $1+D'$ elements
  have the same distribution, the expected number of additional 
  elements  in  $\I_i$ is $\eps D'$.
 %
  %
\end{proof}

This completes the proof of \Cref{thmKnap} for the knapsack case, where
the size $K$ of the knapsack is large enough compared to the largest
size of any element. This generalizes the multiple-secretary problem,
where all items have
unit size. We have not optimized the
value of $K$ that suffices, opting for modularity and simplicity. It can
certainly be improved further, though getting an algorithm that works
under the assumption that $K \geq O(1/\e^2)$, like in the non-robust case, may require new ideas. 



\newcommand{\ts}{\textstyle}

\section{Single-Item Ordinal Case}
\label{sec:ordinal_polylog}

In this section we give a proof of Theorem~\ref{thmUbdOrd}, showing
that there exists an algorithm which selects an element with value no
smaller than $g_2$, with probability at least $\Omega(1/\log^2
n)$. Our proof for this theorem is non-constructive and uses (the hard
direction of) the Minimax Theorem; hence we can currently only show
the \emph{existence} of this algorithm, and not give a compact
description for it. Our main technical lemma furnishes an algorithm
which, given a known (general) probability distribution $\mathcal{B}$
over input instances, selects a big element with probability at least
$\Omega(1/\log^2n)$. Consequently, we use the Minimax Lemma to deduce
that the known-distribution case is equivalent to the worst-case input
setting and recover the analogous result.

Since our algorithms crucially argue about the input distribution
$\mathcal{B}$ and rely on the Minimax, we need to formally define
these terms and establish notation connecting the Byzantine secretary
problem with two-player zero-sum games. Suppose we want to maximize
the probability of selecting a big element and to this end we choose
an algorithm $\calA$, while the adversary chooses a distribution $\calB$ over
the input instances and there is an (infinite) payoff matrix $K$
prescribing the outcomes. Its rows are indexed by different
algorithms, and columns by input instances. Formally, a \emph{``pure''
  input instance} is represented as an $|R|$-tuple of numbers in
$\Int$, representing the arrival times $t_e$ of the red elements; and
a permutation $\pi \in S_n$ over $U$ representing the total ordering
of all values in $U = R \cup G$. Recall that the green elements $G$
choose their arrival times independently and uniformly at random in
$\Int$, hence their $t_e$'s are not part of the input. A
\emph{``mixed'' input instance} is a probability distribution
$\mathcal{B}$ over pure instances $\Int^{|R|} \times S_n$.

While we do not need the full formal specifications of algorithms, we
will mention that a ``mixed'' algorithm $\calA$ is a distribution over
deterministic algorithms. An algorithm $\calA$ on an input instance
$I$ gets a payoff of $K(\calA, I) \defeq \Pr[v(\calA) \ge v(g_2) \mid I]$ where the
probability is taken over the assignment of random arrival times to
elements in $G$ and the distribution of deterministic algorithms
$\calA$. The following Lemma states that for each $\mathcal{B}$ there is an algorithm (that depends on $\mathcal{B}$) that selects a big elements with probability $\Omega(1/\log^2 n)$. We prove the result in~\S\ref{sec:ordinal-algo} and~\S\ref{sec:anal-ordinal}.

\begin{restatable}[Known Distribution Ordinal Single-Item Algorithm]{lemma}{lemDual}
  \label{lemDual}
  Given a distribution over input instances $\mathcal{B}$, there exists an
  algorithm $\calA$ that has an expected payoff of $\Omega(1/\log^2 n)$.
\end{restatable}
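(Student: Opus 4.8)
The plan is to design an algorithm $\calA$ that knows $\calB$ and achieves payoff $\Omega(1/\log^2 n)$. First I would set down $O(\log n)$ checkpoints $0 = \checkp_0 < \checkp_1 < \dots < \checkp_m = 1$, say with $m = \Theta(\log n)$, dividing $[0,1]$ into intervals $\I_1, \dots, \I_m$. With constant probability the algorithm will simply run Dynkin's $1/e$-algorithm on a single uniformly random interval $\I_j$ (using that interval as its own miniature time-horizon); otherwise it runs the main ``learning'' routine described below. The first dichotomy to establish is a ``good-or-structured'' statement: \emph{either} some interval $\I_j$ is such that, conditioned on the relevant information, running Dynkin on $\I_j$ selects a big element (value $\geq v(g_2)$) with probability $\Omega(1/\log n)$ --- in which case the random-interval strategy already wins with probability $\Omega(1/\log^2 n)$ --- \emph{or} for every interval $\I_j$ there is (with good probability) a big red element landing in $\I_j$. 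The point of the second alternative is that big red elements in the last interval act as a ``safety net'': if by the start of $\I_m$ the algorithm has learned a threshold $\tau$ with $v(g_2) \leq \tau < v(g_{\max})$, it can set that threshold during $\I_m$ and is guaranteed to catch \emph{something} of value $\geq v(g_2)$ (either the big red, or $g_{\max}$ itself, or another big element).

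**The learning routine via a binary-search ``good-or-learnable'' lemma.**

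The core of the argument is to show that, conditioned on being in the structured case, the algorithm can narrow down a candidate set for $v(g_2)$ from $n$ possible values to $O(1)$ over the $O(\log n)$ intervals, while maintaining a constant ``budget'' of failure probability. I would maintain a set $\mathcal{C}_j$ of candidate ranks/values for the second-max, together with the posterior probability (computable since $\calB$ is known and all arrivals so far have been observed) that each is the true value of $g_2$. At interval $\I_j$, let $c$ be a weighted median of $\mathcal{C}_j$ under the posterior. The algorithm flips a coin: with probability $1/2$ it ``commits'' and selects the first element in $\I_j$ of value $\geq c$; with probability $1/2$ it ``passes'' and moves to $\I_{j+1}$, updating $\mathcal{C}_{j+1}$ to whichever half of $\mathcal{C}_j$ (above or below $c$) carries the larger posterior mass conditioned on what was observed in $\I_j$. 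The ``good-or-learnable'' lemma to prove is: \emph{either} committing at $\I_j$ selects a big element with probability $\Omega(1/\log n)$ (conditioned on the history), \emph{or} the failure of the commit step forces the conditional distribution of $v(g_2)$ to put $\geq 1/2$ of its mass strictly on one side of $c$, so that the ``pass'' branch halves the candidate mass. Since we start with $\leq n$ candidates and halve each time, after $O(\log n)$ intervals only $O(1)$ candidates remain, at which point a final threshold can be pinned down exactly and used in $\I_m$ against the safety-net red element. Multiplying out: each of the $O(\log n)$ intervals is reached with probability $2^{-j}$... I should instead have the algorithm pick \emph{one} interval $j \in \{1, \dots, m\}$ uniformly at random in advance as its ``commit interval,'' passing on all earlier ones, so that reaching any fixed commit point costs only a $1/m = \Omega(1/\log n)$ factor; combined with the $\Omega(1/\log n)$ conditional success probability this yields the claimed $\Omega(1/\log^2 n)$.

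**The main obstacle and how I would handle it.**

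The hard part will be making the ``good-or-learnable'' dichotomy at each interval genuinely exhaustive and correctly conditioned --- in particular, ruling out the case where the commit step fails not because $v(g_2)$'s posterior is lopsided around $c$, but because a big red element of value between $c$ and $v(g_2)$ sits early in $\I_j$ (causing us to grab a non-big element), or because $g_{\max}$ or a big element arrives before $g_2$ within $\I_j$. I would resolve this by folding these events into the first alternative: if big items (red or $g_{\max}$) reliably appear early in $\I_j$, then running Dynkin on $\I_j$ --- or even just a threshold-at-$c$ rule --- already catches a big element with the required probability, so we are in the ``good'' case. Carefully, I expect the cleanest route is a three-way split at each step: (good-1) the random-Dynkin strategy wins on $\I_j$; (good-2) committing at $c$ wins on $\I_j$; (learnable) neither, which pins down the posterior geometry and lets us halve. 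One then argues by contradiction that the learnable branch cannot recur more than $\log_2 n$ times. The remaining routine work is the concentration/probability bookkeeping for green arrival times and the clean statement that, \emph{once} we are down to $O(1)$ candidates and we have located a big red element in $\I_m$ with constant probability, a threshold just below the true $v(g_2)$ guarantees selection of a big element; this is exactly the two-checkpoint-style reasoning from \S\ref{sec:two-useful}, adapted to use the known distribution $\calB$ so that ``median of the conditional distribution'' is a legal threshold.
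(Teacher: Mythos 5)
Your proposal follows essentially the same route as the paper's proof: $\Theta(\log n)$ checkpoints, the easy/hard dichotomy via a big red element in every interval (easy cases handled by running the two-checkpoint/Dynkin routine on a random interval), a posterior-median binary search over candidates for $g_2$ that is only possible because $\calB$ is known, a uniformly random pre-chosen ``commit'' checkpoint costing a $1/\log n$ factor, and a final guess among $O(1)$ surviving candidates protected by a big element arriving in the last interval. So structurally you have reconstructed the paper's argument.

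There is, however, one concrete gap in the stated form of your ``good-or-learnable'' lemma: the mass bookkeeping. If each ``pass'' step merely keeps the half of the candidate set carrying $\geq 1/2$ of the posterior mass, then after $\Theta(\log n)$ passes the mass remaining on the surviving candidates can be as small as $2^{-\Theta(\log n)} = 1/\mathrm{poly}(n)$, and the final $O(1)$-candidate guess then wins with only polynomially small probability, which is useless. The paper's rule is deliberately asymmetric: the bottom half (below the median $c_{i-1}$) is kept only when it carries \emph{all} of the conditional mass, i.e.\ $p^i(\bott_{i-1}) = p^{i-1}(S_{i-1})$; otherwise the top half is kept, and the genuinely split regime, where the bottom's share lies in $[1/\log n, 1)$, is declared the ``commit wins'' case (positive mass above $c$ at the end of the interval forces, in a hard instance, an observed element above the threshold, while mass $\geq 1/\log n$ below $c$ makes the selected element big). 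Consequently every learnable step loses either nothing or a $<1/\log n$ fraction, so the retained mass stays $\Omega(1)$ after $\log n$ steps (Claim~\ref{clm:one-over-e}); your dichotomy must be sharpened to this trichotomy for the final step to go through. Two smaller points you should also make explicit: in the ordinal model a threshold must be the value of an already-observed element, so the paper reserves $\I_0 = [0,\nicefrac14]$ and conditions on $t_{g_2} \in \I_0$ (constant probability) so that $g_2$ is a candidate in $S_0$, and it reserves $\ocinter{\nicefrac34,1}$ so that $g_{\max}$ lands after all checkpoints with constant probability; your intervals covering all of $[0,1]$ with no observation prefix and no reserved suffix would need the same adjustment.
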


To deduce the general case from the known distribution setting, we use a minimax lemma for two-player games. We postpone the details to \Cref{sec:semi-finite-minimax} and simply state the final result here.

\thmUbdOrd*

\subsection{The Algorithm when $\mathcal{B}$ is Known}\label{sec:ordinal-algo}

In this section we give the algorithm for \Cref{lemDual}. We
start with some preliminary notation. For each element $e$, let $t_e$
denote the time at which it appears. Furthermore, for $t \in \Int$,
let $\mathcal{K}(t)$ denote the information seen by the algorithm up
to and including time $t$, consisting of arrival times and relative
values of elements appearing before $t$.

We define $\log n + 1$ time \emph{checkpoints} as follows: set the
initial checkpoint $\checkp_0 \defeq \frac{1}{4}$, and then subsequent
checkpoints $\checkp_i \defeq \frac{1}{4} + \frac{i}{2 \cdot \log n}$
for all $i \in [1\ldots\log n]$. Note that the last checkpoint is
$\checkp_{\log n} = \frac{3}{4}$.  Now the corresponding intervals are
\begin{gather}
  \I_0 \defeq [0, 1/4 ]\quad, \quad \I_i \defeq \ocinter{ \checkp_{i-1},
  \checkp_i} \text{ $\forall \, i \in [1\ldots\log n]$, \quad and } \quad \I_{\log n + 1} \defeq
  \ocinter{ 3/4, 1}.
\end{gather}

Let
$m_i \defeq \max \{ \val(e) \mid e \in R \text { and } t_e \in \I_i
\}$ be the maximum value among the red elements that land in interval
$\I_i$, and let
$\calH \defeq \{ m_i > \val(g_2) \text{ for all
} i \in [1\ldots\log n] \}$ be the event where the maximum value red item in
all intervals is larger than the target $g_2$, i.e., is ``big''. We
call this event $\calH$ the \emph{hard cases} and $\calH^c$ the
\emph{easy cases}; we will show the Two Checkpoints Secretary (from
\S\ref{sec:two-useful}) achieves $\Omega(1/\log^2 n)$ winning probability for
all input instances in $\calH^c$. 
Finally, define
\[ p_e^i \defeq \Pr_{\mathcal{B}}[e = g_2 \mid \calH \text { and }
  \mathcal{K}(\checkp_i)], \] i.e., $p_e^i$ is the probability that
$e$ is the second-highest green element conditioned on the information
seen until checkpoint $\checkp_i$ and the current instance being
hard. Importantly, the algorithm can compute $p^i_e$ at $\checkp_i$.

Now to solve the hard cases, at each checkpoint $\checkp_i$ the
algorithm computes sets $S_i$ satisfying $S_{i+1} \subseteq S_i$. These sets represent elements which are candidates for the second-max. In other words, at time $T_i$ there is reasonable probability that second-max is in $S_i$.
We start with defining $S_0 \gets \{ e \in U \mid t_e \in \I_0 \}$, the
elements the algorithm saw before $\checkp_0$. For $i \ge 0$, let
$c_i$ denote the \emph{center} of $S_i$, i.e., the element of $S_i$
such that there are exactly $\lfloor|S_i|/2\rfloor$ elements smaller
than it. Define $p^i(X) \defeq \sum_{e \in X} p^i_e$ for a set $X$
and index $i \geq 0$.
Given $S_{i-1}$, we determine $S_i$ as follows:
\begin{itemize}
\item Define $\bott_{i-1} \gets \{ e \in S_{i-1}
    \mid \val(e) \leq \val(c_{i-1}) \}$, and note that $\bott_{i-1}
    \subseteq S_{i-1}$.
\item If $p^i(\bott_{i-1}) = p^{i-1}(S_{i-1})$ then $S_i \gets \bott_{i-1}$,
  else $S_i \gets S_{i-1} \setminus \bott_{i-1}$.
\end{itemize}

Our algorithm runs one of the following three algorithms uniformly at random:
\setlist{nolistsep}
\begin{enumerate}[label=(\roman*),noitemsep]
\item Select a random $i \in [1 \ldots \log n]$, define $\tau \gets \max \{
  \val(e) \mid t_e \in \I_i
  \}$ and select the first element larger than $\tau$.
  I.e., run Two Checkpoints Secretary (from \S\ref{sec:two-useful}) with the
  checkpoints being the ends of interval $\I_i$.
\item Select a random $i \in [0 \ldots \log
  n]$, read input until checkpoint $\checkp_i$, define $\tau \gets
  \val(c_i)$ and select the first element larger than it.
\item Compute the sets $S_i$ until $|S_k| \leq 10$ for some
  $k$: then define $\tau$ to be the value of a random element in
  $S_k$, and select the first element larger than it.
\end{enumerate}

\subsection{The Analysis}
\label{sec:anal-ordinal}

In this section we prove \Cref{lemDual}.
Let us give some intuition. We can assume we have a hard case, else the
first algorithm achieves $\Omega(1 / \log^2n)$ winning probability. For the
other two algorithms, let us condition on $g_2$ falling in the first
interval $\I_0$, and then exploit the fact that there is a big red
element in every interval $\I_i$. It may be useful to imagine that we
are trying to guess, at each checkpoint, which of the elements in the
past were actually $g_2$. If we could do this, we would set a threshold
at its value, and select the first subsequent element bigger than the
threshold --- and since there is a $\nicefrac14$ chance that $g_{\max}$
would fall in $\I_{\log n + 1}$, we'd succeed! Of course, since there
are red elements all around, guessing $g_2$ is not straightforward.

So suppose we are at checkpoint $T_k$, and suppose there is a reasonable
probability that $v(g_2) \leq v(c_{k-1})$, but also still some nonzero
probability that $v(g_2) > v(c_{k-1})$. In such a scenario, we claim
that trying to choose an element in the interval $\I_k$ larger than
$c_{k-1}$ will give us a reasonable probability of success. Indeed, we
claim there would have been at least one red element in $\I_k$
bigger than $c_{k-1}$ (since there is still a non-zero probability that
$v(g_2) > v(c_{k-1})$ even at the end of the interval $\I_k$, and since
the case is hard), and $v(g_2) \leq v(c_{k-1})$ with reasonable
probability.  Of course, we only know this at the end of the interval,
but the algorithm can randomly guess $k$ with $\Omega(1/\log n)$
probability. Finally, if there is no such checkpoint, then in every
interval we reduce the size of set $|S_i|$ by half while suffering a
small loss in $p(S_i)$. In this case, both $|S_{\log n}| = O(1)$ and
$p(S_{\log n}) = \Omega(1)$, so the third algorithm can guess $g_2$
with constant probability and select an element larger than it in the
last interval.


\medskip\textbf{Formal Analysis.}  Let $\ALG$ be $1$ if
$v(\calA) \ge v(g_2)$ and $0$ otherwise, where $\calA$ is the
algorithm from the last section. Suppose we're in an easy case, i.e.,
there is an interval $\I_s$ such that all red elements in this
interval are smaller than $g_2$. Now if the first algorithm is chosen,
suppose it selects the interval $\I_s$, suppose $g_2$ lands in $\I_s$,
and $g_{\max}$ lands in $\I_{\log n + 1}$. Then the algorithm surely
selects an element greater than $g_2$, and it has expected value:
\[ \ts \E[\ALG] \geq \frac{1}{3} \cdot \frac{1}{\log n} \cdot \frac{1}{2 \log n} \cdot \frac{1}{4}
  = \Omega\left(\frac1{\log^2 n}\right). \]

Henceforth we can assume the case is hard, and hence each interval $I_i$
contains a red element bigger than $g_2$. We condition on the event that
$g_2$ appears in $\I_0$, which happens with constant probability. Define
\[ \ts k^* \defeq \min \big\{ i \in [1 \dots \log n] \mid \frac{1}{\log n} \leq
  \frac{p^i(\bott_{i-1})}{p^{i-1}(S_{i-1})} < 1 \big\}, \]
and set $k^* = \log n + 1$ if the above set is empty.

\begin{claim}
  \label{clm:one-over-e}
  For all $i<k^*$, the probability $p^i(S_i) = \Omega(1)$.
\end{claim}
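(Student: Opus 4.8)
The plan is to induct on $i$, showing that $p^i(S_i) \ge \bigl(1 - \tfrac{1}{\log n}\bigr)\, p^{i-1}(S_{i-1})$ for every $0 < i < k^*$. Since we have fixed the constant-probability event $\{g_2 \in \I_0\}$, we have $g_2 \in S_0$ and hence $p^0(S_0) = \Omega(1)$; and since $k^* \le \log n + 1$, telescoping then yields $p^i(S_i) \ge p^0(S_0)\,\bigl(1-\tfrac1{\log n}\bigr)^{\log n} = \Omega(1)$ for all $i < k^*$, which is the claim.

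The backbone is a martingale observation. Because $\calK(\checkp_{i-1}) \subseteq \calK(\checkp_i)$ and $\calH$ (together with $\{g_2\in\I_0\}$) is a fixed event, the tower rule gives $\E[\,p^i_e \mid \calH, \calK(\checkp_{i-1})\,] = p^{i-1}_e$ for every element $e$, so $(p^i_e)_i$ is a martingale; the same holds after summing over any $\calK(\checkp_{i-1})$-measurable set. The second, more delicate ingredient is a \emph{mass-preservation} fact: $p^i(S_{i-1}) = p^{i-1}(S_{i-1})$. To see why this should hold, observe first that the halving rule always keeps \emph{all} elements of $S_0$ whose value lies in a fixed range, so $S_{i-1} = \{e \in S_0 : a_{i-1} < v(e) \le b_{i-1}\}$ for thresholds $a_{i-1} \le b_{i-1}$, and (given $g_2 \in \I_0$) the event $\{g_2 \in S_{i-1}\}$ coincides with $\{a_{i-1} < v(g_2) \le b_{i-1}\}$. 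Conditioned on $\calH$, the only information the algorithm gains about $v(g_2)$ while observing $\I_i$ is the upper bound $v(g_2) < \max\{v(e) : t_e \in \I_i\}$ (each interval contains a red element above $g_2$). The point is that this bound never cuts into the surviving interval $(a_{i-1}, b_{i-1}]$: the only half of an $S_{\ell-1}$ that is ever discarded while carrying \emph{positive} posterior mass is a \emph{bottom} half, and an upper-bound clip does not touch a bottom half; a top half is discarded only after its posterior mass has reached $0$, where it stays by the martingale/tower argument. Hence observing $\I_i$ leaves $\Pr[\,a_{i-1} < v(g_2) \le b_{i-1} \mid \calH, \cdot\,]$ unchanged.

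Granting mass preservation, the inductive step is a short case analysis driven by the definition of $k^*$. Since $\bott_{i-1} \subseteq S_{i-1}$ we have $p^i(\bott_{i-1}) \le p^i(S_{i-1}) = p^{i-1}(S_{i-1})$, so the ratio $p^i(\bott_{i-1})/p^{i-1}(S_{i-1})$ is at most $1$. If it equals $1$, the rule sets $S_i = \bott_{i-1}$ and $p^i(S_i) = p^i(\bott_{i-1}) = p^{i-1}(S_{i-1})$, with no loss. Otherwise the ratio is $< 1$, and since $i < k^*$ it must in fact be $< \tfrac1{\log n}$; the rule sets $S_i = S_{i-1}\setminus\bott_{i-1}$, so $p^i(S_i) = p^i(S_{i-1}) - p^i(\bott_{i-1}) = p^{i-1}(S_{i-1}) - p^i(\bott_{i-1}) > \bigl(1 - \tfrac1{\log n}\bigr)\, p^{i-1}(S_{i-1})$. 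Combining the two cases with the induction hypothesis and telescoping finishes the proof.

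The main obstacle is precisely the mass-preservation identity $p^i(S_{i-1}) = p^{i-1}(S_{i-1})$: one must argue carefully that, under the hardness conditioning $\calH$, the fresh data in interval $\I_i$ genuinely reduces to an upper bound on $v(g_2)$ that is harmless for the surviving value-interval $(a_{i-1},b_{i-1}]$ — in particular that the algorithm never discarded a half that still carried positive posterior probability on its upper end. Everything else (the martingale identity, the value-interval structure of the sets $S_i$, and the case analysis) is routine once this is in place.
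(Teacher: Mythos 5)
Your overall skeleton is the same as the paper's: the dichotomy "either $p^i(\bott_{i-1}) = p^{i-1}(S_{i-1})$ (no loss) or, since $i<k^*$, the ratio is below $\tfrac1{\log n}$ (loss factor $1-\tfrac1{\log n}$)", followed by telescoping $(1-\tfrac1{\log n})^{\log n} = \Omega(1)$ from $p^0(S_0)$. The paper's own proof also silently uses the identity you call mass preservation: its displayed equality $p^i(S_i) = p^{i-1}(S_{i-1}) - p^i(\bott_{i-1})$ in the "else" branch is exactly $p^i(S_{i-1}) = p^{i-1}(S_{i-1})$, stated without justification. So you correctly isolated the one nontrivial ingredient; the problem is that the justification you supply for it does not hold up.

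Concretely, three steps in your argument for $p^i(S_{i-1}) = p^{i-1}(S_{i-1})$ are gaps. First, the assertion that under $\calH$ the only information gained in $\I_i$ about $g_2$ is the upper bound $v(g_2) < \max\{v(e) : t_e \in \I_i\}$ is unjustified: $\mathcal{B}$ is an arbitrary joint distribution over red arrival times and the value permutation, so the pattern observed during $\I_i$ (how many elements arrive, at what times, how they interleave in value with the elements of $S_0$, whether $g_{\max}$ shows up) can be correlated with \emph{which} element of $S_0$ is $g_2$ in ways far richer than an upper bound on $v(g_2)$. Second, the supporting claim that "a top half is discarded only after its posterior mass has reached $0$" is circular: at the moment a top half is discarded (i.e.\ when $p^i(\bott_{i-1}) = p^{i-1}(S_{i-1})$), its mass equals $p^i(S_{i-1}) - p^i(\bott_{i-1}) = p^i(S_{i-1}) - p^{i-1}(S_{i-1})$, which is $0$ precisely when the mass-preservation identity you are trying to establish already holds at that step. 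Third, even inside your simplified "upper-bound-only" model, a clip whose level falls strictly inside $(a_{i-1}, b_{i-1}]$ does change $\Pr[g_2 \in S_{i-1} \mid \calH, \cdot]$ once previously discarded \emph{bottom} halves carry positive posterior mass (conditioning renormalizes and shifts weight toward those low candidates), so "the bound never cuts into the surviving interval" is exactly the statement needing proof rather than a consequence of the bookkeeping. As it stands, the identity remains unproven in your write-up; your case analysis and telescoping match the paper, and the gap is confined to the extra justification you attempted (a step the paper itself passes over without comment). A minor additional point: the paper takes $p^0(S_0)=1$ (the conditioning on $g_2$ arriving in $\I_0$ is built into the analysis), whereas your "$p^0(S_0)=\Omega(1)$" leaves the interaction between that conditioning and the definition of $p^0_e$ unspecified.
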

\begin{proof}
  By definition, $p^0(S_0) = 1$. By our definition of the sets $S_i$, we
  know that if $p^i(\bott_{i-1}) = p^{i-1}(S_{i-1})$ then
  $p^i(S_i) = p^{i-1}(S_{i-1})$. Else since $i < k^*$, we have
  \[ \ts p^i(S_i) = p^{i-1}(S_{i-1}) - p^i(\bott_{i-1}) \leq
    p^{i-1}(S_{i-1}) (1 - \frac{1}{\log n}). \]
  Hence, $p^i(S_i) \geq (1 - \frac{1}{\log n})^{\log n} = \Omega(1)$, proving the claim.
\end{proof}

Now there are two cases, depending on the value of $k^*$. Suppose
$k^* \leq \log n$. Condition on the event that the second algorithm is
chosen, that it chooses the $i^{th} = (k^*-1)^{th}$ checkpoint, and that
$v(g_2) \leq v(c_i)$. By our choice of $k^*$, we get that
$v(g_2) \leq \tau = v(c_i)$ with probability at least
$p^i(S_i) \cdot \frac{1}{\log n}$, and by Claim~\ref{clm:one-over-e}
this is $\Omega(\frac{1}{\log n})$. Since the case we are considering is
hard and
$\Pr[v(g_2) > v(c_i) \mid \calH \text { and } \mathcal{K}(\checkp_{i+1}) ] > 0$, \\
there is a red element larger than $v(c_i) = \tau$ appearing in $\I_i$. Thus the
algorithm will always select an element in this interval. The correct
interval is chosen with probability $\frac{1}{\log n}$, so the algorithm's
 value is
\[ \ts \E[\ALG] = \frac13 \cdot \frac{1}{\log n} \cdot
  \Omega\big(\frac{1}{\log n}\big) = 
  \Omega\big(\frac{1}{\log^2 n}\big). \]

The other case is when $k^* = \log n + 1$. By definition $|S_0| \leq n$
and $|S_i| \leq \lceil |S_{i-1}|/2 \rceil.$ Therefore
$|S_{\log n}| \leq 10$. Let us condition on the event that the third
algorithm is chosen, that $g_{\max}$ appears in $\I_{\log n + 1}$, and
that the algorithm guesses $g_2$ correctly. The probability of this
event is at least
\[ \ts \frac{1}{3} \cdot \frac{1}{4} \cdot p^{\log n}(S_{\log n}) \cdot
  \frac{1}{10} = \Omega(1). \] where we use Claim~\ref{clm:one-over-e}
to bound the probability $p^{\log n}(S_{\log n})$. In this event, the
algorithm selects an element larger than $g_2$ and has expected value
$\E[\ALG] = \Omega(1).$

Putting all these cases together, we get that our algorithm selects an
element with value at least $v(g_2)$ with probability at least
$\Omega((\log n)^{-2})$. This finishes the proof of \Cref{lemDual}, and
hence of Theorem~\ref{thmUbdOrd}. It remains an intriguing open question
to get a direct algorithm that achieves similar guarantees.



\section{Single-Item Value-Maximization}
\label{sec:card-case}

In this section, we give an algorithm for the problem of selecting an item
to maximize the expected \emph{value}, instead of maximizing the probability of
selecting the second-largest green item (the  ordinal problem considered in
\S\ref{sec:ordinal_polylog}). In the classical secretary problem, both 
problems are well known to be equivalent, with Dynkin's algorithm giving a
tight $1/e$ bound for both. But in the Byzantine case the
problems thus far appear to have different levels of complexity: in
\S\ref{sec:general-matroids-proof} we present a simple $O(\log n)$-competitive
algorithm for the value-maximization byzantine secretary problem, which is already
better than the $\poly\log n$-competitive of \S\ref{sec:ordinal_polylog}. We now
substantially improve it to give a $\poly\log^{*} n$-competitive ratio.

\thmValue*

In the rest of this section, let $V^* \defeq v(g_2)$ denote the
benchmark, the value of the second-largest green element.  The high
level idea of our algorithm is to partition the input into
$O(\log^{*} n)$ intervals and argue that every interval contains a red
element of value $v_i > V^*$, as otherwise Dynkin's algorithm will be
successful.  Moreover, this $v_i$ cannot be much larger than $V^*$, as
otherwise we can just select a random element. This implies we can use
the largest value in each interval to find a good estimate of $V^*$,
and eventually set it as a threshold in the last interval to select a
large value element.

\subsection{The Algorithm} 
\label{algo:value-maximization}
Define $\log\at{i}n$ to be the \emph{iterated logarithm} function:
$\log\at{0} n = n$ and $\log\at{i+1}n = \log(\log\at{i} n)$.
We define $\log^{*} n + 1$ time \emph{checkpoints} as follows: the initial
checkpoint $\checkp_0 = \frac{1}{2}$, and then subsequent checkpoints
$\checkp_i = \frac{1}{2} + \frac{i}{4 \cdot \log^{*} n}$ for all
$i \in [1, \ldots, \log^{*} n]$. Note that the last checkpoint is $\checkp_{\log^{*} n} = \frac{3}{4}$. 
Now the  intervals are
\begin{gather}
  \I_0 \defeq [0, \checkp_0]\quad, \quad \I_i \defeq \ocinter{ \checkp_{i-1},
  \checkp_i} \text{ $\forall \, i \in [1\ldots\log^{*} n]$, \quad and } \quad \I_{\log^{*} n + 1} \defeq
  \ocinter{ \checkp_{\log^{*} n}, 1}.
\end{gather}

Our algorithm runs one of the following three algorithms chosen
uniformly at random.

\setlist{nolistsep}
\begin{enumerate}[label=(\roman*),noitemsep]
\item Select one of the $n$ elements uniformly at random; i.e.,
  run Select-Random-Element from \S\ref{sec:two-useful}.
\label{alg:value-maximizationSecond}

\item Select a random interval $i \in [1 \ldots \log^{*} n]$ and run
  Dynkin's secretary algorithm on $\I_i$. Formally, run Two-Checkpoints-Secretary (from
  \S\ref{sec:two-useful}) with the interval being
  $[\checkp_{i-1}, \frac{1}{2}(\checkp_{i-1} + \checkp_{i})]$.
\label{alg:value-maximizationFirst}

\item Select a random index $i \in [0 \ldots \log^{*} n]$ and observe the
  maximum value during the interval $\I_i$; 
  let this maximum value be $v_{i}$. Choose a uniformly
  random $s \in [0 \ldots 2 \log\at{i} n ]$. Select the first
  element arriving after $\checkp_i$ that has value at least $\tau \defeq
  (v_{i} \log\at{i} n)/2^s$.
\label{alg:value-maximizationThird}
\end{enumerate}

\subsection{The Analysis}
To prove Theorem~\ref{thm:single-value}, assume WLOG that there are only two green elements $g_{\max}$ and $g_2$, and every other element is red (otherwise, we can condition on the arrival times of  all other green elements). Let $v_i$ be the value of the  highest red element in $\I_i$, i.e., excluding $g_{\max}$ and $g_2$.

\begin{proof}[Proof of Theorem~\ref{thm:single-value}]
  We assume $\log\at{i} n$ is an integer for all $i$; this is true with
  a constant factor loss. For sake of a contradiction, assume that the
  algorithm in \S\ref{algo:value-maximization} does not get expected value
  $\Omega((\log^{*} n)^{-2} V^*)$.  Under
  this assumption, we first show that every interval contains a red
  element of  value at least $V^*$.
 
 \begin{claim}\label{claim:cardBigElems}
   For all $j \in [1 \ldots \log^{*} n]$ we have $v_j \ge V^*$. 
  \end{claim}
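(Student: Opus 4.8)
The plan is to prove the contrapositive in the form dictated by the surrounding argument: we are already working under the standing assumption that the algorithm of \S\ref{algo:value-maximization} fails to get expected value $\Omega((\log^* n)^{-2} V^*)$, so it is enough to show that if some interval $\I_j$ with $j \in [1 \ldots \log^* n]$ contained \emph{no} red element of value at least $V^* = v(g_2)$, then subroutine (ii) alone would already collect expected value $\Omega((\log^* n)^{-2} V^*)$, a contradiction. So fix such a $j$ and write $M_j \defeq \tfrac12(\checkp_{j-1} + \checkp_j)$ for the right endpoint of the observation window $[\checkp_{j-1}, M_j]$ used by Two-Checkpoints-Secretary when subroutine (ii) selects $i = j$; this window sits inside $\I_j$.

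I would define the good event $E$ to be the conjunction of four events: subroutine (ii) is the one executed (probability $\tfrac13$); it draws $i = j$ (probability $\tfrac1{\log^* n}$); the green element $g_2$ arrives in $[\checkp_{j-1}, M_j]$ (probability $M_j - \checkp_{j-1} = \tfrac12 |\I_j| = \tfrac{1}{8 \log^* n}$); and $g_{\max}$ arrives after $M_j$ (probability $1 - M_j \ge 1 - \checkp_{\log^* n} = \tfrac14$, using $M_j \le \checkp_{\log^* n} = 3/4$). The arrival times of $g_2$ and $g_{\max}$ are independent of each other and of the algorithm's internal randomness, so these events are independent and $\Pr[E] \ge \tfrac13 \cdot \tfrac1{\log^* n} \cdot \tfrac1{8\log^* n} \cdot \tfrac14 = \Omega((\log^* n)^{-2})$. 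Then I would invoke the analysis of Two-Checkpoints-Secretary from \S\ref{sec:two-useful}: on $E$ the observation window contains $g_2$, contains no red element of value $\ge v(g_2)$ (by the hypothesis on $\I_j$), and does not contain $g_{\max}$ (which arrives after $M_j$), so the threshold is exactly $\tau = v(g_2)$; since $g_{\max}$ arrives after $M_j$ with value $> \tau$, the subroutine is guaranteed to select \emph{some} element of value at least $\tau = V^*$. Since the algorithm's value is nonnegative and is at least $V^*$ on $E$, its expected value is at least $\Pr[E]\cdot V^* = \Omega((\log^* n)^{-2} V^*)$, contradicting the standing assumption. Hence $v_j \ge V^*$ for all $j \in [1 \ldots \log^* n]$.

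The argument is essentially bookkeeping, so I do not expect a serious obstacle; the one point needing care is that the single event ``$g_{\max}$ arrives after $M_j$'' is doing two jobs at once --- it keeps $g_{\max}$ out of the observation window so that $\tau$ equals $v(g_2)$ rather than the useless value $v(g_{\max})$, and it simultaneously supplies a valid above-threshold element during the selection phase $\ocinter{M_j, 1}$ --- and that the observation window is contained in $\I_j$ so the ``no big red in $\I_j$'' hypothesis genuinely applies to it.
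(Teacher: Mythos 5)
Your proposal is correct and follows essentially the same route as the paper: under the standing contradiction hypothesis, condition on subroutine (ii) choosing $i=j$, on $g_2$ landing in the observation half of $\I_j$, and on $g_{\max}$ arriving later, so that the threshold equals $v(g_2)$ and a value of at least $V^*$ is guaranteed with probability $\Omega((\log^* n)^{-2})$. The only (immaterial) difference is that the paper conditions on $g_{\max}$ landing in the final interval $\I_{\log^* n+1}$ rather than merely after the observation window.
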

  \begin{proof}
    Suppose this is not the case. Let $\calE_1$ be the event that the
    following three things happen simultaneously: that we select
    Algorithm~\ref{alg:value-maximizationFirst} in \S\ref{algo:value-maximization} with
    random variable $i = j$, 
    that the second-highest green element $g_2$ falls in the interval
    $\cointer{\checkp_{i-1}, \frac{1}{2}(\checkp_{i-1} + \checkp_{i})}$, and
    that the highest green element $g_{\max}$ falls in
    $ \I_{\log^{*} n + 1}$. Note
    that $\Pr[\calE_1] =  \frac{1}{3\log^{*} n} \cdot \frac{1}{4 \log^{*} n} \cdot \frac14 =
    \Omega((\log^{*} n)^{-2})$. Conditioned on this event $\calE_1$, our algorithm
    (or specifically, Algorithm~\ref{alg:value-maximizationFirst} on the interval
    $\I_{j}$) gets a value at least $v(g_2) = V^*$. Hence the algorithm has
    expected valuation $\Omega\big( (\log^{*} n)^{-2} V^* \big)$, which is a contradiction 
    to our assumption on  its performance.
  \end{proof}

  We now prove that these  red elements with large values cannot be 
  much larger than $V^*$.

  \begin{lemma} \label{lem:cardNotTooBigElems} 
     For all $j \in [1 \ldots \log^{*} n]$ we have $v_j \le V^* \cdot \log\at{j-1} n$. 
  \end{lemma}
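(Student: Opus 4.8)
The plan is to argue by contradiction: if some interval $\I_j$ (with $j \in [1 \ldots \log^* n]$) contained a red element whose value $v_j$ exceeds $V^* \cdot \log\at{j-1} n$, then Algorithm~\ref{alg:value-maximizationThird} in \S\ref{algo:value-maximization} applied to the interval $\I_{j-1}$ (i.e.\ with random index $i = j-1$) would already recover a value $\Omega((\log^* n)^{-2} V^*)$, contradicting our standing assumption. The key point is that the threshold $\tau = (v_{i} \log\at{i} n)/2^s$ with $i = j-1$ is designed exactly to ``binary search'' down from $v_{j-1} \log\at{j-1} n$ toward $V^*$ in powers of two, and since $s$ ranges over $[0 \ldots 2\log\at{j-1} n]$ we can hit any dyadic value in $[v_{j-1}/(\log\at{j-1} n), v_{j-1}\log\at{j-1} n]$ — a window wide enough to catch $V^*$ provided $v_{j-1}$ itself is sandwiched between (roughly) $V^*/\log\at{j-1}n$ and $V^*\log\at{j-1}n$.

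First I would set up the good event $\calE$: that Algorithm~\ref{alg:value-maximizationThird} is chosen (probability $1/3$), that the random index equals $j-1$ (probability $1/\log^* n$, up to the off-by-one in the index range which is harmless), that $g_{\max}$ lands in the last interval $\I_{\log^* n + 1}$ (constant probability $\ge 1/4$), and — crucially — that the random shift $s$ is chosen so that $\tau$ lands in the dyadic interval $[V^*, 2V^*)$. For the last sub-event I need to know $v_{j-1}$ is not too large relative to $V^*$: this is where I'd invoke the claim by strong induction on $j$. Indeed, if the lemma holds for index $j-1$, then $v_{j-1} \le V^* \log\at{j-2} n$; combined with $v_{j-1} \ge V^*$ from \Cref{claim:cardBigElems}, the ``target ratio'' $v_{j-1}\log\at{j-1}n / V^* \in [\log\at{j-1}n,\ \log\at{j-2}n \cdot \log\at{j-1}n]$, and since $\log\at{j-2}n \cdot \log\at{j-1}n \le (\log\at{j-2}n)^2 \le 2^{2\log\at{j-1}n}$ (using $\log\at{j-1}n = \log\log\at{j-2}n$), there is a valid choice of $s \in [0\ldots 2\log\at{j-1}n]$ making $\tau \in [V^*, 2V^*)$. (The base case $j=1$ uses $v_0 \le nV^*$, which follows since otherwise Algorithm~\ref{alg:value-maximizationSecond} already wins; and $\log\at{0}n = n$.)

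Then, conditioned on $\calE$, the algorithm sets a threshold $\tau \in [V^*, 2V^*)$ during $\I_{j-1}$ and afterwards selects the first element of value $\ge \tau$. By hypothesis $v_j > V^*\log\at{j-1}n \ge \tau$, so that red element of $\I_j$ (which arrives after $\checkp_{j-1}$) triggers a selection of value $\ge \tau \ge V^* $; even if some earlier element in $\I_{j}$ or between $\checkp_{j-1}$ and $\checkp_j$ is picked, its value is still $\ge \tau \ge V^*$. Hence $\E[\ALG] \ge \Pr[\calE] \cdot V^* = \Omega((\log^* n)^{-2}) \cdot V^*$, the contradiction. Wrapping the induction together gives $v_j \le V^* \cdot \log\at{j-1}n$ for all $j$.

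The main obstacle I anticipate is the bookkeeping around the iterated-log arithmetic: verifying that the window $[v_{j-1}/\log\at{j-1}n,\ v_{j-1}\log\at{j-1}n]$ attainable by $\tau$ always straddles $V^*$ requires the inductive bound on $v_{j-1}$ to be \emph{two-sided and tight enough}, and one has to be careful that the range $s \in [0\ldots 2\log\at{j-1}n]$ gives exactly the right number of dyadic steps — a constant factor slack here (e.g.\ whether it should be $2\log\at{j-1}n$ vs.\ $3\log\at{j-1}n$) is the kind of thing that determines whether the induction closes. The probabilistic part is routine once the deterministic ``the threshold lands in $[V^*,2V^*)$'' event is pinned down.
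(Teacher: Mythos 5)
Your overall architecture (strong induction, contradiction via Algorithm~\ref{alg:value-maximizationThird} run with index $i=j-1$, and the feasibility check that some $s\in[0\ldots 2\log\at{j-1}n]$ exists) matches the paper, but the choice of target window for $\tau$ breaks the quantitative accounting, and the contradiction does not go through. Your event $\calE$ requires the random shift $s$ to place $\tau$ in $[V^*,2V^*)$; that particular $s$ is guessed with probability only $\frac{1}{2\log\at{j-1}n+1}$, so in fact
$\Pr[\calE]=\Theta\bigl(\tfrac{1}{\log^* n\cdot\log\at{j-1}n}\bigr)$, not $\Omega((\log^* n)^{-2})$ as you assert. Since you only credit value $\tau\approx V^*$ per success, you obtain $\E[\ALG]=\Omega\bigl(V^*/(\log^* n\cdot\log\at{j-1}n)\bigr)$, which for small $j$ is far weaker than the standing assumption: already at $j=1$ (where $\log\at{0}n=n$) this is $\Omega(V^*/(n\log^* n))$, and at $j=2$ it is $\Omega(V^*/(\log n\cdot \log^* n))$ --- no contradiction with the algorithm earning less than $(\log^* n)^{-2}V^*$. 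The flaw is thus fatal for every $j$ with $\log\at{j-1}n\gg(\log^* n)^2$, i.e.\ for essentially the whole range of the induction.

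The paper's proof repairs exactly this trade-off, and this is the reason the threshold in Algorithm~\ref{alg:value-maximizationThird} carries the extra multiplicative factor $\log\at{i}n$: one chooses $s$ so that $v_{j-1}/2^{s+1}\le V^*<v_{j-1}/2^{s}$ (such an $s$ exists in $[0\ldots 2\log\at{j-1}n]$ by the induction hypothesis $V^*\le v_{j-1}\le V^*\log\at{j-2}n$, essentially your arithmetic), which places $\tau=v_{j-1}\log\at{j-1}n/2^{s}$ at $\Theta\bigl(V^*\log\at{j-1}n\bigr)$ rather than at $\Theta(V^*)$. Under the contradiction hypothesis $v_j>V^*\log\at{j-1}n$, the big red element of $\I_j$ still clears this higher threshold, so every success yields value $\Omega\bigl(V^*\log\at{j-1}n\bigr)$; this gain exactly cancels the $\frac{1}{2\log\at{j-1}n}$ probability of guessing $s$, giving $\E[\ALG]=\Omega\bigl(V^*/\log^* n\bigr)$ uniformly in $j$ --- a genuine contradiction. (Two smaller remarks: conditioning on $g_{\max}$ landing in $\I_{\log^* n+1}$ is unnecessary in this step, since the hypothesized red element of $\I_j$ arrives after $\checkp_{j-1}$ and exceeds $\tau$ by itself; and one must be slightly careful that $\tau$ is taken on the low side, within a factor $2$ below $V^*\log\at{j-1}n$, so that $v_j>\tau$ is guaranteed --- the same factor-of-two care you flagged, but applied at the $V^*\log\at{j-1}n$ scale, not at $V^*$.)
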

  \begin{proof}
    We prove this lemma by induction. The base case $j = 1$ says
    $v_1\le n V^*$, i.e., the highest observed value in $\I_1 = \cointer{\checkp_0, \checkp_1}$ 
    is at most $n V^*$. Suppose this is not the case---there
    exists a red element $e$ in  $\I_1$ with value at
    least $n V^*$. Let $\calE_1$ be the event that  we select
    Algorithm~\ref{alg:value-maximizationSecond} in \S\ref{algo:value-maximization} (i.e.,
    Select-Random-Element) and that it selects $e$. Since
    $\Pr[\calE_1] = \Omega(\frac{1}{n})$, we have a contradiction that the expected valuation 
    is $\Omega(V^*)$.

    Now suppose the statement is true until  $j \ge 1$. We prove the
    inductive step  $j+1$. Suppose not, i.e., $v_{j+1} > V^* \log\at{j} n$.
     Let $\calE_2$ be the event
    that we select Algorithm~\ref{alg:value-maximizationThird} in
    \S\ref{algo:value-maximization} with parameter $i = j$ and that the random $s\in  [0 \ldots 2 \log\at{j} n]$ 
    is  such     that     $ \frac{v_{j} }{2^{s+1}} \leq V^* < \frac{v_{j} }{2^{s}}$ (it exists by induction hypothesis).
    This implies threshold $\tau := \frac{v_{j} \log\at{j} n}{2^s} $  is between $\frac12 V^* \log\at{j} n$ and $V^* \log\at{j} n$. Note 
     $\Pr[\calE_2] \geq \frac13 \cdot \frac{1}{\log^{*} n} \cdot \frac{1}{2 \log\at{j} n}$. Since
    event $\calE_2$ implies the algorithm gets value at least $\tau \geq \frac12 V^* \log\at{j} n$ (because $v_{j+1}>\tau$), its expected value is $\Omega( (\log^{*} n)^{-1} V^* )$, a contradiction.
    \end{proof}

Now by Claim~\ref{claim:cardBigElems} and
Lemma~\ref{lem:cardNotTooBigElems}, we have $v_{j} \in [V^*, V^* \cdot
\log\at{j} n]$ for all $j \in [1 \ldots \log^{*} n]$. We still get a contradiction.
 Let $\calE_3$ be the event that the following three things happen
simultaneously: that we select Algorithm~\ref{alg:value-maximizationThird} in
\S\ref{algo:value-maximization} with $i = \log^{*} n$, that the highest green
element $g_{\max}$ is in interval $\I_{\log^{*} n+1}$, and that we
select $s$ in Algorithm~\ref{alg:value-maximizationThird} such that 
$\tau := ({v_{\log^{*} n} \log\at{j} n})/({2^s}) $ is  between $\frac12 V^*$ and $V^*$. Note 
$\Pr[\calE_3] \geq  \frac{1}{3 \log^{*} n} \cdot \frac{1}{4} \cdot \frac{1}{2 \log^{*} n}$.
Since   the  event $\calE_2$ implies the algorithm gets value at least $\tau \geq \frac12 V^*$ (because $g_{\max}$ is in $\I_{\log^{*} n+1}$), its expected value is $\Omega( (\log^{*} n)^{-2} V^* )$.  Thus, we have a contradiction in every case, which means our assumption is incorrect and the algorithm has expected value  $\Omega((\log^{*} n)^{-2} V^*)$.
\end{proof}



\section{Value Maximization for Matroids}\label{sec:multItem}

In this section we discuss multiple-choice Byzantine secretary
algorithms in the matroid setting. 

\begin{defn}[Byzantine secretary problem on matroids]
  Let $\calM$ be a matroid over $U = R \cup G$, where elements in
  $G = \{ g_{\max}, g_2, \ldots, g_{|G|} \}$ arrive uniformly at random
   in $[0, 1]$. When  an
  element $e \in U$ arrives, the algorithm must irrevocably select or ignore $e$, while
   ensuring that the set of selected elements forms an independent
  set in $\calM$. The leave-one-out benchmark $V^*$ is  the
  highest-value independent subset of $G \setminus \{ g_{\max}
  \}$.
\end{defn}


The knapsack results imply $(1-\e)$-competitiveness for uniform matroids
as long as the rank $r$ is large enough; we now consider other matroids.

\subsection{$O(\loglog n)^2$-competitiveness for Partition Matroids}
\label{sec:part}

A partition matroid is where the elements of the universe are
partitioned into parts $\{P_1,P_2, \ldots\}$. Given some integers
$r_1,r_2, \ldots$, a subset of elements is independent if for every
$i$ it contains at most $r_i$ element from part $P_i$.
\thmPartitionMatroid*
We prove Theorem~\ref{thmPartitionMatroid} for simple partition
matroids where all $r_i = 1$, i.e., we can select at most one element
in each part. This is without loss of generality (up to $O(1)$
approximation) because we can randomly partition each part $P_i$
further into $r_i$ parts and run the simple partition matroid
algorithm.

Recall that our single item $\poly\log^{*} n$ algorithm from \S\ref{sec:card-case} no longer works for partition matroids. This is because besides one part we want to get the highest green element in all the other parts.  Formally, Claim~\ref{claim:cardBigElems} where we use Dynkin's secretary algorithm in the proof of Theorem~\ref{thm:single-value}  fails because it needs  at least two green elements. So we need to overcome the lower bound  to getting  the highest-value green  element $v(g_1)$ in Observation~\ref{obsLbd}. We achieve this and design an $O(\loglog n)^2$-approximation algorithm by making an assumption that the algorithm starts with a polynomial approximation to $v(g_1)$. Although in general this is a strong assumption, it turns out that for partition matroids this assumption is w.l.o.g. because the algorithm may  lose the highest green element in one of the parts.

\subsubsection{The Algorithm}
We define $\loglog n + 1$ time \emph{checkpoints} as follows: the initial
checkpoint $\checkp_0 = \frac{1}{2}$, and then subsequent checkpoints
$\checkp_i = \frac{1}{2} + \frac{i}{2 \cdot \loglog n}$ for all
$i \in [1\ldots \loglog n]$. 
Now the corresponding intervals are
\begin{gather}
  \I_0 \defeq [0, \checkp_0] \quad   \text{and}   \quad \I_i = \ocinter{ \checkp_{i-1},
  \checkp_i} \quad \forall \, i \in [1\ldots \loglog n] 
\end{gather}
Let $v_0$ denote the value of the max element seen by the algorithm in $\I_0$.  

Now for every part $P$ of the partition matroid, we execute the following algorithm separately. Let $v_i$ for $i\in [1\ldots \loglog n]$ denote the value of the max element seen by the algorithm \emph{in part $P$} during interval $\I_i$.  
Let $V^*$ denote the element of our benchmark in $P$. Notice that $v_i \in P$ and $V^*$ cannot be the overall highest green element as we exclude it. 
 We  define $4 \log^{1/i} n$ levels for $\I_i$ where \emph{level} $j$ for $j \in [1\ldots 4\log^{1/i} n]$ is given by elements with values in 
 \[ \big[\frac{v_{i-1} \cdot \log^{1/i} n }{2^{j} } , \frac{v_{i-1} \cdot \log^{1/i} n}{2^{j-1}} \big]. \] 
We run one of the following algorithms uniformly at random.

\begin{enumerate}[label=(\roman*)]
\item Select an  element  uniformly at random as discussed in \S\ref{sec:two-useful}. \label{alg:partOne}
\item For every part $P$, select a random interval $i\in [1\ldots \loglog n]$ and select a random level $j \in [4\log^{1/i} n]$. Select the first element above $\frac{v_{i-1} \cdot \log^{1/i} n }{2^{j} }$ in $P$.  \label{alg:partSecond}
\item For every part $P$, select a random interval $i\in  [1\ldots \log\log n]$ and if there is an element with value more than $2^{\log^{1/i} n}$ times the max of all the already seen elements in $\I_{i}$, selects it with constant probability, say $1/100$.  \label{alg:partThird}
\end{enumerate}

\subsubsection{The Analysis}
Since with constant probability our algorithm selects one of the $n$ elements uniformly at random (Algorithm~\ref{alg:partOne}), we can assume that $ v_0 \leq n^2 \cdot V^*$. We always condition on the event that $g_{\max}$ arrives in the interval $\I_0$, which happens with constant probability and implies $v_0 \geq v(g_{\max})$.
Moreover, we ignore parts $P$ where $V^*$ is below $v(g_{\max})/n^2$ because they do not contribute significantly to the benchmark. So from now assume 
\[ V^*/n^2 \leq  v_0 \leq n^2 \cdot V^*. \] 

We design an algorithm that gets value $\Omega(V^*/(\loglog n)^2)$ in each part $P$, which implies Theorem~\ref{thmPartitionMatroid} by linearity of expectation over parts.

 Let $v_i\at{\mathrm{red}} \in P$ for $i\in [1\ldots \log\log n]$ denote the value of the max red element that the adversary presents in $\I_{i}$. 

\begin{claim}
If there exists an  $i\in [1\ldots \log\log n]$ with $v_i\at{\mathrm{red}} > V^*\cdot \log^{1/i} n$ then the expected value of the algorithm is $\Omega(V^*/\loglog n)$.
\end{claim}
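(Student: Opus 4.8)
The plan is to use the large red element $v_i\at{\mathrm{red}} > V^* \cdot \log^{1/i} n$ together with Algorithm~\ref{alg:partThird} to capture a high-value element in part $P$, or else Algorithm~\ref{alg:partOne} to argue that such an element is not too large. Specifically, I would first note that with constant probability Algorithm~\ref{alg:partThird} is selected and the random interval index it picks equals the given $i$; this costs only an $O(1/\loglog n)$ factor. Conditioned on this, I examine the arrival of the max red element $R_i$ of value $v_i\at{\mathrm{red}}$ in $\I_i$. Since the green elements in $\I_i$ arrive at uniformly random times within $\I_i$ (independently of the adversary's placement of red elements), with constant probability $R_i$ arrives after the first, say, half of the green elements that will land in $\I_i$ — more usefully, I want the \emph{maximum value seen so far in $\I_i$} at the moment $R_i$ arrives to be at most $v_i\at{\mathrm{red}} / 2^{\log^{1/i} n}$, so that the triggering condition of Algorithm~\ref{alg:partThird} fires on $R_i$.

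The key step is controlling what has been seen in $\I_i$ before $R_i$ arrives. Here is the subtlety: other red elements could be placed by the adversary early in $\I_i$ with values between $v_i\at{\mathrm{red}}/2^{\log^{1/i} n}$ and $v_i\at{\mathrm{red}}$, which would prematurely raise the running max and prevent the trigger from firing on $R_i$ (or fire it on a smaller element). However, if the trigger fires on \emph{any} element $e$ in $\I_i$, that element has value at least $2^{\log^{1/i} n}$ times the previous running max, and in particular once we also use Algorithm~\ref{alg:partOne} to assume no element has value exceeding, say, $n^2 V^*$ (actually we need an upper bound on $v_i\at{\mathrm{red}}$ itself), we can argue the selected element still has value $\Omega(V^*)$ — we just need it to clear $V^*$. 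So I would split into two cases: either the running max in $\I_i$ reaches $V^*$ at some point before $R_i$ (in which case the \emph{next} trigger, when it fires, selects something of value at least $2^{\log^{1/i} n} \cdot V^* \gg V^*$), or it never does before $R_i$, in which case $R_i$ itself triggers (since $v_i\at{\mathrm{red}} > V^* \cdot \log^{1/i} n > 2^{\log^{1/i}n}\cdot(\text{running max})$ fails... ) — I need to be careful with the arithmetic here comparing $\log^{1/i} n$ and $2^{\log^{1/i} n}$; the condition in Algorithm~\ref{alg:partThird} uses the multiplicative factor $2^{\log^{1/i} n}$, not $\log^{1/i} n$, so the cleanest argument is: let $e^*$ be the \emph{first} element in $\I_i$ whose value exceeds $V^*$; the running max just before $e^*$ is below $V^*$; if additionally $v(e^*) \ge 2^{\log^{1/i}n}\cdot(\text{running max before }e^*)$ then Algorithm~\ref{alg:partThird} triggers on $e^*$ with probability $1/100$ and we collect value $\ge V^*$, done.

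The main obstacle I anticipate is the case where the running max climbs gradually past $V^*$ in small multiplicative steps so that no single element ever clears the $2^{\log^{1/i} n}$ jump — then Algorithm~\ref{alg:partThird} never triggers in $\I_i$. To handle this I would observe that in this scenario the adversary has planted many red elements in $\I_i$ with values spread across a range, and in particular there must be a red element of value in $[V^*, v_i\at{\mathrm{red}}]$; but then I should fall back on Algorithm~\ref{alg:partSecond}, which sets a threshold at level-$j$ boundaries of the form $v_{i-1}\log^{1/i}n / 2^j$ based on the \emph{previous} interval's max $v_{i-1}$ — and this requires relating $v_{i-1}$ to $V^*$, which the claim's hypothesis does not directly give. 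So the actual proof likely conditions further on $v_{i-1} \ge V^*$ (which follows from $g_2$-type arguments or from an earlier large red element, handled inductively across intervals as in the single-item value-maximization proof), or else the claim is really meant to be applied only at the \emph{smallest} such $i$, where the running max before $\I_i$ is still controlled. I would resolve this by choosing $i$ to be the \emph{smallest} index with $v_i\at{\mathrm{red}} > V^* \log^{1/i} n$, so that $v_{i-1} \le V^* \log^{1/(i-1)} n$ is under control, making the level structure of Algorithm~\ref{alg:partSecond} able to place a threshold in $[\tfrac12 V^*, V^* \cdot \mathrm{polylog}]$ that is cleared by $R_i$ but not by $g_{\max}$; combining the two algorithms and the $O(\loglog n)$ loss for guessing the interval gives the claimed $\Omega(V^*/\loglog n)$.
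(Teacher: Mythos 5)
There is a genuine gap: your argument never closes the case you yourself flag, and it misses the simple mechanism the paper's proof relies on. The paper proves this claim using Algorithm~\ref{alg:partSecond} alone: with constant probability that subroutine runs, with probability $1/\loglog n$ it guesses the given interval $i$, and with probability $\frac{1}{4\log^{1/i} n}$ it guesses the level (built from $v_{i-1}$) containing $v_i\at{\mathrm{red}}$; conditioned on these guesses the threshold is at least $v_i\at{\mathrm{red}}/2$, and since the element of value $v_i\at{\mathrm{red}}$ arrives in $\I_i$, some element of value at least the threshold is certainly selected. The point is a cancellation: the $\frac{1}{4\log^{1/i} n}$ probability paid for guessing the level is exactly repaid by the hypothesis $v_i\at{\mathrm{red}} > V^*\cdot \log^{1/i} n$, giving expected value at least $\frac{1}{\loglog n}\cdot\frac{1}{4\log^{1/i} n}\cdot \Omega\big(v_i\at{\mathrm{red}}\big) = \Omega(V^*/\loglog n)$. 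Crucially, one does not need to select an element of value at least $V^*$ with probability $\Omega(1/\loglog n)$; an element of value roughly $V^*\log^{1/i} n$ selected with probability roughly $\frac{1}{\loglog n \cdot \log^{1/i} n}$ already suffices in expectation. Your proposal aims at the stronger event (clearing $V^*$ with inverse-$\loglog$ probability), which is what drives it into the complications it cannot resolve.

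Concretely, your main route through Algorithm~\ref{alg:partThird} fails exactly in the scenario you concede: the adversary can pad $\I_i$ with red values increasing by factors of, say, $2^{(\log^{1/i} n)/2}$, so no element ever exceeds the running maximum by the required $2^{\log^{1/i} n}$ factor and the trigger never fires; in the paper, Algorithm~\ref{alg:partThird} is reserved for the opposite regime $v_i\at{\mathrm{red}} < V^*/2^{\log^{1/i} n}$ (catching the benchmark element as a large jump), not for this claim. Your fallback to Algorithm~\ref{alg:partSecond} then stalls on relating $v_{i-1}$ to $V^*$ and on restricting to the smallest bad index $i$ --- machinery the intended argument does not need, because it never tries to place a threshold near $V^*$; it only needs the guessed level's threshold to be within a factor $2$ of $v_i\at{\mathrm{red}}$ (with the implicit assumption, also present in the paper, that $v_i\at{\mathrm{red}}$ falls within the range of levels defined from $v_{i-1}$). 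So while several of your ingredients do appear elsewhere in this section, the proposal as written does not establish the claim.
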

\begin{proof}
With constant probability, our algorithm selects a random interval $i$ and selects a random level element in it (Algorithm~\ref{alg:partSecond}). Since w.p. $1/\loglog n$ it selects this $i$, and w.p.  $\frac14 \log^{1/i} n$ it selects the random level of $v_i\at{\mathrm{red}}$ in $\I_i$, the algorithm has expected value at least 
\[\frac{1}{\loglog n} \cdot \frac{1}{4 \log^{1/i} n} \cdot v_i\at{\mathrm{red}}    \geq    \frac{1}{4\loglog n} \cdot V^*. \qedhere
\] 
\end{proof}

By the last claim we can assume for all $i\in  [1\ldots \log\log n]$, we have $ v_i\at{\mathrm{red}} \leq V^*\cdot \log^{1/i} n$.
\begin{claim} If there exists an  $i\in [1\ldots \log\log n]$  with $v_i\at{\mathrm{red}} < V^*/2^{\log^{1/i} n}$ then the expected value of the algorithm is $\Omega(V^*/(\loglog n)^2)$.
\end{claim}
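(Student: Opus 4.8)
The plan is to recover value $\Omega(V^*)$ from the ``quiet'' interval $\I_i$ by using Algorithm~(iii), the spike-detector. I work inside the conditioning already in force: $g_{\max}$ lands in $\I_0$ (probability $\Omega(1)$), $V^*/n^2 \le v_0 \le n^2 V^*$, and no element has value exceeding $n^2 V^*$ (otherwise Algorithm~(i) already gives $\Omega(V^*)$). Since $g_{\max}\in\I_0$, every green element of $P$ arriving in $\I_1,\dots,\I_{\loglog n}$ has value at most $V^*$; in particular the benchmark element $g^*\in P$ has value exactly $V^*$ and, being green, picks its arrival time uniformly in $[0,1]$. I then condition further on $g^*\in\I_i$, an event of probability $\tfrac1{2\loglog n}$.

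On this event, the elements of $P$ appearing during $\I_i$ are red elements --- all of value \emph{strictly} below $V^*/2^{\log^{1/i} n}$ by hypothesis --- together with green elements of value at most $V^*$, one of which is $g^*$ at value $V^*$. I would run Algorithm~(iii) with its random interval set to $i$ (probability $\tfrac13\cdot\tfrac1{\loglog n}$) and argue that it detects an element exceeding $2^{\log^{1/i} n}$ times the running maximum of $\I_i\cap P$ whose value is $\Theta(V^*)$. In the clean case --- no green of value above $V^*/2^{\log^{1/i} n}$ precedes $g^*$ inside $\I_i$ --- the running maximum just before $g^*$ arrives is at most the largest red, hence $<V^*/2^{\log^{1/i} n}$, so $g^*$ beats it by a factor \emph{strictly} more than $2^{\log^{1/i} n}$ (this is exactly what the strict inequality in the hypothesis supplies); the spike fires, and because $g_{\max}\notin\I_i$ the first above-bar element --- which Algorithm~(iii) selects with probability $\tfrac1{100}$ --- has value $\Theta(V^*)$. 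Multiplying $\tfrac13$, $\tfrac1{\loglog n}$, $\tfrac1{100}$, $\Omega(1)$ (for $g_{\max}\in\I_0$) and $\tfrac1{2\loglog n}$ (for $g^*\in\I_i$) yields expected value $\Omega\big(V^*/(\loglog n)^2\big)$.

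The main obstacle is the case the adversary would actually aim for: since it chooses all green values, it can place several greens in the band $(V^*/2^{\log^{1/i} n},\,V^*]$ so that, if a few of them land in $\I_i$ before $g^*$ in increasing order, they raise the running maximum in steps each below the factor $2^{\log^{1/i} n}$, no spike fires, and the detector might only catch something of value $\Theta(V^*/2^{\log^{1/i} n})$. I would defeat this by a fragility argument: such a blocking pattern needs many designated greens to fall inside the short interval $\I_i$ in a particular order, which conditioned on $g^*\in\I_i$ still has probability $o(1)$, so with probability $\Omega(1)$ the first above-bar element of $\I_i$ really has value $\Theta(V^*)$. Failing that, in the complementary regime where $P$ genuinely contains many greens of value $\Theta(V^*)$, I would instead invoke Algorithm~(ii): with $v_{i-1}$ a polynomial-factor estimate of $V^*$, one of its $O(\log^{1/i} n)$ level thresholds straddles $V^*$ and catches such a green, giving $\Omega(V^*/\loglog n)$. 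A last routine point is to track the $\log^{1/i} n$ factors and the open/closed boundaries in Algorithm~(iii)'s comparison against the strict inequality $v_i\at{\mathrm{red}} < V^*/2^{\log^{1/i} n}$.
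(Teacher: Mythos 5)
Your main line is exactly the paper's: run the spike-detection subroutine (iii), guess the interval $i$ (probability $\frac{1}{3\loglog n}$), condition on $g_{\max}\in\I_0$ and on the benchmark element of $P$ landing in $\I_i$ (probability $\Theta(1/\loglog n)$); since every red in $\I_i$ has value below $V^*/2^{\log^{1/i} n}$, the benchmark element exceeds the running maximum of $\I_i$ by more than the factor $2^{\log^{1/i} n}$ and is taken with constant probability, giving $\Omega\big(V^*/(\loglog n)^2\big)$. However, the two patches you add contain genuine gaps. First, the ``fragility'' argument fails: blocking does \emph{not} require many designated greens to fall inside $\I_i$ in a particular order --- a \emph{single} green of value in $(V^*/2^{\log^{1/i} n}, V^*)$ arriving in $\I_i$ before the benchmark element already kills the spike at the benchmark element (and even if that green itself spikes, subroutine (iii) takes it only with probability $1/100$, and its value may be a factor up to $2^{\log^{1/i} n}$ below $V^*$, so that is not a win). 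Since the adversary may plant $\omega(\loglog n)$ such greens, each preceding the benchmark element inside $\I_i$ with probability $\Theta(1/\loglog n)$, the blocking event can have probability $1-o(1)$, not $o(1)$. Second, the fallback mischaracterizes the complementary regime: the blockers only need value above $V^*/2^{\log^{1/i} n}$, not $\Theta(V^*)$, so a level of subroutine (ii) catching one of them may yield value a factor $2^{\log^{1/i} n}$ (e.g.\ $2^{\sqrt{\log n}}$ for $i=2$) below $V^*$; moreover, the assertion that some threshold $v_{i-1}\log^{1/i} n/2^j$ ``straddles'' $V^*$ requires $v_{i-1}$ to be within a $2^{O(\log^{1/i} n)}$ factor of $V^*$, which you have not established in this regime.

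You are also missing the counting step that the paper's proof does supply. You assert that ``the first above-bar element has value $\Theta(V^*)$,'' but even in your clean case the reds below the cap can have internal jumps of factor $2^{\log^{1/i} n}$ among themselves, so earlier above-bar elements of tiny value can occur, and each one consumes the single slot of part $P$ with probability $1/100$. The paper closes this by observing that each jump multiplies the interval's running maximum by at least $2^{\log^{1/i} n}$, so within the relevant value range (the $4\log^{1/i} n$ levels) there are only $O(1)$ such jumps; hence with constant probability none of the earlier jumps is selected, and the benchmark element --- itself a jump, by the hypothesis $v_i\at{\mathrm{red}} < V^*/2^{\log^{1/i} n}$ --- is then selected with probability $1/100$. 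Your argument needs this (or a substitute) to get a constant conditional selection probability. To be fair, the paper's own two-line proof is silent on the green-blocker issue you correctly flag, but the resolution you propose does not repair it, so as written the proposal has a genuine gap.
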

\begin{proof}
With constant probability the algorithm guesses one of the intervals $i$ and if there is an element with value more than $2^{\log^{1/i} n}$ times the max of all the already seen elements in $\I_{i}$, selects it with constant probability (Algorithm~\ref{alg:partThird}). With $1/\loglog n$ probability the algorithm selects this particular $i$ and with  $1/\loglog n$ probability $V^*$ appears in this interval with value at least $2^{\log^{1/i} n}$ times the max seen element in this interval. Notice there can be at most $O\Big(\frac{4 \log^{1/i} n}{\log^{1/i} n}\Big) = O(1)$ elements with such large jumps in value in this interval. In this case our algorithm selects $V^*$ with constant probability. 
\end{proof}

Finally, we are only left with the case where for all $i\in  [1\ldots \log\log n]$ value 
$\frac{V^*}{2^{\log^{1/i} n}} \leq v_i\at{\mathrm{red}} \leq V^*\cdot \log^{1/i} n,
$
which we handle using Algorithm~\ref{alg:partSecond}.

\begin{claim} If for all $i\in [1\ldots \log\log n ]$ we have 
\[ \frac{V^*}{2^{\log^{1/i} n}} \leq v_i\at{\mathrm{red}} \leq V^*\cdot \log^{1/i} n \]  then the expected value of the algorithm is $\Omega(V^*/(\loglog n)^2)$.
\end{claim}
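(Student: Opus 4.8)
The plan is to show that Algorithm~\ref{alg:partSecond} alone already collects $\Omega(V^*/(\loglog n)^2)$ in expectation from the fixed part $P$; summing over all parts and using linearity then yields Theorem~\ref{thmPartitionMatroid} in this case. Throughout we keep the standing conditioning on $g_{\max}$ arriving in $\I_0$. Write $g^P$ for the green element of $P$ realizing the benchmark value $V^*$ (so $g^P \neq g_{\max}$), and recall $g^P$ arrives at a uniformly random time, independently of $g_{\max}$.

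First I would establish a ``sandwich'' on the observed per-part maxima. For every $i \in [1\ldots\loglog n]$, the only green element of $P$ that can exceed value $V^*$ is $g_{\max}$, which lies in $\I_0$; hence every element of $P$ arriving in $\I_i$ that is not red has value at most $V^*$. Combining this with the hypothesis $\frac{V^*}{2^{\log^{1/i}n}} \le v_i\at{\mathrm{red}} \le V^*\log^{1/i}n$ and the trivial bound $v_i \ge v_i\at{\mathrm{red}}$, we obtain
\[
  \frac{V^*}{2^{\log^{1/i}n}} \;\le\; v_i \;\le\; V^*\log^{1/i}n \qquad \text{for all } i \in [1\ldots\loglog n].
\]
The key point is that $\log^{1/i}n = (\log n)^{1/i}$ decreases in $i$ and collapses to a constant at the last index: $\log^{1/\loglog n}n = 2$, and $\log^{1/(\loglog n-1)}n \le 4$ for $n$ large. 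So the index $i^{\star} := \loglog n$ has $v_{i^{\star}-1} \in [V^*/16,\, 4V^*]$, and interval $\I_{i^{\star}}$ carries exactly $4\log^{1/i^{\star}}n = 8$ levels, whose thresholds are $v_{i^{\star}-1},\, v_{i^{\star}-1}/2,\, \ldots,\, v_{i^{\star}-1}/128$; a routine check shows that, regardless of where $v_{i^{\star}-1}$ lies in $[V^*/16, 4V^*]$, at least six of these thresholds lie in $[V^*/2048,\, V^*]$. Call such levels \emph{good}, and note every good level has threshold $\tau \in [\Omega(V^*),\, V^*]$.

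Now I would condition on $g^P$ arriving in $\I_{i^{\star}}$, which has probability $|\I_{i^{\star}}| = \tfrac{1}{2\loglog n}$ and is independent of the $g_{\max}$-conditioning. Suppose moreover that the outer algorithm picks Algorithm~\ref{alg:partSecond} (probability $1/3$), that its random interval equals $i^{\star}$ (probability $1/\loglog n$), and that its random level is good (probability $\ge 6/8$). Since the good level's threshold satisfies $\tau \le V^* = v(g^P)$ and $g^P$ arrives in $\I_{i^{\star}}$, there is an element of $P$ with value at least $\tau$ arriving after $\checkp_{i^{\star}-1}$, so Algorithm~\ref{alg:partSecond} selects one --- possibly a red element arriving before $g^P$, but in any case of value at least $\tau = \Omega(V^*)$ and lying within $\I_{i^{\star}}$. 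Multiplying the probabilities of these (independent) events gives $\Omega(1/(\loglog n)^2)$, so the expected value collected from $P$ is $\Omega(V^*/(\loglog n)^2)$; summing over parts finishes the proof.

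The only real work is the second step: verifying the sandwich (which uses that no green of $P$ besides $g_{\max}$ beats $V^*$, together with the hypothesis supplying both a lower and an upper bound on $v_i\at{\mathrm{red}}$), and then checking the elementary identities $(\log n)^{1/\loglog n} = 2$ and $(\log n)^{1/(\loglog n-1)} = O(1)$, so that the last interval's handful of levels all sit within a constant factor of $V^*$ with at least one threshold in $[\Omega(V^*), V^*]$. Matching the number of levels against the width of the window for $v_{i^{\star}-1}$ is just bookkeeping with absolute constants. A minor subtlety is that the selected element need not be $g^P$ itself, but since its value is still at least $\tau$ this is harmless.
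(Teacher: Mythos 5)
Your proof is correct and takes essentially the same route as the paper's: apply the level-guessing subroutine (Algorithm~(ii)) to the final interval, where the hypothesis pins $v_{i^\star-1}$ to within a constant factor of $V^*$, so only constantly many levels exist and a constant fraction of their thresholds lie in $[\Omega(V^*), V^*]$, then pay $1/\loglog n$ for guessing the interval and $\Theta(1/\loglog n)$ for the benchmark element of $P$ landing in it. The paper compresses this into one terse line (using interval $\loglog n-1$ instead of $\loglog n$); your write-up merely supplies the sandwich on $v_i$, the explicit constants, and the observation that the selected element need only have value $\Omega(V^*)$, all of which the paper leaves implicit.
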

\begin{proof}
Consider Algorithm~\ref{alg:partSecond}. It selects $i = \loglog n -1$ w.p.  $1/\loglog n$. Moreover, suppose $V^*$ appears in $\I_{\loglog n-1}$. Now since there are only a constant number of levels in this interval, our algorithm  selects an element of value at least $V^*$ with constant probability.
\end{proof}
We have shown that in every case the algorithm has expected value $\Omega(V^*/(\loglog n)^2)$ for any fixed part $P$. This implies Theorem~\ref{thmPartitionMatroid} by linearity of expectation over parts.

\subsection{$O(\log n)$-approx for General Matroids}\label{sec:general-matroids-proof}
\obsGenMatroid*

\begin{proof}
Notice that no element can have weight more than $nr$ times the second max-element because w.p. $1/n$ our algorithm selects one of the $n$ elements uniformly at random. Given this, condition on the event that the max element with value $v$ lands in the first half of the input. Define $2\log (nr)$ exponentially separated levels  as follows: 
\[ \big[ \frac{v}{2^{\log (nr)}}, \frac{v}{2^{\log (nr)-1}} \big\rangle, \big[ \frac{v}{2^{\log (nr)-1}}, \frac{v}{2^{\log (nr)-2}} \big\rangle,  \ldots, \big[ \frac{v}{2}, v \big\rangle, \ldots, \big[ {v}{2^{\log (nr)-1}}, {v}{2^{\log (nr)}} \big\rangle .\]
Since at least one of these intervals contains at least $2\log(nr)$ fraction of $\OPT$, we can guess that interval and run a greedy algorithm, i.e., accept any element with value in that interval or above if it is independent.
\end{proof}



\section{Conclusion}

In this paper we defined a robust model for the secretary problem, one
where some of the elements can arrive at adversarially chosen times,
whereas the others arrive at random times. For this setting, we argue
that a natural is the optimal solution on all but the highest-valued
green item (or even simpler, the optimal solution on the green items,
minus the single highest-value item). This benchmark reflects the fact
that we cannot hope to compete with the red (adversarial) items, and
also cannot do well if all the green value is concentrated in a single
green item. 

We show that for the case where we want to pick $K$ items, or if we have
a knapsack of size $K$, we can get within $(1-\e)$ of this benchmark,
assuming $K$ is large enough. We can also get non-trivial results for
the single-item case, where our benchmark is now the second-highest
valued green item. In the ordinal setting where we only see the relative
order of arriving elements and the goal is to maximize the probability
of getting an element whose value is above the benchmark, we use the
minimax principle to show existence of an $O(\log^2 n)$-approximation
algorithm in \S\ref{sec:ordinal_polylog}. In the value maximization
setting, we give an $O(\log^* n)^2$-approximation algorithm in
\S\ref{sec:card-case}. We also show $O(\log \log n)$-competitiveness for
partition matroids.

The results above suggest many question. Can we improve the lower bound
on the size required for $(1-\e)$-competitiveness? Can we get a
constant-competitive algorithm for the single-item case? For the
probability-maximization problem, our proof only shows the existence of
an algorithm; can we make this constructive? More generally, many of the
algorithms for secretary problems seem to overfit to the model, at least
in the presence of small adversarial changes: how can we make our
algorithms robust?




\subsection*{Acknowledgments}

We thank Thomas Kesselheim and Marco Molinaro for sharing their model
and thoughts on robust secretary problems with us; these have directly inspired our model. 

\medskip
\medskip

\appendix


\section{Hard Benchmarks}
\label{sec:lower_bounds}

We show that for the benchmark $V^* \defeq v(g_{max})$, every algorithm has an approximation of at most $O(1/n)$.

\begin{restatable}[Lower Bound for $g_{\max}$]{observation}{obsLbd}
  \label{obsLbd}
  Any randomized algorithm for the single-item Byzantine secretary
  problem cannot select the highest-value good/green item
  with probability larger than $1/(|R|+1)$.
\end{restatable}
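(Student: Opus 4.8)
The plan is to prove the bound via Yao's principle. It suffices to exhibit a single distribution $\mathcal{D}$ over \emph{legal} instances of the single-item Byzantine secretary problem, each having exactly $m := |R|$ red elements, such that every \emph{deterministic} algorithm selects the (unique) green element with probability at most $1/(m+1)$ under $\mathcal{D}$. Then for any randomized algorithm $\calA$ we have $\mathbb{E}_{I \sim \mathcal{D}}\big[\Pr[\calA \text{ picks green on } I]\big] \le 1/(m+1)$, so some instance $I^\star$ in the support of $\mathcal{D}$ — which still has $|R| = m$ — witnesses the claimed impossibility.

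The distribution $\mathcal{D}$ is as simple as possible. Fix $n := m+1$ elements, all carrying the \emph{same value}; designate a uniformly random one of them to be the single green element and the remaining $m$ to be red; and draw the arrival time of \emph{every} element (green and red alike) independently and uniformly from $[0,1]$. Each realization is a valid instance of the model: the adversary is allowed to place the red elements at arbitrary times, in particular at i.i.d.\ uniform times, and the randomness of $\mathcal{D}$ is only over which element is green and over the concrete red arrival times.

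The heart of the argument is that under $\mathcal{D}$ a deterministic algorithm learns nothing relevant. Since all values coincide, the algorithm's transcript is a function of the arrival times alone; and the arrival times are i.i.d.\ and hence independent of the (uniformly random) identity of the green element. Therefore the element the algorithm ultimately selects — or ``nothing'' — is determined by the arrival times and is independent of which element is green. Because the green element is uniform over the $n$ elements, it coincides with the selected element with probability exactly $\tfrac1n$ times the probability that the algorithm selects anything, hence at most $\tfrac1n = \tfrac1{m+1}$. Taking expectations over the coins of a randomized algorithm preserves the bound, and invoking Yao as above finishes the proof.

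The part requiring care is conceptual rather than computational: one must check that ``the adversary chooses the red arrival times'' genuinely permits a distribution that scatters the reds at random times, and that Yao's principle then lets us extract a \emph{fixed} hard instance (so that the statement, quantified over all instances, is established). Fixing the red times in advance would \emph{not} suffice — already for $|R|=1$, an algorithm that knows the red element sits at a particular time can separate it from the green — so the randomization of the red arrival times is essential. This construction is also a clean substitute for the ``increasing, widely spaced red values'' picture sketched in the introduction.
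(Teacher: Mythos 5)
Your high-level scaffolding is the same as the paper's (Yao's principle applied to a hard distribution over instances, with the correct and important remark that the red arrival times must themselves be randomized), but the hard distribution you construct has a genuine problem: it rests entirely on making all $m+1$ values identical. In the ordinal setting in which this observation is proved -- and the paper's proof uses only the relative order of values -- $v(\cdot)$ is required to define a \emph{total ordering} on the elements, so an instance in which every value coincides is not a legal input; the cardinal-value setting, where one could try to manufacture ties, is precisely the case the paper flags as requiring additional techniques beyond this proof. Moreover, even if ties were admitted, your instances establish only a vacuous form of the claim: since every element has value $v(g_{\max})$, an algorithm that selects \emph{anything} attains the benchmark value exactly, so nothing is shown about the hardness of the benchmark $V^* = v(g_{\max})$ -- which is the stated purpose of this appendix (every algorithm gets at most an $O(1/n)$ fraction of $v(g_{\max})$) and the way the observation is later invoked for partition matroids. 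In short, the step "identity of the green element is information that doesn't matter" is bought by erasing exactly the structure the observation is supposed to be about.

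The missing idea is an indistinguishability construction with strictly ordered values. The paper places the $|R|$ red elements at i.i.d.\ uniform times with values \emph{increasing in arrival time}, all below $g_{\max}$ and above every other green element. Then each of the $|R|+1$ ``big'' arrivals is a new prefix maximum, and conditioned on the history up to any such arrival, the green maximum is equally likely to be any of the record-setters not yet ruled out; this yields the same $1/(|R|+1)$ bound while the red values can be chosen arbitrarily small compared to $v(g_{\max})$, so missing $g_{\max}$ really does forfeit essentially all the value. That is what makes the observation meaningful and extensible to the value-maximization version, and it is the part your construction would need to replace the all-equal-values trick with.
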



\begin{proof}
  We use Yao's minimax lemma, so it is enough to construct an input
  distribution $\mathcal{B}$ for which no deterministic algorithm can
  achieve an approximation better than $\frac{1}{|R| + 1}$. The
  distribution is as follows. The red elements arrive at random
  times, that is
  $(t_{r_1}, t_{r_2}, ... t_{r_{|R|}}) \sim U[0,1]^{|R|}$. The linear
  ordering among the elements is set such that the red elements are
  strictly increasing according to their arrival time, or formally:
  $t_{r_i} > t_{r_j} \implies v(r_i) > v(r_j)$. The maximum element is
  green and all the other green elements are smaller than all red
  elements. Formally:
  $v(g_{max}) > v(r_1) > v(r_2) ... > v(r_{|R|}) > v(g_2) > v(g_3)
  \ldots > v(g_{|G|})$. This fully defines the input distribution.

  All the arrival times are distinct with probability $1$. Let $\mathcal{K}(t)$ denote the information seen by the algorithm up to and including time $t.$ Partition the probability space according to $S \defeq \{ t_{g_{max}}, t_{r_1}, t_{r_2}, ... t_{r_{|R|}} \}$ and $L \defeq (t_{g_2}, t_{g_3}, ... t_{g_{|G|}}).$ Let $s_1 < s_2 < ... < s_{|R| + 1}$ be the elements of $S.$ Let $M_i \defeq \{t_{g_{max}} = s_i\}.$ By definition, we have $Pr[M_i | S, L] = \frac{1}{|R| + 1}.$ Note that, since the red items arrive in increasing order of value and the green item has maximum value, we have $Pr[M_i | S, L, \mathcal{K}(t)] = Pr[M_j | S, L, \mathcal{K}(t)]$ for all $t \leq s_i, s_j.$ Therefore, $Pr[M_i | S, L, \mathcal{K}(s_i)] \leq \frac{1}{|R|+2-i}.$ In other words, there is no way to distinguish the maximum green element from the red elements before it is too late, that is at the time of the green element's arrival. Thus, by a simple inductive argument, the proof is finished.
\end{proof}

Using techniques presented in~\cite{CDFS-2018}, we can extend this result to the value case as well.



\section{Relaxing the Assumption that $n$ is Known}
\label{sec:relaxn}

In this section we extend our results to some settings where $n$ is unknown. Most importantly, observe that all of the results in this paper hold even if $n$ is \emph{known only up to a constant factor} with at most a constant factor degradation in the quality of the result. As a simple example, note that picking a uniformly random element from an $n$-element sequence when the assumed number of elements is $\tilde{n} \in [n, 2n]$ will select an element $x \in U$ with probability $p_x \in [\frac{1}{2n}, \frac{1}{n}]$, leading to a degradation in the result by a factor of at most $2$, which we typically ignore in this paper.

This still leaves us open to the possibility that we do not even know the scale of $n$. Surprisingly, it is still possible to ``guess'' $\tilde{n}$ while only incurring a loss of $\widetilde{O}(\log n)$ in the quality, even if there is no prior known upper limit on $n$.\footnote{By $\widetilde{O}(f(n))$ we mean $f(n) \cdot \mathrm{poly}(\log f(n))$.} The following claim formalizes this result.

\begin{claim}
  There exists a distribution $X$ over the integers such that for every $n \ge 1$ the probability that the sampled number $\tilde{n} \sim X$ is within a constant factor of $n$, is at least $1 / \widetilde{O}(\log n)$.
\end{claim}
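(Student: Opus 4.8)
The plan is to construct an explicit distribution $X$ over the positive integers whose tail decays just slightly faster than $1/k$, so that it is normalizable, yet so slowly that for each target $n$ a non-negligible fraction of the mass sits within a constant factor of $n$. A natural candidate is to put mass on the ``dyadic scales'' $2^j$ for $j \ge 1$: choose $\tilde n = 2^J$ where $J$ is a random integer with $\Pr[J = j] \propto 1/(j \cdot (\log(j+1))^2)$ (or any sequence $a_j$ with $\sum_j a_j < \infty$ but $a_j = 1/\widetilde{O}(j)$, e.g. $a_j = \tfrac{1}{j \log^2 (j+2)}$). The series $\sum_{j\ge 1} \tfrac{1}{j \log^2(j+2)}$ converges, so after normalizing by its (constant) sum $Z$ we get a genuine probability distribution on $\{2, 4, 8, \dots\}$.

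First I would verify normalizability: $\sum_{j \ge 1} \tfrac{1}{j \log^2(j+2)} = Z < \infty$ by comparison with $\int_2^\infty \tfrac{dx}{x \log^2 x} = \tfrac{1}{\log 2} < \infty$. Second, fix any $n \ge 1$ and let $j^* := \lceil \log_2 n \rceil$, so that $2^{j^*} \in [n, 2n)$ and in particular $\tilde n = 2^{j^*}$ is within a factor $2$ of $n$. Then
\[
  \Pr[\tilde n = 2^{j^*}] \;=\; \frac{1}{Z} \cdot \frac{1}{j^* \log^2(j^*+2)} \;\ge\; \frac{1}{Z} \cdot \frac{1}{(\log_2 n + 1)\,\log^2(\log_2 n + 3)} \;=\; \frac{1}{\widetilde{O}(\log n)},
\]
using $j^* \le \log_2 n + 1$ and that $Z$ is an absolute constant. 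This is exactly the claimed bound, since $\widetilde{O}(\log n)$ absorbs the extra $\mathrm{polylog}(\log n)$ factor from the $\log^2(\log_2 n + 3)$ term. (If one insists on a cleaner statement without the iterated-log factor, one can instead only ask for the weaker $1/O(\log^{1+\delta} n)$ using $a_j = 1/j^{1+\delta}$, but the $\widetilde{O}$ version above is strictly stronger.)

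The only mild subtlety — and the step I would be most careful about — is the interface between ``$\tilde n$ within a constant factor of $n$'' and the downstream use in the algorithms: we need $\tilde n \ge n$ (an over-estimate) for the ``select a random element'' subroutine and the checkpoint arguments to go through, which is why I sample $\tilde n = 2^{j^*}$ with $j^* = \lceil \log_2 n \rceil$ rather than the nearest power of two; this guarantees $n \le \tilde n < 2n$. Everything else is a routine convergence estimate, and the remark already in \S\ref{sec:relaxn} (that knowing $n$ up to a constant factor costs only a constant factor) then closes the loop: with probability $1/\widetilde{O}(\log n)$ we guess the scale correctly, and conditioned on that event all results degrade by at most a constant, for an overall $\widetilde{O}(\log n)$ loss.
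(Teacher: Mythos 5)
Your proposal is correct and takes essentially the same route as the paper: both construct an explicit heavy-tailed distribution with weight roughly $1/(j\,\mathrm{polylog}\,j)$ at dyadic scale $2^j$ (you place the mass directly on powers of two, while the paper spreads density $\propto 1/(k\log_2 k(\log_2\log_2 k)^2)$ over all integers and then aggregates a dyadic block), so that the scale of $n$ is hit with probability $1/\widetilde{O}(\log n)$. The only nitpick is the trivial edge case $n=1$ (where $\lceil\log_2 n\rceil=0$ is outside your support, handled by taking $j^*=\max(1,\lceil\log_2 n\rceil)$), which the paper likewise dismisses by assuming $n$ large.
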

\begin{proof}
  Consider the sequence $a_k \defeq \frac{1}{k (\log_2 k) (\log_2 \log_2 k)^2}$ defined for $k \ge 2$. It is well-known that this sequence converges, i.e., $\sum_{k=2}^\infty a_k = O(1)$. A simple way to see this is by noting that a non-negative decreasing sequence $(A_k)_k$ converges if and only if $(2^k A_{2^k})_k$ converges~\cite[Thm 3.27]{rudin1976principles}.
  
  Let $\sum_k^\infty A_k \sim \sum_k^\infty B_k$ be the equivalence relation denoting that $(A_k)_k$ and $(B_k)_k$ either both converge or both diverge. Then the above fact implies that $\sum_k^\infty \frac{1}{k \log_2 k (\log_2 \log_2 k)^2} \sim \sum_k^\infty \frac{1}{k (\log_2 k)^2} \sim \sum_k^\infty \frac{1}{k^2}$, where the last sequence clearly converges.
  
  We can assume without loss of generality that $n \ge 100$ by handling those cases separately. The strategy for guessing the estimate $\tilde{n}$ is now immediate: we sample $\tilde{n}$ from $\mathbb{Z}_{\ge 2}$ according to the distribution $\Pr[\tilde{n} = k] = a_k / Z$ where $Z \defeq \sum_{k=100}^\infty a_k = O(1)$. We observe that $\Pr[\tilde{n} \in \ocinter{2^{k-1} \ldots 2^k}] \ge \frac{1}{2} 2^k a_{2^k} / Z = \widetilde{\Omega}(1/k)$. Let $k'$ be the unique index such that $n \in \ocinter{2^{k'-1} \ldots 2^{k'}}$, hence $k' = \Theta(\log n)$. Then $\Pr[\tilde{n} \in \ocinter{2^{k'} \ldots 2^{k'+1}}] = \widetilde{\Omega}(1/k') = \widetilde{\Omega}(1/\log n)$. But also in that case we have that $n \in [\tilde{n}/4, \tilde{n}]$ and we are done.
\end{proof}

Finally, consider an important case where the fraction of red elements is bounded away from $1 - \Omega(1)$. This is a reasonable assumption for most applications, e.g., online auctions, where we do not expect that most of the arrivals will be chosen by an adversary. By simply observing the first half of the sequence, i.e., $\cointer{0, 1/2}$, we can typically estimate $n$ up to a constant while degrading the expected output of our algorithms by at most a constant factor.

\begin{claim} If there is a constant $\varepsilon<1$ such that the fraction of red elements $\frac{|R|}{|R| + |G|} \leq \varepsilon$ then we can estimate $n$ up to a constant factor by time $t=1/2$.
\end{claim}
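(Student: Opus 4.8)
The plan is to estimate $n$ simply by counting how many elements have arrived by time $1/2$. Let $N$ denote the number of elements observed in $[0, 1/2)$. On the one hand, $N \le n$ always, since we cannot observe more elements than exist. On the other hand, each green element lands in $[0,1/2)$ independently with probability $1/2$, so the number of greens arriving before time $1/2$ is distributed as $\mathrm{Bin}(|G|, 1/2)$, with mean $|G|/2$; by a Chernoff bound, provided $|G|$ exceeds a sufficiently large absolute constant, this number is at least $|G|/4$ with probability $1 - o(1)$. Since we only need a constant-factor estimate, this can equally be phrased as an in-expectation or constant-probability guarantee, absorbing the failure event into the constant-factor degradation of the downstream algorithm.

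Next I would invoke the hypothesis: from $\frac{|R|}{|R|+|G|} \le \varepsilon$ we get $|G| = n - |R| \ge (1-\varepsilon)\, n$, i.e.\ the green elements form a constant fraction of the universe. Combining this with the two bounds on $N$, with probability $1 - o(1)$ we obtain
\[
  \tfrac{1-\varepsilon}{4}\, n \;\le\; \tfrac{|G|}{4} \;\le\; N \;\le\; n,
\]
so that $N \le n \le \tfrac{4}{1-\varepsilon}\, N$. Hence reporting $\widehat{n} := \tfrac{2}{1-\varepsilon}(N+1)$ at time $1/2$ yields an estimate of $n$ accurate to within a constant factor, the constant depending only on $\varepsilon$; plugging $\widehat n$ in place of $n$ wherever our algorithms use the total count then costs only another constant factor, by the first claim of this section.

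The only point requiring care is that the Chernoff step needs $|G|$, hence $n$, to exceed some threshold $n_0 = n_0(\varepsilon)$ (one can take $n_0 = O(1/(1-\varepsilon))$). For $n \le n_0$ we have $n = O(1)$ with the constant depending on $\varepsilon$, and since $\widehat n$ as defined always lies in $[\tfrac{2}{1-\varepsilon},\, \tfrac{2}{1-\varepsilon}(n_0+1)]$, it is trivially within a constant factor of $n \in [1, n_0]$ in this regime as well. I expect this boundary case to be the main (minor) nuisance; the substance of the argument is just the observation $|G| = \Omega(n)$ together with a routine concentration bound for the arrivals of the greens.
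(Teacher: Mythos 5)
Your proposal is correct and follows essentially the same route as the paper: observe the stream until $t=1/2$, note that the greens alone contribute $\Omega(n)$ arrivals in expectation since $|G| \ge (1-\varepsilon)n$, apply a Chernoff bound, and use the observed count as the estimate. Your write-up is just a more explicit version (spelling out the two-sided bound, the constant in the estimator, and the small-$n$ boundary case), so there is nothing further to add.
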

\begin{proof}
  We run a simple preprocessing step to estimate $n$ up to a constant factor by $t=1/2$. Notice that the expected number of green elements to arrive in the interval $\cointer{0,1/2}$ is $0.5 \cdot |G|  = 0.5 \cdot n(1-\varepsilon) = \Omega(n)$. Since by simple Chernoff bounds this means that w.h.p. we see $\Omega(n)$ elements in the first half, we run a simple algorithm that does not select any element till $t=1/2$, and then use the number of elements that arrive in $\cointer{0,1/2}$ as an estimate of $n$. 
\end{proof}



\section{Minimax}\label{sec:semi-finite-minimax}

In this section we argue that an $\alpha$-payoff (i.e., the probability of selecting the second-max element or better is at least $\alpha$) known distribution algorithm for the ordinal single-item Byzantine secretary implies an $\alpha$-payoff algorithm for the general, worst-case input, setting. This can be directly modeled as a two-player game where player A chooses an algorithm $\calA$ and player B chooses a distribution over the input instances $\calB$. Our coveted result would go along the lines of
\begin{align*}
  \sup_{\calA} \inf_{\calB} K(a, b) = \inf_{\calB} \sup_{\calA} K(\calA, \calB),
\end{align*}
where $K(\calA, \calB)$ denotes the payoff when we run algorithm $\calA$ on the input distribution $\calB$. The left-hand side denotes the worst-case input setting, while the right-hand side denotes the known distribution setting.

The main challenge in proving such a claim stems from the infiniteness
of the set of algorithms and set of input distributions. Indeed, if
one makes no finiteness assumption for either $A$ or $B$, the Minimax
property can fail even for relatively well-behaved two-player
games~\cite{parthasarathy1970games}. On the other hand if both $A$ and
$B$ would be finite, then the result would follow from the classic Von
Neumann's Minimax~\cite{neumann1928theorie}.

\begin{fact}[Von Neumann's Minimax]
  \label{fact:von-neumann-minimax}
  Let $A$ and $B$ be finite sets. Denote by $\calD(A)$ and $\calD(B)$ distributions over $A$ and $B$, respectively. Then for any matrix of values $K : A \times B \to \mathbb{R}$ it holds that
  \begin{align}
    \max_{a \in \calD(A)} \min_{b \in \calD(B)} K(a, b) = \min_{b \in \calD(B)} \max_{a \in \calD(A)} K(a, b).
  \end{align}  
\end{fact}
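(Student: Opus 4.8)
The statement is the classical finite von Neumann minimax theorem, which the paper rightly quotes rather than reproves; were one to prove it, the cleanest route is a reduction to strong linear-programming duality. Only the inequality $\max_a\min_b K(a,b)\ge\min_b\max_a K(a,b)$ has content: the reverse inequality holds for every $K$, since for all $a\in\calD(A)$ and $b\in\calD(B)$ we have $\min_{b'}K(a,b')\le K(a,b)\le\max_{a'}K(a',b)$, so fixing $a$ and minimizing over $b$ gives $\min_{b'}K(a,b')\le\min_{b}\max_{a'}K(a',b)$, and maximizing over $a$ finishes it. So the plan is to exhibit a single mixed strategy $x^\star\in\calD(A)$ with $K(x^\star,b)\ge\min_{b\in\calD(B)}\max_{a\in\calD(A)}K(a,b)$ for every $b$.

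First I would record the standard reduction that the ``inner'' optimizations may be taken over \emph{pure} strategies: for fixed $a\in\calD(A)$ the map $b\mapsto K(a,b)$ is linear on the simplex $\calD(B)$, hence attains its minimum at a vertex, and symmetrically on the other side. Thus $\max_{a\in\calD(A)}\min_{b\in\calD(B)}K(a,b)=\max_{x\in\calD(A)}\min_{b\in B}\sum_{a\in A}x_a K(a,b)$, and likewise $\min_{b\in\calD(B)}\max_{a\in\calD(A)}K(a,b)=\min_{y\in\calD(B)}\max_{a\in A}\sum_{b\in B}y_b K(a,b)$. Then I would write the first quantity as the optimum of the linear program: maximize $v$ subject to $x\ge 0$, $\sum_a x_a=1$, and $\sum_a x_a K(a,b)\ge v$ for all $b\in B$. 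This LP is feasible and bounded (feasible $x$ range over a nonempty compact simplex, and $v\le\max_{a,b}K(a,b)$), so strong duality applies. A routine transposition shows the dual is: minimize $w$ subject to $y\ge 0$, $\sum_b y_b=1$, and $\sum_b y_b K(a,b)\le w$ for all $a\in A$ — here the dual variable $y_b\ge 0$ pairs with each constraint $\sum_a x_a K(a,b)\ge v$, and the free dual variable $w$ coming from the normalization $\sum_a x_a=1$ becomes the objective. The dual optimum is exactly $\min_{y\in\calD(B)}\max_{a\in A}\sum_b y_b K(a,b)$, i.e.\ the right-hand side of the Fact, so equating primal and dual optima proves equality.

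The step I expect to be the most delicate is the bookkeeping in forming the dual and checking it simplifies to the stated $\min_y\max_a$ form rather than to something weaker; the only analytic ingredient, strong LP duality, is unconditional here because both players' strategy sets are nonempty compact polytopes (so all the $\sup$'s and $\inf$'s are in fact attained maxima and minima). If one prefers to avoid invoking LP duality as a black box, the same two-line dual is obtainable from Farkas' lemma, or from a separating-hyperplane argument applied to the convex hull of the columns $\{(K(a,b))_{a\in A} : b\in B\}\subseteq\reals^{A}$; that is the fallback I would use, but the LP-duality phrasing is the shortest and is all that is needed here.
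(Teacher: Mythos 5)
Your proposal is correct. Note, though, that the paper does not prove this statement at all: it is stated as a Fact and attributed to von Neumann's 1928 paper, then used as a black box in Appendix~\ref{sec:semi-finite-minimax} (the paper's actual work there is the discretization argument making the strategy and input spaces finite so that the Fact applies, not the Fact itself). Your LP-duality argument is the standard self-contained route and all the steps check out: the weak direction $\max_{a}\min_{b}K \le \min_{b}\max_{a}K$ is the trivial one (you have the roles of ``content'' and ``reverse'' swapped in your opening sentence, but the displayed chain of inequalities is the right one), the reduction of the inner optimizations to pure strategies via linearity on the simplex is standard, the primal LP is feasible and bounded so strong duality applies unconditionally, and the dual you write down is indeed the $\min_{y\in\calD(B)}\max_{a\in A}$ program. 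So your write-up would serve as a proof of the Fact if one wanted to avoid citing it; relative to the paper it is simply supplying a proof where the authors chose a citation.
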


The infiniteness of the sets stems from the arrival times being in the infinite set $\Int$. To solve this issue, we slightly modify our algorithm by discretizing $[0, 1]$. Let $N = n^3, \calT \defeq \{\frac{0}{N}, \frac{1}{N}, \frac{2}{N}, \ldots, \frac{N}{N} \}$ and $\Phi: [0, 1] \rightarrow \calT, \Phi(t) \defeq \lfloor N \cdot t \rfloor / N$ be the discretizing function. We modify our algorithm in the following way: apply $\Phi$ to the input distribution $\mathcal{B}$, as well as to every arrival time. Note that the elements are presented to the algorithm exactly as before, it just $\textit{pretends}$ they arrive in discrete time steps. We can assume $\Phi(t_e) \neq \Phi(t_g)$ for every $e \in U, g \in G, e \neq g$ (otherwise, we say the algorithm loses), since this happens with at most $n^2/N = o(1)$ probability. Using completely analoguous techniques as in \Cref{sec:ordinal_polylog} we can show this algorithm is $\Omega(1 / \log^2 n)$-competitive.

We note that a randomized algorithm is simply a distribution over deterministic algorithms. Hence our second goal is to argue that the number of distinct deterministic algorithms is finite (i.e., bounded by a function of $n$). To this end we have to specify how we represent them with at least some formality. We identify a deterministic algorithm $\calA$ with a function that gets evaluated each time a new element arrives; its parameter is the information history $\mathcal{K}(t)$ ($t$ being the current time) represented in any appropriate format; its output is $\{ \bot, \top \}$ representing whether to select the current element. For concreteness, the information history consists of $(t_e, \pi_e)$ for every element $e$ that arrived before the function call, where $t_e \in \calT$ is the discretized arrival time (after applying $\Phi$) and $\pi_e \in [0 \ldots n-1]$ is the relative value order of $e$ with respect to prior arrived elements. The number of distinct histories is bounded by $((N+1) n)^n$, a function of $n$; therefore the set of deterministic algorithms, i.e., functions from the history to $\{\bot, \top\}$, is also bounded.

We remember that an input distribution is simply a distribution over ``pure'' inputs. Note that the payoff of a deterministic algorithm for a specific input depends only on the following: $\Phi(t_r)$ for every red element; $\pi \in S_n$, the permutation representing the total order among the elements; and $\pi_t \in S_{|R|}$, the permutation denoting the order in which the red elements arrive (since red elements can have the same discretized arrival time, but an arbitrary order in which they are presented to the algorithm). The above discretization makes the number of pure inputs at most $(n!)^2 \cdot (N+1)^n$, i.e., bounded by a function of $n$. The reader can refresh their memory about the representation of pure inputs by reviewing the introduction to \Cref{sec:ordinal_polylog}.

Finally, for our discretized algorithm, we proved that the set of pure inputs with different payoffs, as well as the number of deterministic algorithms is bounded by a function of $n$. Therefore, for a fixed $n$, both numbers are finite. We invoke the Von Neumann's Minimax (Fact \ref{fact:von-neumann-minimax}) to conclude that the best result in the known distribution setting and worst-case input setting are equivalent, recovering the following theorem.

\thmUbdOrd*



{\small
\bibliographystyle{alpha}
\bibliography{../bib}

\newcommand{\etalchar}[1]{$^{#1}$}
\begin{thebibliography}{DKK{\etalchar{+}}18}

\bibitem[BIK07]{BIK-SODA07}
Moshe Babaioff, Nicole Immorlica, and Robert Kleinberg.
\newblock Matroids, secretary problems, and online mechanisms.
\newblock In {\em Proceedings of the Eighteenth Annual {ACM-SIAM} Symposium on
  Discrete Algorithms}, 2007.

\bibitem[BS95]{BlumS-JALG95}
Avrim Blum and Joel Spencer.
\newblock Coloring random and semi-random k-colorable graphs.
\newblock {\em J. Algorithms}, 19(2):204--234, 1995.

\bibitem[CDFS19]{CDFS-2018}
Jos{\'{e}}~R. Correa, Paul D{\"{u}}tting, Felix~A. Fischer, and Kevin Schewior.
\newblock Prophet inequalities for {I.I.D.} random variables from an unknown
  distribution.
\newblock In {\em Proceedings of the 2019 {ACM} Conference on Economics and
  Computation}, pages 3--17, 2019.

\bibitem[CGL14]{CGL-STOC14}
Ning Chen, Nick Gravin, and Pinyan Lu.
\newblock Optimal competitive auctions.
\newblock In {\em Proceedings of the forty-sixth annual ACM symposium on Theory
  of computing}, pages 253--262. ACM, 2014.

\bibitem[CMV13]{ChungMV13}
Kai{-}Min Chung, Michael Mitzenmacher, and Salil~P. Vadhan.
\newblock Why simple hash functions work: Exploiting the entropy in a data
  stream.
\newblock {\em Theory of Computing}, 9:897--945, 2013.

\bibitem[CS89]{CS}
Kenneth~L. Clarkson and Peter~W. Shor.
\newblock Applications of random sampling in computational geometry. {II}.
\newblock {\em Discrete Comput. Geom.}, 4(5):387--421, 1989.

\bibitem[CSV17]{CharikarSV17}
Moses Charikar, Jacob Steinhardt, and Gregory Valiant.
\newblock Learning from untrusted data.
\newblock In {\em Proceedings of the 49th Annual {ACM} {SIGACT} Symposium on
  Theory of Computing, {STOC} 2017, Montreal, QC, Canada, June 19-23, 2017},
  pages 47--60, 2017.

\bibitem[DH09]{DevanurHayes09}
Nikhil~R. Devanur and Thomas~P. Hayes.
\newblock The adwords problem: online keyword matching with budgeted bidders
  under random permutations.
\newblock In {\em ACM Conference on Electronic Commerce}, pages 71--78, 2009.

\bibitem[Dia18]{Dia-tut}
Ilias Diakonikolas.
\newblock Algorithmic high-dimensional robust statistics.
\newblock Webpage
  \url{http://www.iliasdiakonikolas.org/simons-tutorial-robust.html}, 2018.
\newblock Tutorial at Foundations of Data Science bootcamp.

\bibitem[DJSW11]{Devanur11}
Nikhil~R. Devanur, Kamal Jain, Balasubramanian Sivan, and Christopher~A.
  Wilkens.
\newblock Near optimal online algorithms and fast approximation algorithms for
  resource allocation problems.
\newblock In {\em ACM Conference on Electronic Commerce}, pages 29--38, 2011.

\bibitem[DKK{\etalchar{+}}16]{DiakonikolasKK016}
Ilias Diakonikolas, Gautam Kamath, Daniel~M. Kane, Jerry Li, Ankur Moitra, and
  Alistair Stewart.
\newblock Robust estimators in high dimensions without the computational
  intractability.
\newblock In {\em {IEEE} 57th Annual Symposium on Foundations of Computer
  Science, {FOCS} 2016, 9-11 October 2016, Hyatt Regency, New Brunswick, New
  Jersey, {USA}}, pages 655--664, 2016.

\bibitem[DKK{\etalchar{+}}18]{DiakonikolasKK018}
Ilias Diakonikolas, Gautam Kamath, Daniel~M. Kane, Jerry Li, Ankur Moitra, and
  Alistair Stewart.
\newblock Robustly learning a gaussian: Getting optimal error, efficiently.
\newblock In {\em Proceedings of the Twenty-Ninth Annual {ACM-SIAM} Symposium
  on Discrete Algorithms}, pages 2683--2702, 2018.

\bibitem[DKS18]{DiakonikolasKS18a}
Ilias Diakonikolas, Daniel~M. Kane, and Alistair Stewart.
\newblock Learning geometric concepts with nasty noise.
\newblock In {\em Proceedings of the 50th Annual {ACM} {SIGACT} Symposium on
  Theory of Computing, {STOC} 2018, Los Angeles, CA, USA, June 25-29, 2018},
  pages 1061--1073, 2018.

\bibitem[Dyn63]{Dynkin-Journal63}
Eugene~B Dynkin.
\newblock The optimum choice of the instant for stopping a markov process.
\newblock In {\em Soviet Math. Dokl}, volume~4, 1963.

\bibitem[EKM18]{EKM-TEAC18}
Hossein Esfandiari, Nitish Korula, and Vahab Mirrokni.
\newblock Allocation with traffic spikes: Mixing adversarial and stochastic
  models.
\newblock {\em ACM Transactions on Economics and Computation (TEAC)},
  6(3-4):14, 2018.

\bibitem[F{\etalchar{+}}89]{Ferguson-Journal89}
Thomas~S Ferguson et~al.
\newblock Who solved the secretary problem?
\newblock {\em Statistical science}, 4(3):282--289, 1989.

\bibitem[FK01]{FK-JCSS01}
Uriel Feige and Joe Kilian.
\newblock Heuristics for semirandom graph problems.
\newblock {\em Journal of Computer and System Sciences}, 63(4):639--671, 2001.

\bibitem[FSZ15]{FSZ-SODA15}
Moran Feldman, Ola Svensson, and Rico Zenklusen.
\newblock A simple \emph{O}(log log(rank))-competitive algorithm for the
  matroid secretary problem.
\newblock In {\em Proceedings of the Twenty-Sixth Annual {ACM-SIAM} Symposium
  on Discrete Algorithms, {SODA} 2015}, pages 1189--1201, 2015.

\bibitem[GGLS08]{GGLS-SODA08}
Naveen Garg, Anupam Gupta, Stefano Leonardi, and Piotr Sankowski.
\newblock Stochastic analyses for online combinatorial optimization problems.
\newblock In {\em ACM-SIAM symposium on Discrete algorithms}, pages 942--951,
  2008.

\bibitem[GHK{\etalchar{+}}14]{GHKSV-ICALP14}
Oliver G{\"o}bel, Martin Hoefer, Thomas Kesselheim, Thomas Schleiden, and
  Berthold V{\"o}cking.
\newblock Online independent set beyond the worst-case: Secretaries, prophets,
  and periods.
\newblock In {\em International Colloquium on Automata, Languages, and
  Programming}, pages 508--519, 2014.

\bibitem[GM08]{GM-SODA08}
Gagan Goel and Aranyak Mehta.
\newblock Online budgeted matching in random input models with applications to
  adwords.
\newblock In {\em Proceedings of the Nineteenth Annual ACM-SIAM Symposium on
  Discrete Algorithms}, 2008.

\bibitem[GM09]{guha2009stream}
Sudipto Guha and Andrew McGregor.
\newblock Stream order and order statistics: Quantile estimation in
  random-order streams.
\newblock {\em SIAM Journal on Computing}, 38(5):2044--2059, 2009.

\bibitem[GM16]{GM-MOR16}
Anupam Gupta and Marco Molinaro.
\newblock How the experts algorithm can help solve lps online.
\newblock {\em Math. Oper. Res.}, 41(4):1404--1431, 2016.

\bibitem[GS17]{GS-IPCO17}
Guru~Prashanth Guruganesh and Sahil Singla.
\newblock Online matroid intersection: Beating half for random arrival.
\newblock In {\em International Conference on Integer Programming and
  Combinatorial Optimization}, pages 241--253, 2017.

\bibitem[KKN15]{KKN-STOC15}
Thomas Kesselheim, Robert~D. Kleinberg, and Rad Niazadeh.
\newblock Secretary problems with non-uniform arrival order.
\newblock In {\em Proceedings of the Forty-Seventh Annual {ACM} on Symposium on
  Theory of Computing, Portland, OR, USA, June 14-17, 2015}, pages 879--888,
  2015.

\bibitem[Kle05]{Kleinberg-SODA05}
Robert~D. Kleinberg.
\newblock A multiple-choice secretary algorithm with applications to online
  auctions.
\newblock In {\em {ACM-SIAM} Symposium on Discrete Algorithms}, 2005.

\bibitem[KP09]{KorulaPal-ICALP09}
Nitish Korula and Martin P{\'a}l.
\newblock {Algorithms for secretary problems on graphs and hypergraphs}.
\newblock In {\em International Colloquium on Automata, Languages and
  Programming}, pages 508--520. Springer, 2009.

\bibitem[KRTV]{KRTV14}
Thomas Kesselheim, Klaus Radke, Andreas T{\"{o}}nnis, and Berthold
  V{\"{o}}cking.
\newblock {Primal beats dual on online packing LPs in the random-order model}.
\newblock In {\em Symposium on Theory of Computing, 2014}, pages 303--312.

\bibitem[KRTV13]{KRTV-ESA13}
Thomas Kesselheim, Klaus Radke, Andreas T{\"o}nnis, and Berthold V{\"o}cking.
\newblock An optimal online algorithm for weighted bipartite matching and
  extensions to combinatorial auctions.
\newblock In {\em European Symposium on Algorithms}, pages 589--600. Springer,
  2013.

\bibitem[Lac14]{Lachish-FOCS14}
Oded Lachish.
\newblock O(log log rank) competitive ratio for the matroid secretary problem.
\newblock In {\em 55th {IEEE} Annual Symposium on Foundations of Computer
  Science, Philadelphia, PA, USA, October 18-21}, pages 326--335, 2014.

\bibitem[Lan18]{lang2018online}
Harry Lang.
\newblock Online facility location against at-bounded adversary.
\newblock In {\em Proceedings of the Twenty-Ninth Annual ACM-SIAM Symposium on
  Discrete Algorithms}, pages 1002--1014. Society for Industrial and Applied
  Mathematics, 2018.

\bibitem[LMPL18]{LykourisML18}
Thodoris Lykouris, Vahab~S. Mirrokni, and Renato Paes~Leme.
\newblock Stochastic bandits robust to adversarial corruptions.
\newblock In {\em Proceedings of the 50th Annual {ACM} {SIGACT} Symposium on
  Theory of Computing, {STOC} 2018, June 25-29, 2018}, pages 114--122, 2018.

\bibitem[LRV16]{LaiRV16}
Kevin~A. Lai, Anup~B. Rao, and Santosh Vempala.
\newblock Agnostic estimation of mean and covariance.
\newblock In {\em {IEEE} 57th Annual Symposium on Foundations of Computer
  Science}, 2016.

\bibitem[Mey01]{Meyerson-FOCS01}
Adam Meyerson.
\newblock Online facility location.
\newblock In {\em Foundations of Computer Science, 2001. Proceedings. 42nd IEEE
  Symposium on}, pages 426--431. IEEE, 2001.

\bibitem[MGZ12]{MGZ-SODA12}
Vahab~S Mirrokni, Shayan~Oveis Gharan, and Morteza Zadimoghaddam.
\newblock Simultaneous approximations for adversarial and stochastic online
  budgeted allocation.
\newblock In {\em Proceedings of the twenty-third annual ACM-SIAM symposium on
  Discrete Algorithms}, pages 1690--1701, 2012.

\bibitem[MNS07]{MNS-EC07}
Mohammad Mahdian, Hamid Nazerzadeh, and Amin Saberi.
\newblock Allocating online advertisement space with unreliable estimates.
\newblock In {\em Proceedings of the 8th ACM conference on Electronic
  commerce}, pages 288--294. ACM, 2007.

\bibitem[Moi18]{Moitra18}
Ankur Moitra.
\newblock Robustness meets algorithms (invited talk).
\newblock In {\em 16th Scandinavian Symposium and Workshops on Algorithm
  Theory, {SWAT} 2018}, pages 3:1--3:1, 2018.

\bibitem[Mol17]{Molinaro-SODA17}
Marco Molinaro.
\newblock Online and random-order load balancing simultaneously.
\newblock In {\em Proceedings of the Twenty-Eighth Annual {ACM-SIAM} Symposium
  on Discrete Algorithms, {SODA} 2017, Barcelona, Spain, Hotel Porta Fira,
  January 16-19}, pages 1638--1650, 2017.

\bibitem[Neu28]{neumann1928theorie}
John~von Neumann.
\newblock Zur theorie der gesellschaftsspiele.
\newblock {\em Mathematische annalen}, 100(1):295--320, 1928.

\bibitem[Par70]{parthasarathy1970games}
T~Parthasarathy.
\newblock On games over the unit square.
\newblock {\em SIAM Journal on Applied Mathematics}, 19(2):473--476, 1970.

\bibitem[RS17]{RS-SODA17}
Aviad Rubinstein and Sahil Singla.
\newblock Combinatorial prophet inequalities.
\newblock In {\em Proceedings of the Twenty-Eighth Annual ACM-SIAM Symposium on
  Discrete Algorithms}, 2017.

\bibitem[Rub16]{Rubinstein-STOC16}
Aviad Rubinstein.
\newblock Beyond matroids: secretary problem and prophet inequality with
  general constraints.
\newblock In {\em Proceedings of the 48th Annual {ACM} {SIGACT} Symposium on
  Theory of Computing, {STOC} 2016, Cambridge, MA, USA, June 18-21, 2016},
  pages 324--332, 2016.

\bibitem[Rud76]{rudin1976principles}
Walter Rudin.
\newblock {\em Principles of mathematical analysis}, volume~3.
\newblock McGraw-hill New York, 1976.

\bibitem[Sei93]{Seidel}
Raimund Seidel.
\newblock Backwards analysis of randomized geometric algorithms.
\newblock In {\em New trends in discrete and computational geometry}, volume~10
  of {\em Algorithms Combin.}, pages 37--67. Springer, Berlin, 1993.

\end{thebibliography}
}

\end{document}